\begin{document}
\pagestyle{headings}

\title{\textbf{ Probabilistic Argumentation and Information Algebras of Probability Potentials on Families of Compatible Frames}}

\author{Juerg Kohlas \\
\small Department of Informatics DIUF \\ 
\small University of Fribourg \\ 
\small CH -- 1700 Fribourg (Switzerland) \\ 
\small E-mail: \texttt{juerg.kohlas@unifr.ch} \\
\small \texttt{http://diuf.unifr.ch/drupal/tns/juerg\_kohlas
}
}
\date{\today}

\maketitle


\begin{abstract}
Probabilistic argumentation is an alternative to causal modeling with Bayesian networks. Probabilistic argumentation structures (PAS) are defined on families of compatible frames (f.c.f). This is a generalization of the usual multivariate models based on families of variables. The crucial relation of conditional independence between frames of a f.c.f is introduced and shown to form a quasi-separoid, a weakening of the well-known structure of a separoid. It is shown that PAS generate probability potentials on the frames of the f.c.f. The operations of aggregating different PAS and of transport of a PAS from one frame to another induce an algebraic structure on the family of potentials on the f.c.f, an algebraic structure which is similar to valuation algebras related to Bayesian networks, but more general. As a consequence the well-known local computation architectures of Bayesian networks for inference apply also for the potentials on f.c.f. Conditioning and conditionals can be defined for potentials and it is shown that these concepts satisfy similar properties as conditional probability distributions. Finally a max/prod algebra between potentials is defined and applied to find most probable configurations for a factorization of potentials. 
\end{abstract}

\tableofcontents


\section{Introduction}

Probabilistic reasoning is usually associated with causal modellng and Bayesian networks. In essence, a multidimensional probability distribution is factorized into a product of prior probability distributions and conditional distributions. Information coming from observing certain events is combined with this distribution using Bayes theorem. Local computation architectures are used to compute efficiently marginal distributions of interest \cite{lauritzenspiegelhalter88}. This is based on an underlying algebraic structure called valuation algebras, which cover also many other uncertainty formalisms \cite{shenoyshafer90,kohlasshenoy00,kohlas03}. There is a vast literature on the subject of probabilistic or Bayesian networks. However, there is an alternative approach to probabilistic modeling which is much less known. Probabilistic argumentation is based on the idea that uncertain information depends on unknown assumptions, which however are more or less likely or probable. This concept is developed in this paper.

The mathematical structure underlying probabilistic argumentation, as understood in this paper, is introduced in Section \ref{sec:PAS}. It is explained how this structure can be used to evaluate hypotheses about unknown elements; and it is also shown how different probabilistic argumentation structures (PAS) can be combined or aggregated. This approach is extended in Section \ref{sec:FcF}: Probabilistic information may concern \textit{different}, but usually \textit{connected} or \textit{related} questions. This is captured by the concept of \textit{families of compatible frames} (f.c.f), a concept borrowed from \cite{shafer76} and adapted to the needs of the present theory. Families of compatible frames cover especially the case of lattices of partitions of an universe or the popular multivariate model, used exclusively for Bayesian networks, as special cases. The crucial notion is the one of conditional independence between frames of a f.c.f. This concept relative to f.c.f has already been discussed in \cite{kohlasmonney95}. Here it is shown that it induces a structure called \textit{quasi-separoid} (or q-spearoid), a weakening of the well-known structure of a separoid \cite{Dawid2001}, a mathematical framework for conditional independence. Q-separoids turn out to be basic for an algebraic structure associated with PAS.

The operation of combination of PAS is extended to the aggregation of PAS related to different frames of an f.c.f; and in addition a new operation of extraction of information from a PAS relative to a coarser frame is introduced (see Section \ref{sec:ReasPAS}). This gives rise to an algebraic structure of probability potentials associated with probabilistic argumentation structures (Sections \ref{subsec:AlgOfProbPot}, \ref{subsec:CommFcF} and \ref{subsec:AbsContPAS}). In fact, it is an algebraic structure embedded in some precise sense into an instance of an information algebra as introduced in \cite{kohlas17}. The surrounding information algebra, in which the algebra of probability potentials is embedded in essentially an algebra of belief functions or set potentials in the sense of Dempster-Shafer theory \cite{shafer76,kohlas03} In the special case that the f.c.f corresponds to a multivariate model, this algebra is identical to the well-known valuation algebra underlying probability propagation in Bayesian networks \cite{shenoyshafer90}. This important algebra has therefore a new, and much more direct and natural  interpretation than the usual one related to Bayesian networks. Not every combination of probability potentials represent a joint probability distribution as a product of conditional probability distributions as in causal modeling. That is, probability potentials represent therefore more general information. This algebraic structure is sufficient to extend the known local computation architectures from the multivariate case to the more general case of potentials on f.c.f. This is discussed in Section \ref{sec:LocComp}. 
 
Conditional probability distributions are usually defined with respect to sets of variables, that is in a multivariate model of variables. However, probability potentials may be defined relative to more general universes, like f.c.f, as shown in this paper. In Section \ref{sec:Cond} conditioning is studied in this more general framework. It is shown that conditionals may be defined in this general context too, and that they have similar properties as usual conditional distributions. This is an instance of more general conditionals defined relative to some abstract valuation algebras, namely regular valuation algebras, see \cite{kohlas03}.

Finally, in Section \ref{sec:MostProb},  the problem of finding the most probable configuration of a factorized potential is examined. It turns out that the corresponding max/prod algebra of probability potentials is exactly an information algebra on the f.c.f in the sense of \cite{kohlas17}. Thus, similar to \cite{shenoy91b,shenoy96}, architectures of dynamic programming combined with local computation are possible in our general framework based on families of compatible frames.

To conclude this introduction, we remark that probabilistic argumentation may be applied to more general concepts than probability potentials. In particular, we mention hints and belief functions, as examples where an interpretation as probability interpretation in our sense is possible. In fact, probabilistic argumentation structures as defined in this paper are special (namely precise) hints. Further, probabilistic argumentation may be used in the framework of logic \cite{kohlas03b}, especially propositional logic \cite{haennikohlas00,kohlasmoral96} and also for statistical inference, see \cite{monney00,km04,kohlasmonney07}. A more theoretical general analysis of probabilistic argumentation can be found in \cite{kohlas07rv}.


\section{Probabilistic Argumentation Structures} \label{sec:PAS}

To start, we define what we understand by a \textit{probabilistic argumentation structure}. Let $\Theta$ be a finite set, whose elements are thought of representing possible answers to some given question. We imagine that under some assumptions, the answer to this question is given or known. So, let $\Omega$ be a finite set, whose elements represent different assumptions. For any assumption $\omega \in \Omega$, the answer to the question  is given by $X(\omega) \in \Theta$. Now, the true assumption, the assumption which is actually valid, may be unknown. But a probability distribution $p(\omega)$ over $\Omega$ will describe the likelihood of the different possible assumptions. These elements together form what we call a probabilistic argumentation structure.

\begin{definition} \textit{Probabilistic Argumentation Structure (PAS):}
If $\Theta$ is a finite set, $(\Omega,p)$ a discrete probability space, that is, $\Omega$ a finite set and $p : \Omega \rightarrow [0,1]$ such that 
\begin{eqnarray*}
0 \leq p(\omega) \leq 1, \quad \sum_{\omega \in \Omega} p(\omega) = 1,
\end{eqnarray*}
and $X : \Omega \rightarrow \Theta$ a mapping from $\Omega$ into $\Theta$, then the quatrupel $(\Omega,p,X,\Theta)$ is called a probabilistic argumentation structure for $\Theta$. The elements of $\Omega$ are called assumptions.
\end{definition}

A PAS is a piece of information, which allows to judge and evaluate hypotheses about the unknown answer in $\Theta$. For instance, for an element $\theta \in \Theta$, we may ask what are the arguments that $\theta$ is the unknown answer to the question? Any assumption $\omega$ such that $X(\omega) = \theta$ is an argument in favour of $\theta$: If such an assumption happens to be true, then $\theta$ is the answer looked for. The set $s(\theta) = \{\omega \in \Omega:X(\omega) = \theta\}$ contains all arguments in favour of $\theta$, it is called the support set of $\theta$. The probability of this set
\begin{eqnarray*}
sp(\theta) = P(s(\theta)) = \sum_{\omega:X(\omega) = \theta} p(\omega)
\end{eqnarray*}
measures the degree of support for $\theta$ in the PAS $(\Omega,p,X,\Theta)$. We have obviously
\begin{eqnarray*}
0 \leq sp(\theta) \leq 1, \quad \sum_{\theta \in \Theta} sp(\theta) = 1.
\end{eqnarray*}
So, $sp(\theta)$ defines a probability distribution over $\Theta$, induced by the PAS $(\Omega,p,X,\Theta)$. Also, if $T$ is any subset of $\Theta$, then $s(T) = \{\omega \in \Omega:X(\omega) \in T\}$ contains all arguments in favour of an answer in $T$. And 
\begin{eqnarray*}
sp(T) = P(s(T)) = \sum_{\theta \in T} sp(\theta)
\end{eqnarray*}
is the degree of support of the set $T$.

This is the essence of probabilistic argumentation in the simple framework of a PAS. Now, there may be two or more PAS given relative to $\Theta$. How can these be combined? Let's consider two PAS $(\Omega_1,p_1,X_1,\Theta)$ and $(\Omega_2,p_2,X_2,\Theta)$. If $\omega_1$ is a possible assumption in the first PAS and $\omega_2$ a possible assumption in the second PAS, then the pair $(\omega_1,\omega_2)$ represents the joint assumption for the two PAS together. However, when $X_1(\Omega_1) \not= X_2(\omega_2)$, then the two individual assumptions contradict each other and can not jointly be true. So let
\begin{eqnarray*}
\Omega = \{(\omega_1,\omega_2) \inÊ\Omega_1 \times \Omega_2:X_1(\omega_1) = X_2(\omega_2)\}
\end{eqnarray*}
denote the jointly possible or consistent assumptions of the two PAS. Can we assign probabilities to the elements of $\Omega$ using the probability distribution $p_1$ and $p_2$ of the two original PAS? We may consider any probability distribution $p$ in $\Omega_1 \times \Omega_2$ such that the marginal distributions relative to $\Omega_1$ and $\Omega_2$ are exactly $p_1$ and $p_2$,
\begin{eqnarray*}
\sum_{\omega_2 \inÊ\Omega_2} p(\omega_1,\omega_2) = p_1(\omega_1), \quad
\sum_{\omega_1 \inÊ\Omega_1} p(\omega_1,\omega_2) = p_2(\omega_2).
\end{eqnarray*}
There are of course many probability distributions $p$ satisfying this consistency requirement. The simplest case arises, if we assume that the assumptions in the two PAS are a priori \textit{stochastically independent} such that
\begin{eqnarray*}
p(\omega_1,\omega_2) = p_1(\omega_1) \cdot p_2(\omega_2).
\end{eqnarray*}
It is a question of modeling to determine $p$. Subsequently, we shall adopt the independence assumption. For different alternative modelling approaches we refer to \cite{kohlas03,haennikohlas00,poulykohlas11}. In Section \ref{subsec:AbsContPAS} a useful model will help to justify, why this independence assumption is often justified. 

Now, a posteriori, if we accept that only elements of $\Omega$ are jointly possible assumptions, we must condition $p$ on the event $\Omega$ such that we obtain
\begin{eqnarray*}
p'(\omega_1,\omega_2) = k^{-1} \cdot p(\omega_1,\omega_2), \textrm{ where}\ k = \sum_{(\omega_1,\omega_2) \in \Omega} p(\omega_1,\omega_2).
\end{eqnarray*}
Finally, define $X(\omega_1,\omega_2) = X_1(\omega_1) = X_2(\omega_2)$ on $\Omega$. Then, the combined PAS, obtained from the two original PAS is defined as $(\Omega,p,X,\Theta)$. So, this is a possible approach to aggregate individual PAS into a new combined PAS, leading to a kind of algebra of PAS. This point of view will subsequently be reconsidered and worked out in a more general framework.


\section{Family of Compatible Frames} \label{sec:FcF}

\subsection{Compatible Frames} \label{subsec:CompFrames}

We extend now our discussion by considering not only a fixed frame $\Theta$, but a whole family of interrelated frames and PAS relative to the frames of this family. The proper context for this is the concept of \textit{families of compatible frames} (f.c.f) as introduced in \cite{shafer76}. The basic idea is that a frame $\Theta$ (a finite set) may be refined by splitting the elements of $\Theta$ into several (finer) elements, which form a new frame, say $\Lambda$. This concept is mathematically seized by a map $\tau : \Theta \rightarrow 2^\Lambda$ (where $2^\Lambda$ denotes the power set of $\Lambda$) which assigns to each element of $\Theta$ a subset $\tau(\theta) \subseteq \Lambda$ such that
\begin{enumerate}
\item $\tau(\theta) \not= \emptyset$ for all $\theta \in \Theta$,
\item $\tau(\theta') \cap \tau(\theta'') =Ê\emptyset$ for $\theta' \not= \theta''$,
\item $\cup_{\theta \in \Theta} \tau(\theta) = \Lambda$.
\end{enumerate}
Such a map $\tau$ is called a \textit{refining} of $\Theta$, the set $\Lambda$ a \textit{refinement} of $\Theta$ and the latter a \textit{coarsening} of the former. Note that the sets $\tau(\theta)$ form a \textit{partition} of $\Lambda$. A refining can be extended to a map of sets,
\begin{eqnarray*}
\tau(S) = \cup_{\theta \in S} \tau(\theta)
\end{eqnarray*}
for any subset $S$ of $\Theta$.

To a refining $\tau : \Theta \rightarrow 2^\Lambda$ we assign a map $v : 2^\Lambda \rightarrow 2^\Theta$  defined by
\begin{eqnarray} \label{eq:DefOfSat}
v(S) = \{\theta \in \Theta:\tau(\theta) \cap S \not= \emptyset\},
\end{eqnarray}
defined for any subset $S$ of $\Lambda$. This is called a \textit{saturation map}. We may interpret frames $\Theta$ as sets of possible answers to a question. In a refinement $\Lambda$ of $\Theta$ each possible answer $\theta$ to the first question represented by $\Theta$ is split into a set of finer possible answers to a finer question represented by $\Lambda$. Conversely, sets $\tau(\theta)$ of possible answers in $\Lambda$ are collected into a coarser answer $\theta$, a possible answer of the coarser question represented by $\Theta$. Then the elements of the set $\tau(\theta)$ are all possible answers in $\Lambda$ compatible with the $\theta$ in $\Theta$ and $v(\{\lambda\})$ represents all possible answers in $\Theta$ compatible with the element $\lambda$ in $\Lambda$. Or, more generally, $v(S)$ contains all possible answers in $\Theta$ compatible with some element in subset $S$ of $\Lambda$. This point of view will be worked out below in a more general way.

The elements of a family of questions to be considered must be related to each other in some way. This is captured by the concept of a \textit{family of compatible frames}.

\begin{definition} \label{def:fcf} \textit{Family of Compatible Frames:} 
A pair $(\mathcal{F},\mathcal{R})$ of frames and refinings $\mathcal{R}$ between frames of $\mathcal{F}$ is called a family of compatible frames (f.c.f) provided the following conditions are satisfied:
\begin{enumerate}
\item \textit{Composition of Refinings:} If, for $\Theta_1,\Theta_2,\Theta_3 \in \mathcal{F}$, $\tau_{1} :\Theta_{1} \rightarrow 2^{\Theta_2}$ and $\tau_{2} : \Theta_{2} \rightarrow 2^{\Theta_3}$ belong to $\mathcal{R}$, then $\tau_{2} \circ \tau_{1} \in \mathcal{R}$.
\item \textit{Identity:} If $\Theta \in \mathcal{F}$, then the identity map $id :  \Theta \rightarrow  2^\Theta$, defined by $id(\theta) = \{\theta\}$, belongs to $\mathcal{R}$.
\item \textit{Identity of Refinings:} If $\tau_{1} :  \Theta \rightarrow  2^\Lambda$ and $\tau_{2} :  \Theta \rightarrow  2^\Lambda$ are elements of $\mathcal{R}$, then $\tau_{1} = \tau_{2}$.
\item \textit{Identity of Coarsenings:} If $\tau_{1} : \Theta_{1} \rightarrow 2^\Lambda$ and $\tau_{2} : \Theta_{2} \rightarrow 2^\Lambda$ belong to $\mathcal{R}$ and if for each $\theta_{2} \in \Theta_{2}$ there exists a $\theta_{1} \in \Theta_{1}$ and for each $\theta_{1} \in \Theta_{1}$ there exists a $\theta_{2} \in \Theta_{2}$ such that $\tau_{1}(\theta_{1}) = \tau_{2}(\theta_{2})$, then $\Theta_{1} = \Theta_{2}$.
\item \textit{Existence of Minimal Common Refinement:} For any finite family $\Theta_{1},\ldots,\Theta_{n}$ of frames in $\mathcal{F}$, there exists a common refinement $\Lambda \in \mathcal{F}$ such that if $\Lambda' \in \mathcal{F}$ is another common refinement of $\Theta_{1},\ldots,\Theta_{n}$, then $\Lambda'$ is also a refinement of $\Lambda$, and, if $\tau_i$ are the refinings of $\Theta_i$ to $\Lambda$, then for every $\lambda \in \Lambda$, there exist elements $\theta_i \in \Theta_i$ such that
\begin{eqnarray} \label{eq:SuffQSep}
\tau_1(\theta_1) \cap \ldots \cap \tau_n(\theta_n) = \{\lambda\}.
\end{eqnarray}
\end{enumerate}
\end{definition}

Note that in (\ref{eq:SuffQSep}) the representation of $\{\lambda\}$ by the elements $\theta_1$ to $\theta_n$ is \textit{unique}, since $\theta_i \not= \theta'_i$ implies $\tau_i(\theta_i) \cap \tau_i(\theta'_i) = \emptyset$.

If we define $\Theta \leq \Lambda$ for frames in $\mathcal{F}$ if there is a refining from $\Theta$ to $\Lambda$, that is, if $\Lambda$ is a refinement of $\Theta$, then $(\mathcal{F};\leq)$ becomes a join-semilattice, where $\Theta \vee \Lambda$ is the minimal common refinement of $\Theta$ and $\Lambda$. We may always add an absolutely coarsest frame $\mathcal{E} = \{e\}$ such that for any $\Theta \in \mathcal{F}$ there is a refining $\tau(e) = \Theta$ from $\mathcal{E}$ to $\Theta$ The augmented system $\mathcal{F} \cup \{\mathcal{E}\}$ is still a f.c.f. and $\mathcal{E}$ is the bottom element $\mathcal{E} \leq \Theta$ for all frames. 

If $\Theta$ and $\Lambda$ are two frames of a f.c.f, then there are refinings $\tau_1$ and $\tau_2$ from $\Theta$ and $\Lambda$ to the minimal common refinement $\Theta \vee \Lambda$. A pair of elements $\theta$ and $\lambda$ of frames $\Theta$ and $\Lambda$ are called compatible, if $\tau_1(\theta) \cap \tau_2(\lambda) \not = \emptyset$. These elements represent jointly possible answers relative to two questions. For any subset $S$ of $\Theta$,  
\begin{eqnarray} \label{eq:SetTrOp}
t_\Lambda(S) = \{\lambda \in \Lambda:\tau_1(\theta) \cap \tau_2(\lambda) \not = \emptyset \textrm{ for some}\ \theta \in S\}
\end{eqnarray}
is the subset of elements of $\Lambda$ which are compatible with some element of $S$. Note that if $\Lambda \leq \Theta$, then $t_\Lambda$ is identical to the saturation map defined in (\ref{eq:DefOfSat}). The map $t_\Lambda$ between the power sets of $\Theta$ and $\Lambda$ is also called a transport operator; see Section \ref{subsec:AlgOfProbPot} for more on this subject.

The two most important examples of f.c.f are join-semilattices of partitions and multivariate models. These two instances of a f.c.f shall be presented briefly here:

\begin{example} \textit{Join-Semilattices of Partitions:}

Let $U$ be any set, called the universe, representing a set of possible worlds. In this frame, questions can be modelled by equivalence relations $\equiv$ on $U$, the idea being that we have $u \equiv u'$ if the question has the same answer in the worlds $u$ and $u'$ respectively. The equivalence classes of such an equivalence relation form a \textit{partition} $P$ of $U$, The equivalence classes are the blocks of the partition. Therefore we may consider any block of the partition as a possible answer to the question. In this perspective, we may consider the set $\Theta_P$ of blocks of $P$ whose elements represent possible answers.

Now we consider a family of questions $D$. Any question $x \in D$ is thought to be described either by an equivalence relation $\equiv_x$ or equivalently by the associated partition $P_x$. A question $x$ will be considered as finer as a question $y$, if $u \equiv_x u'$ implies $u \equiv_y u'$, or equivalently, if every block of $P_x$ is contained in a (unique) block of $P_y$. We then write $P_y \leq P_x$. This defines a partial order between partitions of $U$, it is in fact the opposite order of $(part(U),\leq)^\vartheta$ usually considered in the literature \cite{graetzer78}. Under this (opposite) order, the \textit{join} or \textit{supremum} of two partitions $P_1$ and $P_2$, written as $P_1 \vee P_2$, is given by the partition whose blocks are the nonempty intersections $B_1 \cap B_2$ of blocks $B_1$ of $P_1$ and $B_2$ of $P_2$. The set of all partitions together with this (or the opposite) order, $(part(U),\leq)$, is in fact a lattice \cite{graetzer78}. That is, there exist also a meet or infimum between any finite family of partitions.

We now assume that the family of partitions $P_x$ for $x \in D$ forms a (sub)-join-semilattice of $(part(U),\leq)$. Let $\mathcal{F}$ be the family of frames $\Theta_{P_x}$ for $x \in D$. If $P_y \leq P_x$, then consider the map $\tau_{y,x} : \Theta_{P_y} \rightarrow 2^{\Theta_{P_y}}$ defined by
\begin{eqnarray*}
\tau_{y,x}(B_y) = \{B_x:B_x \subseteq B_y\}
\end{eqnarray*}
if $B_x$ and $B_y$ denote blocks of partitions $P_x$ and $P_y$ respectively. Obviously $\tau_{y,x}$ is a refining of $\Theta_{P_y}$, $\Theta_{P_x}$ a refinement of $\Theta_{P_x}$ and $\Theta_{P_y}$ a coarsening of $\Theta_{P_x}$. It can easily be verified that the family $(\mathcal{F},\mathcal{R})$, where $\mathcal{R}$ is the set of all maps $\tau_{y,x}$ for partitions $P_y \leq P_x$, is a family of compatible frames. In particular, the lattice of all partitions of $U$, $(part(U),\leq)$ induces in this way a f.c.f. Note that the top partition of $U$ into singleton sets $\{u\}$ for $u \in U$ is a refining for for any frame $\Theta_P$. Similarly, the bottom partition, consisting of the single block $U$, is a coarsening of all frames $\Theta_P$.
\end{example}

\begin{example} \textit{Multivariate Models:}
This very popular model considers a countable set of variables $X_i$ for $i = 1,2 \ldots$, where each variable $X_i$ has a domain of possible values $\Theta_i$, which we assume here to be finite. Then subsets of variables $X_i$ with $i \in s \subseteq \{1,2,\ldots\}$ are considered, whose domains are given by
\begin{eqnarray*}
\Theta_s = \prod_{i \in s} \Theta_i.
\end{eqnarray*}
Often only finite subsets $s$ of variables are considered. These frames or domains $\Theta_i$ can also be seen as partitions of the universe
\begin{eqnarray*}
U = \prod_i \Theta_i.
\end{eqnarray*}
The refining maps $\tau_{s.t}$, where $s \subseteq t$ is given by the inverse of the projection $\pi(x_t) = x_t \vert s$, where $x_t$ denotes a tuple from $\Theta_t$ and $x_t \vert s$ the restriction of this tuple to the subset $s$. So, the family of domains $\Theta_s$ for $s \subseteq \{1,2,\ldots\}$, together with the refining maps from $\Theta_s$ to $\Theta_t$, where $s \subseteq t$, form a f.c.f.  In this case we have $\Theta_s \leq \Theta_t$ in the order between frames if and only if $s \subseteq t$. In this model the partial order $(\mathcal{F},\leq)$ defines a \textit{distributive lattice}, which is isomorphic to a subset lattice of the set of variables.. Such multivariate models arise for instance in probabilistic argumentation based on propositional logic, where the variables are binary \cite{haennikohlas00}. Another important case of multivariate models uses continuous real-valued variables, where $\Theta_s$ becomes $\mathbb{R}^s$, see Section \ref{subsec:AbsContPAS} below.
\end{example}

The concept of a f.c.f has been introduced in \cite{shafer76} in a similar way. In \cite{shafer76} additional conditions are required, in particular, that any frame has refinings in the family, excluding thus an ultimate refining. This eliminates f.c.fs related to lattices of partitions. On the other hand, a f.c.f does not need to include an ultimate refining, it is thus slightly more general than the f.c.f obtained from the join-subsemilattices of partitions. For another discussion of f.c.f see \cite{cuzzolin05}.

\subsection{Conditional Independence} \label{subsec:CondInd}

An important concept in a f.c.f is the one of \textit{conditional independence} between frames. Consider a finite collection of frames $\Theta_i$, $i = 1,\ldots,n$, from a f.c.f $(\mathcal{F},\mathcal{R})$, and let $\tau_i$ denote the refinings from $\Theta_i$ to the join (or common minimal refinement) $\Theta_1 \vee \ldots \vee \Theta_n$. What are the mutually compatible elements from these $n$ frames? They are collected in the following set of tuples 
\begin{eqnarray*}
R(\Theta_1,\ldots,\Theta_n) = \{(\theta_1,\ldots,\theta_n):\theta_i \in \Theta_i,\cap_{i=1}^n \tau_i(\theta_i) \not= \emptyset\}.
\end{eqnarray*}
Note that $\cap_{i=1}^n \tau_i(\theta_i) \not= \emptyset$ implies $\cap_{i=1}^n \tau_i(\theta_i) = \{\theta\}$ for some elment $\theta$ of the minimal common refinement $\Theta_1 \vee \ldots \vee \Theta_n$. The frames $\Theta_1$ to $\Theta_n$ are called mutually \textit{independent}, if
\begin{eqnarray*}
R(\Theta_1,\ldots,\Theta_n) = \Theta_1 \times \cdots \times \Theta_n.
\end{eqnarray*}
Fix an element $\lambda$ in some other frame $\Lambda$. The elements of $\Theta_i$ which are compatible among themselves as well as with $\lambda$ are given by
\begin{eqnarray*}
R_\lambda(\Theta_1,\ldots,\Theta_n) =  \{(\theta_1,\ldots,\theta_n):((\theta_1,\ldots,\theta_n,\lambda) \in R(\Theta_1,\ldots,\Theta_n,\Lambda)\}.
\end{eqnarray*}
Note that here $\Lambda$ is not necessarily different from every $\Theta_i$. The collection of frames $\Theta_1,\ldots,\Theta_n$ is called \textit{conditionally independent} given $\Lambda$, if for all $\lambda \in \Lambda$ we have
\begin{eqnarray*}
R_\lambda(\Theta_1,\ldots,\Theta_n) = R_\lambda(\Theta_1) \times \cdots \times R_\lambda(\Theta_n).
\end{eqnarray*}
Then we write $\bot \{\Theta_1,\ldots,\Theta_n\} \vert \Lambda$ or, for $n = 2$ also $\Theta_1 \bot \Theta_2 \vert \Lambda$. Conditional independence means that once $\lambda$ is given (as an answer to $\Lambda$), then knowing any $\theta_i$ (as an answer to $\Theta_i$ compatible with $\lambda$) does not restrict the possible $\theta_j$ (an answers to $\Theta_j$) if $i \not= j$. This relation has been studied in \cite{kohlasmonney95,kohlas17}. It has been shown there that it satisfies the properties given in the following theorem.

\begin{theorem} \label{th:Q-Separoid}
The relation $\Theta_1 \bot \Theta_2 \vert \Lambda$ in a f.c.f $(\mathcal{F},\mathcal{R})$ satisfies
\begin{description}
\item [C1] $\Theta \bot \Lambda \vert \Lambda$ for all $\Theta,\Lambda \in \mathcal{F}$,
\item [C2] $\Theta_1 \bot \Theta_2 \vert \Lambda$ implies $\Theta_2 \bot \Theta_1 \vert \Lambda$,
\item [C3] $\Theta_1 \bot \Theta_2 \vert \Lambda$ and $\Theta \leq \Theta_2$ imply $\Theta_1 \bot \Theta \vert \Lambda$,
\item [C4] $\Theta_1 \bot \Theta_2 \vert \Lambda$ implies $\Theta_1 \bot \Theta_2 \vee \Lambda \vert \Lambda$.
\end{description}
\end{theorem}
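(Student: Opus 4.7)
The plan is to unfold the definition of $R_\lambda(\cdots)$ in each case into a (possibly iterated) intersection of refining-images of the form $\tau(\theta_i)$ inside a suitable minimal common refinement, and then to apply two structural features of a f.c.f: the partition property of refinings (so that $\tau(\theta) \cap \tau(\theta') = \emptyset$ whenever $\theta \neq \theta'$), and the unique-representation identity (\ref{eq:SuffQSep}) (so that every element of a join is described by a unique compatible tuple). Properties C1 and C2 fall out immediately; C3 reduces to the hypothesis via the refining $\sigma: \Theta \to 2^{\Theta_2}$ witnessing $\Theta \leq \Theta_2$; and C4 is the delicate case, requiring an explicit identification of $\Theta_2 \vee \Lambda$ with compatible pairs.

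For C1, I first observe that $\tau_\Lambda(\lambda') \cap \tau_\Lambda(\lambda) = \emptyset$ whenever $\lambda' \neq \lambda$, so membership $(\theta, \lambda') \in R_\lambda(\Theta, \Lambda)$ forces $\lambda' = \lambda$; both $R_\lambda(\Theta, \Lambda)$ and $R_\lambda(\Theta) \times R_\lambda(\Lambda) = R_\lambda(\Theta) \times \{\lambda\}$ then reduce to the same set. C2 is immediate from the evident symmetry of the defining equality in $\Theta_1$ and $\Theta_2$.

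For C3, I work inside the common refinement $\Theta_1 \vee \Theta_2 \vee \Lambda$, which (because $\Theta \leq \Theta_2$) also refines $\Theta$, and express the refining-image of $\theta \in \Theta$ as $\tau(\theta) = \bigcup_{\theta_2 \in \sigma(\theta)} \tau_2(\theta_2)$ by composition of refinings. Then $(\theta_1, \theta) \in R_\lambda(\Theta_1, \Theta)$ if and only if there is some $\theta_2 \in \sigma(\theta)$ with $(\theta_1, \theta_2) \in R_\lambda(\Theta_1, \Theta_2)$. The hypothesis factorizes this as $\theta_1 \in R_\lambda(\Theta_1)$ together with $\sigma(\theta) \cap R_\lambda(\Theta_2) \neq \emptyset$, and the latter condition is exactly $\theta \in R_\lambda(\Theta)$.

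For C4, I would invoke (\ref{eq:SuffQSep}) to identify each $\mu \in \Theta_2 \vee \Lambda$ with the unique compatible pair $(\theta_2(\mu), \lambda'(\mu)) \in \Theta_2 \times \Lambda$ satisfying $\tau_2(\theta_2(\mu)) \cap \tau_\Lambda(\lambda'(\mu)) = \{\mu\}$ inside $\Theta_2 \vee \Lambda$. A triple compatibility $\tau_1(\theta_1) \cap \tau_{23}(\mu) \cap \tau_\Lambda(\lambda) \neq \emptyset$ in $\Theta_1 \vee \Theta_2 \vee \Lambda$ then unfolds to $\tau_1(\theta_1) \cap \tau_2(\theta_2(\mu)) \cap \tau_\Lambda(\lambda'(\mu)) \cap \tau_\Lambda(\lambda) \neq \emptyset$, which by the partition property forces $\lambda'(\mu) = \lambda$ and reduces the remainder to $(\theta_1, \theta_2(\mu)) \in R_\lambda(\Theta_1, \Theta_2)$. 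Applying the hypothesis splits this further as $\theta_1 \in R_\lambda(\Theta_1)$ and $\theta_2(\mu) \in R_\lambda(\Theta_2)$, and recognizing $R_\lambda(\Theta_2 \vee \Lambda)$ as precisely the image of $\lambda$ under the refining $\Lambda \to 2^{\Theta_2 \vee \Lambda}$, i.e., the set of $\mu$ with $\lambda'(\mu) = \lambda$ and $\theta_2(\mu) \in R_\lambda(\Theta_2)$, yields the desired product decomposition. The main obstacle is the notational bookkeeping: $\Lambda$ plays two simultaneous roles (as a summand of the join $\Theta_2 \vee \Lambda$ and as the conditioning frame), so its two refining-images of $\lambda$ (into $\Theta_2 \vee \Lambda$ and further into $\Theta_1 \vee \Theta_2 \vee \Lambda$) must be tracked carefully to avoid a spurious double-counting.
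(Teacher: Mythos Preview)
Your argument is correct. The paper itself does not give a proof of this theorem; it simply states that the relation ``has been studied in \cite{kohlasmonney95,kohlas17}'' and that the properties are shown there. So there is no in-paper proof to compare against, and your self-contained verification is a genuine addition.

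A few remarks on the details. Your treatment of C1 and C2 is fine. For C3 you implicitly use that compatibility of a tuple can be tested in \emph{any} common refinement, not only in the minimal one $\Theta_1 \vee \Theta \vee \Lambda$; this is true because refinings preserve and reflect nonemptiness of intersections, but it is worth stating once. For C4 your unfolding $\tau_{23}(\mu) = \tau_2(\theta_2(\mu)) \cap \tau_\Lambda(\lambda'(\mu))$ relies on the fact that a refining $\rho$ satisfies $\rho(A \cap B) = \rho(A) \cap \rho(B)$ for subsets $A,B$ of its domain (a consequence of the partition property), which again is routine but deserves a sentence. With those two observations made explicit, the proof is complete and the ``notational bookkeeping'' you flag as an obstacle is in fact already handled correctly by your identification $R_\lambda(\Theta_2 \vee \Lambda) = \{\mu : \lambda'(\mu) = \lambda,\ \theta_2(\mu) \in R_\lambda(\Theta_2)\}$.
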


If $(\mathcal{F};\leq)$ is a \textit{distributive lattice}, as for instance in the case of a multivariate model, then a few more properties hold for the relation of conditional independence \cite{kohlas17}.

\begin{theorem} \label{th:Q-Separoid}
If $(\mathcal{F};\leq)$ is a distributive lattice, the relation $\Theta_1 \bot \Theta_2 \vert \Lambda$ in a f.c.f $(\mathcal{F},\mathcal{R})$ satisfies
\begin{description}
\item [C5] $\Theta_1 \bot \Theta_2 \vert \Lambda$ and $\Theta \leq \Theta_2$ imply $\Theta_1 \bot \Theta_2 \vert \Lambda \vee \Theta$,
\item [C6] $\Theta_1 \bot \Theta_2 \vert \Lambda$ and $\Theta_1 \bot \Theta \vert \Theta_2 \vee \Lambda$ imply $\Theta_1 \bot \Theta_2 \vee \Theta \vert \Lambda$,
\item [C7] If  $\Lambda \leq \Theta_2$ and $\Theta \leq \Theta_2$, then $\Theta_1 \bot \Theta_2 \vert \Lambda$ and $\Theta_1 \bot \Theta_2 \vert \Theta$ imply $\Theta_1 \bot \Theta_2 \vert \Lambda \wedge \Theta$.
\end{description}
\end{theorem}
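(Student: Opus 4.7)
The plan is to reduce the three claims to pure lattice algebra by first establishing, in a distributive f.c.f, the characterization
\begin{eqnarray*}
\Theta_1 \bot \Theta_2 \vert \Lambda \quad\Longleftrightarrow\quad \Theta_1 \wedge \Theta_2 \leq \Lambda.
\end{eqnarray*}
The motivation comes from the multivariate example: $\Theta_{s_1} \bot \Theta_{s_2} \vert \Theta_{s_\Lambda}$ should hold exactly when $s_1 \cap s_2 \subseteq s_\Lambda$, which is just $\Theta_{s_1} \wedge \Theta_{s_2} \leq \Theta_{s_\Lambda}$. To obtain the general characterization I would unfold both sides in the minimal common refinement $\Theta_1 \vee \Theta_2 \vee \Lambda$, using item 5 of Definition \ref{def:fcf} to represent its elements uniquely as intersections $\tau_1(\theta_1) \cap \tau_2(\theta_2) \cap \tau_\Lambda(\lambda)$, and translate the factorization $R_\lambda(\Theta_1,\Theta_2) = R_\lambda(\Theta_1) \times R_\lambda(\Theta_2)$ into the coarsening statement $\Theta_1 \wedge \Theta_2 \leq \Lambda$.

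Once this characterization is in hand, C5, C6, C7 become short manipulations in the lattice. For C5, $\Theta_1 \wedge \Theta_2 \leq \Lambda \leq \Lambda \vee \Theta$ immediately. For C6, from $\Theta_1 \wedge \Theta \leq \Theta_2 \vee \Lambda$ distributivity gives
\begin{eqnarray*}
\Theta_1 \wedge \Theta = \Theta_1 \wedge \Theta \wedge (\Theta_2 \vee \Lambda) = (\Theta_1 \wedge \Theta \wedge \Theta_2) \vee (\Theta_1 \wedge \Theta \wedge \Lambda),
\end{eqnarray*}
and both joinands are bounded above by $\Lambda$ (the first using $\Theta_1 \wedge \Theta_2 \leq \Lambda$), so $\Theta_1 \wedge \Theta \leq \Lambda$; then another application of distributivity yields $\Theta_1 \wedge (\Theta_2 \vee \Theta) = (\Theta_1 \wedge \Theta_2) \vee (\Theta_1 \wedge \Theta) \leq \Lambda$. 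For C7, the two hypotheses supply $\Theta_1 \wedge \Theta_2 \leq \Lambda$ and $\Theta_1 \wedge \Theta_2 \leq \Theta$, so $\Theta_1 \wedge \Theta_2 \leq \Lambda \wedge \Theta$ by the universal property of the meet.

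The main obstacle will be the direction from the lattice inequality back to the independence statement in the characterization. Given $\Theta_1 \wedge \Theta_2 \leq \Lambda$, fix $\lambda \in \Lambda$ and compatible witnesses $\theta_1 \in R_\lambda(\Theta_1)$ and $\theta_2 \in R_\lambda(\Theta_2)$; I must show that $\tau_1(\theta_1) \cap \tau_2(\theta_2) \cap \tau_\Lambda(\lambda)$ is nonempty in $\Theta_1 \vee \Theta_2 \vee \Lambda$, i.e.\ that the two pairwise-compatible pieces \emph{glue}. This is where distributivity of $(\mathcal{F};\leq)$ enters essentially, together with the uniqueness of the representation in the minimal common refinement. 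If this gluing argument turns out to be subtle in the full generality of distributive f.c.f, the fallback plan is to prove C5, C6, C7 directly by the same element-chasing on the common refinement used for C1--C4 in the preceding theorem, invoking distributivity only at the key steps highlighted above.
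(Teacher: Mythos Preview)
Your Plan A rests on the biconditional
\[
\Theta_1 \bot \Theta_2 \mid \Lambda \quad\Longleftrightarrow\quad \Theta_1 \wedge \Theta_2 \leq \Lambda,
\]
but this is \emph{false} for a general distributive f.c.f. The forward direction is fine (from C2, C3 you get $(\Theta_1\wedge\Theta_2)\bot(\Theta_1\wedge\Theta_2)\mid\Lambda$, and $\Theta\bot\Theta\mid\Lambda$ forces $\Theta\leq\Lambda$). The backward direction, however, would give in particular $\Theta_1\bot\Theta_2\mid\Theta_1\wedge\Theta_2$ for every pair of frames, and that is precisely the \emph{commutativity} condition the paper singles out in Section~\ref{subsec:CommFcF} as an \emph{extra} hypothesis, not a consequence of distributivity. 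A concrete counterexample: on $U=\{1,\ldots,6\}$ take the partitions $\Theta=\{\,\{1,2\},\{3,4\},\{5,6\}\,\}$ and $\Lambda=\{\,\{1,4\},\{2,5\},\{3,6\}\,\}$. Together with the trivial partition $\mathcal{E}$ and the singleton partition $\Theta\vee\Lambda$ they form a four-element distributive lattice of frames; here $\Theta\wedge\Lambda=\mathcal{E}$, yet $\Theta\not\bot\Lambda\mid\mathcal{E}$ because, e.g., the blocks $\{1,2\}$ and $\{3,6\}$ are disjoint in $\Theta\vee\Lambda$. So the ``gluing'' step you flagged is not merely subtle --- it is impossible in general, and the lattice-algebra reduction collapses.

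Your fallback Plan B is the correct route and is what the paper relies on: the proof is delegated to \cite{kohlas17}, where C5--C7 are established by direct element-chasing in the minimal common refinement, in the same spirit as Lemmas~\ref{CondIndepRes1}--\ref{CondIndepRes3} here. Distributivity enters not as a blanket characterization of $\bot$, but at specific points of the argument (roughly where your sketch of C6 invokes it). So drop Plan A entirely and carry out Plan B in full.
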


A three-place relation, which satisfies conditions C1 up to C7 is called a \textit{strong separoid} \cite{Dawid2001}. Therefore, we call a relation which satisfies only C1 up to C4 a \textit{quasi-separoid} (q-separoid). Note  that C4 is a consequence of C5 and C6, see \cite{kohlas17}

For the relation $\bot \{\Theta_1,\ldots,\Theta_n\} \vert \Lambda$ Theorem \ref{th:Q-Separoid} extends as follows:

\begin{theorem} \label{th:ExtQSeparoid}
Assume $\bot \{\Theta_1,\ldots,\Theta_n\} \vert \Lambda$. Then
\begin{enumerate}
\item If $\sigma$ is a permutation of $\{1,\ldots,n\}$, then $\bot \{\Theta_{\sigma(1)},\ldots,\Theta_{\sigma(n)}\} \vert \Lambda$,
\item if $J \subseteq \{1,\ldots,n\}$, then $\bot \{\Theta_j:j \in J\} \vert \Lambda$,
\item if $\Theta \leq \Theta_1$, then $\bot \{\Theta,\Theta_2,\ldots,\Theta_n\} \vert \Lambda$,
\item $\bot \{\Theta_1 \vee \Theta_2,\Theta_3,\ldots,\Theta_n\} \vert \Lambda$,
\item $\bot \{\Theta_1 \vee \Lambda,\Theta_2,\ldots,\Theta_n\} \vert \Lambda$.
\end{enumerate}
\end{theorem}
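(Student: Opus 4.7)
The plan is to unpack each claim directly from the definition of $R_\lambda$, using two structural properties implied by Definition \ref{def:fcf}: each refining $\tau_i$ into a common refinement is a partition of its target, so that the set-extension satisfies the commutation $\tau(A \cap B) = \tau(A) \cap \tau(B)$; and, by condition~(5), every element of a minimal common refinement $\Theta_{i_1} \vee \cdots \vee \Theta_{i_k}$ arises as a \emph{unique} intersection $\tau^\cdot_{i_1}(\theta_{i_1}) \cap \cdots \cap \tau^\cdot_{i_k}(\theta_{i_k})$ of blocks of the constituent refinings. A key preliminary observation is that $R_\lambda(\Theta_i) \not= \emptyset$ for every $i$: the nonempty set $\tau_\Lambda(\lambda)$ is contained in $\bigcup_{\theta_i} \tau_i(\theta_i)$, so at least one block $\tau_i(\theta_i)$ meets it. Part~1 is then immediate, since both $R_\lambda(\Theta_1, \ldots, \Theta_n)$ and the product $\prod_i R_\lambda(\Theta_i)$ are symmetric in their factors. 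For Part~2, given $(\theta_j)_{j \in J}$ with each $\theta_j \in R_\lambda(\Theta_j)$, I pick arbitrary $\theta_i \in R_\lambda(\Theta_i)$ for $i \notin J$ (possible by the preliminary), apply the hypothesis to the extended tuple to obtain $\bigcap_{i=1}^n \tau_i(\theta_i) \cap \tau_\Lambda(\lambda) \not= \emptyset$, and then drop the extra factors to conclude $(\theta_j)_{j \in J} \in R_\lambda(\Theta_j : j \in J)$.

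For Part~3, let $\rho : \Theta \to 2^{\Theta_1}$ denote the refining witnessing $\Theta \leq \Theta_1$; composition of refinings together with the partition property gives $\tau_\Theta(\theta) = \bigcup_{\theta_1 \in \rho(\theta)} \tau_1(\theta_1)$. If $\theta \in R_\lambda(\Theta)$, this union meets $\tau_\Lambda(\lambda)$, so some $\theta_1 \in \rho(\theta)$ lies in $R_\lambda(\Theta_1)$; feeding $(\theta_1, \theta_2, \ldots, \theta_n)$ into the hypothesis and using $\tau_1(\theta_1) \subseteq \tau_\Theta(\theta)$ delivers $\tau_\Theta(\theta) \cap \tau_2(\theta_2) \cap \cdots \cap \tau_n(\theta_n) \cap \tau_\Lambda(\lambda) \not= \emptyset$. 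Part~4 is the structurally richest step: by condition~(5) applied to $\Theta_1 \vee \Theta_2$, each $\theta \in \Theta_1 \vee \Theta_2$ corresponds to a unique pair $(\theta_1, \theta_2)$ with $\tau^{12}_1(\theta_1) \cap \tau^{12}_2(\theta_2) = \{\theta\}$, and the commutation identity then yields $\tau_1(\theta_1) \cap \tau_2(\theta_2) = \tau_{\Theta_1 \vee \Theta_2}(\theta)$ inside the enclosing minimal common refinement. Consequently $\theta \in R_\lambda(\Theta_1 \vee \Theta_2)$ forces both $\theta_1 \in R_\lambda(\Theta_1)$ and $\theta_2 \in R_\lambda(\Theta_2)$; the hypothesis supplies joint compatibility of $(\theta_1, \theta_2, \theta_3, \ldots, \theta_n)$ with $\lambda$, and the identity translates this back into compatibility of $(\theta, \theta_3, \ldots, \theta_n)$ with $\lambda$.

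For Part~5 I reduce to Part~4 by first enlarging the list with $\Lambda$ itself. Since $R_\lambda(\Lambda) = \{\lambda\}$, every element of $R_\lambda(\Theta_1, \ldots, \Theta_n, \Lambda)$ has last component forced to $\lambda$, so that set coincides with $R_\lambda(\Theta_1, \ldots, \Theta_n) \times \{\lambda\}$, which by hypothesis equals $\prod_i R_\lambda(\Theta_i) \times R_\lambda(\Lambda)$; hence $\bot\{\Theta_1, \ldots, \Theta_n, \Lambda\} \vert \Lambda$. Permuting via Part~1 and then absorbing $\Lambda$ into $\Theta_1$ via Part~4 yields $\bot\{\Theta_1 \vee \Lambda, \Theta_2, \ldots, \Theta_n\} \vert \Lambda$. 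The main technical point requiring care throughout is the commutation $\tau(A \cap B) = \tau(A) \cap \tau(B)$ for the set-extension of a refining $\tau$, which rests on $\tau$ partitioning its target; once this and condition~(5) are in place, every step becomes a short block-chasing argument and no deeper obstacle arises.
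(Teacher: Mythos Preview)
The paper does not actually prove this theorem; it simply states ``For a proof we refer again to \cite{kohlas17}.'' Your direct, definition-unpacking argument is therefore not comparable to anything in the paper, but it is essentially correct and self-contained.

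A couple of small points you are glossing over but which are easily filled in with the very tool you already isolate. In Parts~2, 3 and~4 you silently change the ambient minimal common refinement: for instance, in Part~2 the hypothesis gives $\bigcap_{i=1}^n \tau_i(\theta_i) \cap \tau_\Lambda(\lambda) \neq \emptyset$ inside $\Theta_1 \vee \cdots \vee \Theta_n \vee \Lambda$, whereas the conclusion $(\theta_j)_{j\in J} \in R_\lambda(\Theta_j : j\in J)$ is a statement about an intersection inside the \emph{smaller} frame $\bigl(\bigvee_{j\in J}\Theta_j\bigr) \vee \Lambda$. The passage between the two is exactly your commutation identity $\tau(A\cap B)=\tau(A)\cap\tau(B)$ applied to the refining from the smaller common refinement to the larger one (together with injectivity of refinings, so that $\tau(S)\neq\emptyset$ iff $S\neq\emptyset$); it would be worth saying this once explicitly. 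Similarly, in Part~5 your claim $R_\lambda(\Lambda)=\{\lambda\}$ and the identification $R_\lambda(\Theta_1,\ldots,\Theta_n,\Lambda)=R_\lambda(\Theta_1,\ldots,\Theta_n)\times\{\lambda\}$ rely on the fact that the two copies of $\Lambda$ receive the \emph{same} refining into the common refinement (Identity of Refinings in Definition~\ref{def:fcf}), so that $\tau_\Lambda(\lambda')\cap\tau_\Lambda(\lambda)\neq\emptyset$ forces $\lambda'=\lambda$. With these clarifications your proof goes through cleanly.
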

For a proof we refer again to \cite{kohlas17}.

In the case of a multivariate model we have $\Theta_s \bot \Theta_t \vert \Theta_r$ if and only $s \cap t \subseteq r$. 

If $(\mathcal{F};\leq)$ is a lattice, then $\Theta \bot \Lambda \vert \Theta \wedge \Lambda$ may hold for all pair of frames, as for instance in a multivariate model. This does not hold in general, for example in join-semilattices of partitions. So, what does this mean? To answer this question, let $\mu_1$ and $\mu_2$ the refinings of $\Theta \wedge \Lambda$ to $\Theta$ and $\Lambda$ respectively, and $\tau_1$ and $\tau_2$ the refinings of $\Theta$ and $\Lambda$ to $\Theta \vee \Lambda$. Consider elements $\theta \in \Theta$, $\lambda \in \Lambda$ and $\chi \in \Theta \wedge \Lambda$. Note that $\tau_1(\theta) \cap \tau_1(\mu_1(\chi)) \not= \emptyset$ if and only if $\theta \in \mu_1(\chi)$, and, similarly, $\tau_2(\lambda) \cap \tau_1(\mu_1(\chi)) \not= \emptyset$ if and only if $\lambda \in \mu_2(\chi)$. The conditional independence condition $\Theta \bot \Lambda \vert \Theta \wedge \Lambda$ implies
\begin{eqnarray*}
\tau_1(\theta) \cap \tau_2(\lambda) \cap \tau_1(\mu_1(\chi)) = \tau_1(\theta) \cap \tau_2(\lambda) \cap \tau_2(\mu_2(\chi)) \not= \emptyset
\end{eqnarray*}
if $\theta \in \mu_1(\chi)$ and $\lambda \in \mu_2(\chi)$. Therefore, if $\theta \in \mu_1(\chi)$ and $\lambda \in \mu_2(\chi)$, then $\tau_1(\theta) \cap \tau_2(\lambda) \not= \emptyset$. This can be expressed in the following way: If $\Theta$, $\Lambda$ and $\Theta \wedge \Lambda$ are considered as partitions of $\Theta \vee \Lambda$, then, if $\theta$ and $\lambda$ are in the same block of $\Theta \wedge \Lambda$, there is an element $\zeta \in \Theta \vee \Lambda$ such that $\theta$ and $\zeta$ are in the same block of $\Theta$ and $\lambda$ and $\zeta$ are in the same block of $\Lambda$. Note that sublatticies of a partition lattice satisfying this condition for any pair of blocks are also called partition lattices of type I \cite{graetzer78}. So, $\Theta \bot \Lambda \vert \Theta \wedge \Lambda$ holds only in very special types of partition lattices. Nevertheless this special case is important as the case of the multivariate models shows.

If $S$ is a subset of some frame $\Theta$ of a f.c.f $(\mathcal{F},\mathcal{R})$ and $\Lambda$ any other frame of the f.c.f, then we have
\begin{eqnarray*}
t_\Lambda(S) = \bigcup_{\theta \in S} R_\theta(\Lambda)
\end{eqnarray*}
for the set of all elements $\lambda$ of frame $\Lambda$, compatible with some element $\theta$ of the subset $S$ of frame $\Theta$ (see (\ref{eq:SetTrOp}). For any element $\theta \in \Theta$ we write $t_\Lambda(\theta)$ instead of $t_\Lambda(\{\theta\})$. This is the transport of an element of frame $\Theta$ to the frame $\Lambda$. Note that $t_\Lambda(\theta)$ is a set, a subset of $\Lambda$. If $\Lambda \leq \Theta$, then $t_\Lambda(\theta)$ is a one-element set. Further, if $\theta \in \Theta$ and $\lambda \in \Lambda$, we write $\theta \sim \lambda$ if the two elements are compatible, that is $\lambda \in R_\theta(\Lambda)$ 
or, equivalently $\theta \in R_\lambda(\Theta)$.

Let us add some results on conditional independent frames, which we need later (see Section \ref{sec:MostProb}). The first result states that if frames $\Theta$ and $\Lambda$ are conditionally independent given a frame $\Lambda_1$, then, if $\lambda \in \Lambda$ is compatible with $\lambda_1 \in \Lambda_1$, then any element $\theta \in \Theta$, compatible with $\lambda_1$ is also compatible with $\lambda$.

\begin{lemma} \label{CondIndepRes1}
Let $(\mathcal{F},\mathcal{R})$ be an f.c.f and assume $\Theta \bot \Lambda \vert \Lambda_1$ for $\Theta,\Lambda, \Lambda_1 \in \mathcal{F}$. Then  if $\lambda \in \Lambda$, and $\lambda_1 \in \Lambda_1$, $\lambda \sim \lambda_1$ implies $R_{\lambda_1}(\Theta) \subseteq R_\lambda(\Theta)$ or $\theta \sim \lambda_1 \Rightarrow \theta \sim \lambda$..
\end{lemma}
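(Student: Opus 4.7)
The plan is to simply unpack the definition of conditional independence and observe that the conclusion falls out immediately. Assume $\theta \in R_{\lambda_1}(\Theta)$, i.e.\ $\theta \sim \lambda_1$. I want to show $\theta \sim \lambda$ whenever $\lambda \sim \lambda_1$.

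Since $\lambda \sim \lambda_1$ gives $\lambda \in R_{\lambda_1}(\Lambda)$, and since we also have $\theta \in R_{\lambda_1}(\Theta)$ by assumption, the pair $(\theta,\lambda)$ lies in the product $R_{\lambda_1}(\Theta) \times R_{\lambda_1}(\Lambda)$. The hypothesis $\Theta \bot \Lambda \mid \Lambda_1$ means exactly that
\begin{eqnarray*}
R_{\lambda_1}(\Theta,\Lambda) = R_{\lambda_1}(\Theta) \times R_{\lambda_1}(\Lambda),
\end{eqnarray*}
so $(\theta,\lambda) \in R_{\lambda_1}(\Theta,\Lambda)$, which by the definition of the relation $R$ means $(\theta,\lambda,\lambda_1) \in R(\Theta,\Lambda,\Lambda_1)$. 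Writing $\tau_\Theta,\tau_\Lambda,\tau_{\Lambda_1}$ for the refinings to the join $\Theta \vee \Lambda \vee \Lambda_1$, this says $\tau_\Theta(\theta) \cap \tau_\Lambda(\lambda) \cap \tau_{\Lambda_1}(\lambda_1) \neq \emptyset$.

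In particular $\tau_\Theta(\theta) \cap \tau_\Lambda(\lambda) \neq \emptyset$, which is the definition of $\theta \sim \lambda$. Thus every $\theta \in R_{\lambda_1}(\Theta)$ also lies in $R_\lambda(\Theta)$, i.e.\ $R_{\lambda_1}(\Theta) \subseteq R_\lambda(\Theta)$, as claimed. There is no real obstacle here; the only subtlety is being careful that $\Lambda_1$ may coincide with $\Theta$ or $\Lambda$ in the defining relation $R(\Theta,\Lambda,\Lambda_1)$, but this is already permitted by the definition of conditional independence as noted after the definition of $R_\lambda$ in Section \ref{subsec:CondInd}, so no separate case analysis is required.
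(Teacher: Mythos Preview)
Your proof is correct and takes essentially the same approach as the paper's: both use the product form of $R_{\lambda_1}(\Theta,\Lambda)$ to obtain $(\theta,\lambda,\lambda_1)\in R(\Theta,\Lambda,\Lambda_1)$ and then drop the $\lambda_1$ coordinate. The paper is merely more explicit about the refinings (it names six of them and records the identity (\ref{eq:TripleComp}), which is reused in the proof of Lemma~\ref{CondIndepRes2}), and in particular spells out the passage from a non-empty intersection in $\Theta\vee\Lambda\vee\Lambda_1$ to one in $\Theta\vee\Lambda$, which your last line elides but which is immediate since refinings preserve non-emptiness of intersections.
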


\begin{proof}
Recall that $\Theta \bot \Lambda \vert \Lambda_1$ means that $R_{\lambda_1}(\Theta,\Lambda) = R_{\lambda_1}(\Theta) \times R_{\lambda_1}(\Lambda)$. Le'ts first translate this statement in a different form, useful for the proof. Let
\begin{enumerate}
\item $\tau_1$, $\mu_1$ be refinings of $\Theta$ and $\Lambda$ to $\Theta \vee \Lambda$,
\item $\tau_2$, $\nu_1$ be refinings of $\Theta$ and $\Lambda_1$ to $\Theta \vee \Lambda_1$,
\item $\mu_2$, $\nu_2$ be refinings of $\Lambda$ and $\Lambda_1$ to $\Lambda \vee \Lambda_1$,
\item $\epsilon_1$, $\epsilon_2$, $\epsilon_3$ refinings of $\Theta \vee \Lambda$, $\Theta \vee \Lambda_1$ and $\Lambda \vee \Lambda_1$ to $\Theta \vee \Lambda \vee \Lambda_1$.
\end{enumerate}
Now, by definition, a pair $(\theta,\lambda)$ belongs to $R_{\lambda_1}(\Theta,\Lambda)$, if the triple of elements $(\theta,\lambda,\lambda_1)$ belongs to $R(\Theta,\Lambda,\Lambda_1)$ and this in turn is the case if the intersection of the refinings of $\theta$, $\lambda$ and $\lambda_1$ to $\Theta \vee \Lambda \vee \Lambda_1$ is not empty. Now, this can be expressed in different ways using the refinings defined above:
\begin{eqnarray} \label{eq:TripleComp}
\emptyset &\not=& \epsilon_1(\tau_1(\theta) \cap \mu_1(\lambda)) \cap \epsilon_2(\nu_1(\lambda_1)) \nonumber \\
&=& \epsilon_1(\tau_1(\theta)) \cap \epsilon_3(\mu_2(\lambda) \cap \nu_2(\lambda_1)) \nonumber \\ 
&=& \epsilon_2(\tau_2(\theta) \cap \nu_1(\lambda_1)) \cap \epsilon_1(\mu_1(\lambda)).
\end{eqnarray}
This implies
\begin{eqnarray} \label{eq:PairComp}
\tau_2(\theta) \cap \nu_1(\lambda_1) \not= \emptyset, \quad \mu_2(\lambda) \cap \nu_2(\lambda_1) \not= \emptyset
\end{eqnarray}
or, in other words, $\theta \in R_{\lambda_1}(\Theta)$ and $\lambda \in R_{\lambda_1}(\Lambda)$ (or $\lambda \sim \lambda_1$). If $\Theta \bot \Lambda \vert \Lambda_1$, then (\ref{eq:PairComp}) implies also (\ref{eq:TripleComp}). We exploit this now for the proof of the lemma.

So assume $\lambda \sim \lambda_1$, that is $\mu_2(\lambda) \cap \nu_2(\lambda_1) \not= \emptyset$ and $\theta \in R_{\lambda_1}(\Theta)$, that is $\tau_2(\theta) \cap \nu_1(\lambda_1) \not= \emptyset$. Then since $\Theta \bot \Lambda \vert \Lambda_1$, (\ref{eq:TripleComp}) holds, which implies $\tau_1(\theta) \cap \mu_1(\lambda)$, hence $\theta \in R_\lambda(\Theta)$. 
\end{proof}

Next, we assure that if elements $\theta \in \Theta$ and $\lambda \in \Lambda$ are compatible, then there is an element $\lambda_1 \in \Lambda_1$ such that the triple of elements $(\theta,\lambda,\lambda_1)$ is compatible and so are the pairs $(\theta,\lambda_1)$ and $(\lambda,\lambda_1)$.

\begin{lemma} \label{CondIndepRes2}
Let $\theta \sim \lambda$. Then there is an element $\lambda_1 \in \Lambda_1$ such that $(\theta,\lambda,\lambda_1) \in R(\Theta,\Lambda,\Lambda_1)$ and $\theta \sim \lambda_1$ and $\lambda \sim \lambda_1$.
\end{lemma}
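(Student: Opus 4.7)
The plan is to use the stronger form of Axiom 5 in Definition \ref{def:fcf}, namely the representation property (\ref{eq:SuffQSep}) in the three-way join $M = \Theta \vee \Lambda \vee \Lambda_1$. Note that conditional independence is \emph{not} assumed here; this is a purely combinatorial statement about f.c.f's.

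First I would translate the hypothesis $\theta \sim \lambda$ into the existence of some $\zeta \in \Theta \vee \Lambda$ with $\zeta \in \tau_1(\theta) \cap \mu_1(\lambda)$, where $\tau_1,\mu_1$ are the refinings of $\Theta,\Lambda$ to $\Theta \vee \Lambda$. Then, using the refining $\epsilon_1$ of $\Theta \vee \Lambda$ into $M$, I pick any $\mu \in \epsilon_1(\zeta)$, which is non-empty by the definition of a refining. By Axiom 5 applied to the frames $\Theta, \Lambda, \Lambda_1$ whose minimal common refinement is $M$, there exist $\theta' \in \Theta$, $\lambda' \in \Lambda$ and $\lambda_1 \in \Lambda_1$ whose refinings into $M$ intersect exactly in $\{\mu\}$. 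In particular $(\theta',\lambda',\lambda_1) \in R(\Theta,\Lambda,\Lambda_1)$.

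The main technical step is to identify $\theta' = \theta$ and $\lambda' = \lambda$. For this I invoke Axiom 1 (composition of refinings) to see that $\epsilon_1 \circ \tau_1$ is a refining from $\Theta$ into $M$, and then Axiom 3 (uniqueness of refinings) to conclude that it equals the (unique) refining from $\Theta$ to $M$ used in Axiom 5. Since $\mu \in \epsilon_1(\tau_1(\theta))$ by construction and $\mu$ also lies in the $M$-image of $\theta'$, and since the images of distinct elements under any refining are disjoint (Condition 2 of a refining), I conclude $\theta = \theta'$. The same argument with $\mu_1$ in place of $\tau_1$ gives $\lambda = \lambda'$. Thus $(\theta,\lambda,\lambda_1) \in R(\Theta,\Lambda,\Lambda_1)$.

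Finally, the pairwise compatibilities $\theta \sim \lambda_1$ and $\lambda \sim \lambda_1$ are immediate from the triple compatibility: the non-emptiness of the intersection of the three refinings in $M$, rewritten along the lines of equation (\ref{eq:TripleComp}) in the previous proof, forces the non-emptiness of $\tau_2(\theta) \cap \nu_1(\lambda_1)$ and $\mu_2(\lambda) \cap \nu_2(\lambda_1)$, which are precisely the conditions $\theta \sim \lambda_1$ and $\lambda \sim \lambda_1$. I expect no serious obstacle; the only delicate point is being careful that the refinings between the various joins really compose to the unique refining into $M$, which is guaranteed by Axioms 1 and 3.
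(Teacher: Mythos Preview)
Your argument is correct, but it is more elaborate than necessary compared with the paper's proof. Both arguments start the same way: from $\theta\sim\lambda$ one gets a nonempty set $\epsilon_1(\tau_1(\theta)\cap\mu_1(\lambda))$ inside $M=\Theta\vee\Lambda\vee\Lambda_1$. At this point the paper simply observes that the (unique) refining of $\Lambda_1$ into $M$, namely $\epsilon_2\circ\nu_1$, \emph{covers} $M$ (condition~3 in the definition of a refining), so some block $\epsilon_2(\nu_1(\lambda_1))$ must meet that nonempty set; this immediately gives $(\theta,\lambda,\lambda_1)\in R(\Theta,\Lambda,\Lambda_1)$, and the pairwise compatibilities follow from (\ref{eq:TripleComp}) exactly as you note.

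You instead invoke the representation clause (\ref{eq:SuffQSep}) of Axiom~5 to produce all three elements $\theta',\lambda',\lambda_1$ at once, and then spend an extra step identifying $\theta'=\theta$ and $\lambda'=\lambda$ via Axioms~1 and~3 and the disjointness of refining images. This works, and your care about composing refinings into the unique refining to $M$ is well placed; but the representation property is a stronger tool than needed here. The covering property alone suffices, and avoids the identification step entirely.
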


\begin{proof}
We use the refinings defined in the proof of the previous lemma. Then $\theta \sim \lambda$ means that $\tau_1(\theta) \cap \mu_1(\lambda)$ is not empty and so is $\epsilon_1(\tau_1(\theta) \cap \mu_1(\lambda))$ as a subset of $\Theta \vee \Lambda \vee \Lambda_1$. Now, $\epsilon_2 \circ \nu_1$ is the refining of frame $\Lambda_1$ to $\Theta \vee \Lambda \vee \Lambda_1$. This refining of $\Lambda_1$ to $\Theta \vee \Lambda \vee \Lambda_1$ covers the latter frame. Therefore there must be a $\lambda_1 \in \Lambda_1$ such that $\epsilon_1(\tau_1(\theta) \cap \mu_1(\lambda)) \cap \epsilon_2(\nu_1(\lambda_1)) \not= \emptyset$. But this means that $(\theta,\lambda,\lambda_1) \in R(\Theta,\Lambda,\Lambda_1)$. The rest follows then from (\ref{eq:TripleComp}).
\end{proof}

The next lemma states further results on compatibility of elements on different frames.

\begin{lemma} \label{CondIndepRes3}
Let $(\mathcal{F},\mathcal{R})$ be an f.c.f and assume $\Theta_1 \bot \Theta_2 \vert \Lambda$ for $\Theta_1,\Theta_2, \Lambda \in \mathcal{F}$ and let $\tau_1$ and $\tau_2$ respectively denote the refinings of $\Theta_1, \Theta_2$ to $\Theta_1 \vee \Theta_2$.
\begin{enumerate}
\item If $(\theta_1,\theta_2) \in R_\lambda(\Theta_1,\Theta_2)$ for $\lambda \in \Lambda$, then $\emptyset \not= \tau_1(\theta_1) \cap \tau_2(\theta_2) = \{\theta\}$ where $\theta \in R_\lambda(\Theta_1 \vee \Theta_2)$.
\item If $\theta \in R_\lambda(\Theta_1 \vee \Theta_2)$, then $t_{\Theta_1}(\theta) \in R_\lambda(\Theta_1)$ and $t_{\Theta_2}(\theta) \in R_\lambda(\Theta_2)$ and further $\tau_1(t_{\Theta_1}(\theta)) \cap \tau_2(t_{\Theta_2}(\theta)) = \{\theta\}$. 
\item The map $\theta \inÊR_\lambda(\Theta_1 \vee \Theta_2) \mapsto (t_{\Theta_1}(\theta)), t_{\Theta_2}(\theta)) \in R_\lambda(\Theta_1,\Theta_2)$ establishes a bijection between $R_\lambda(\Theta_1 \vee \Theta_2)$ and $R_\lambda(\Theta_1,\Theta_2)$.
\end{enumerate}
\end{lemma}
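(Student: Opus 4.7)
The proof rests on two pieces of structure in an f.c.f: the refinings into the three-fold join $\Theta_1 \vee \Theta_2 \vee \Lambda$, and the uniqueness-of-representation consequence of condition 5 in Definition \ref{def:fcf} (the remark immediately following that definition). Notably, the conditional independence hypothesis $\Theta_1 \bot \Theta_2 \vert \Lambda$ is not actually needed for any of the three claims; it is stated for compatibility with the setting in which this lemma will later be used.

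For setup I would let $\sigma : \Theta_1 \vee \Theta_2 \to 2^{\Theta_1 \vee \Theta_2 \vee \Lambda}$ and $\gamma : \Lambda \to 2^{\Theta_1 \vee \Theta_2 \vee \Lambda}$ be the refinings into the triple join, and note that by conditions 1 and 3 of Definition \ref{def:fcf} the compositions $\alpha_i := \sigma \circ \tau_i$ are the unique refinings of $\Theta_i$ into that join. Because $\sigma$ takes values in a partition, the distributive identity $\sigma(A) \cap \sigma(B) = \sigma(A \cap B)$ holds for arbitrary $A, B \subseteq \Theta_1 \vee \Theta_2$; this will be the workhorse below.

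For part 1, I would unfold $(\theta_1,\theta_2) \in R_\lambda(\Theta_1,\Theta_2)$ to $\alpha_1(\theta_1) \cap \alpha_2(\theta_2) \cap \gamma(\lambda) \neq \emptyset$. By condition 5 applied to the triple join together with its uniqueness remark, this intersection is a singleton $\{\zeta\}$. The distributive identity yields $\alpha_1(\theta_1) \cap \alpha_2(\theta_2) = \sigma(\tau_1(\theta_1) \cap \tau_2(\theta_2))$, so $\tau_1(\theta_1) \cap \tau_2(\theta_2)$ is nonempty; another application of the uniqueness principle, now to $\Theta_1 \vee \Theta_2$ (two distinct elements cannot share the pair $(\theta_1,\theta_2)$ as their unique representing tuple), forces this intersection to be a singleton $\{\theta\}$. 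Then $\sigma(\theta) \cap \gamma(\lambda) = \{\zeta\} \neq \emptyset$ gives $\theta \sim \lambda$, i.e.\ $\theta \in R_\lambda(\Theta_1 \vee \Theta_2)$.

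For part 2, each $\theta \in \Theta_1 \vee \Theta_2$ lies in a unique block $\tau_1(\theta_1)$ and a unique block $\tau_2(\theta_2)$ by the partition property, and these define $t_{\Theta_i}(\theta)$; the equality $\tau_1(\theta_1) \cap \tau_2(\theta_2) = \{\theta\}$ is again the uniqueness argument from part 1. For compatibility, the inclusion $\sigma(\theta) \subseteq \alpha_i(\theta_i)$ transfers $\sigma(\theta) \cap \gamma(\lambda) \neq \emptyset$ to $\alpha_i(\theta_i) \cap \gamma(\lambda) \neq \emptyset$, proving $t_{\Theta_i}(\theta) \sim \lambda$. Part 3 is then immediate: the map is well-defined and lands in $R_\lambda(\Theta_1,\Theta_2)$ by 2, surjectivity is exactly the content of 1, and injectivity holds because two preimages of a given pair would both have to lie in the singleton $\tau_1(\theta_1) \cap \tau_2(\theta_2)$. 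The main obstacle is purely combinatorial: keeping the half-dozen refinings between the frames straight and invoking the uniqueness remark following Definition \ref{def:fcf} at the right points.
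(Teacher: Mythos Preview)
Your argument follows essentially the same route as the paper's proof: set up the various refinings into $\Theta_1 \vee \Theta_2 \vee \Lambda$, use the distributive identity $\sigma(A \cap B) = \sigma(A) \cap \sigma(B)$ for refinings, and invoke the singleton property of minimal common refinements (condition~5 of Definition~\ref{def:fcf}) at the appropriate points. The paper writes out five separate refinings where you compress to $\sigma,\gamma,\alpha_1,\alpha_2$, but the substance is the same.

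Your observation that the hypothesis $\Theta_1 \bot \Theta_2 \mid \Lambda$ is dispensable is correct and worth noting; the paper \emph{does} use it, precisely in part~3, where it writes $R_\lambda(\Theta_1) \times R_\lambda(\Theta_2) = R_\lambda(\Theta_1,\Theta_2)$ to conclude the map lands in the right codomain. However, your own justification at this step has a small gap: you say the map ``lands in $R_\lambda(\Theta_1,\Theta_2)$ by 2'', but part~2 only yields $(t_{\Theta_1}(\theta),t_{\Theta_2}(\theta)) \in R_\lambda(\Theta_1) \times R_\lambda(\Theta_2)$, not membership in $R_\lambda(\Theta_1,\Theta_2)$. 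To make good on your claim that conditional independence is not needed, you must argue directly that the triple $(\theta_1,\theta_2,\lambda)$ is compatible in $\Theta_1 \vee \Theta_2 \vee \Lambda$. This is one line in your notation: since $\theta \in R_\lambda(\Theta_1 \vee \Theta_2)$ gives $\sigma(\theta) \cap \gamma(\lambda) \neq \emptyset$, and $\sigma(\theta) \subseteq \alpha_1(\theta_1) \cap \alpha_2(\theta_2)$, one obtains $\alpha_1(\theta_1) \cap \alpha_2(\theta_2) \cap \gamma(\lambda) \neq \emptyset$, which is exactly $(\theta_1,\theta_2) \in R_\lambda(\Theta_1,\Theta_2)$. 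With that line added, your proof is complete and slightly sharper than the paper's.
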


\begin{proof}
Consider the following refinings:
\begin{enumerate}
\item $\tau_1$ and $\tau_2$ refinings from $\Theta_1$ and $\Theta_2$ to $\Theta_1 \vee \Theta_2$,
\item $\tau$ and $\mu$ refinings from $\Theta_1 \vee \Theta_2$ and $\Lambda$ to $\Theta_1 \vee \Theta_2 \vee \Lambda$,
\item $\tau'_1$ and $\tau'_2$ the refinings of $\Theta_1$ and $\Theta_2$ to $\Theta_1 \vee \Lambda$ and $\Theta_2 \vee \Lambda$,
\item $\nu_1$ and $\nu_2$ the refinings of $\Lambda$ to $\Theta_1 \vee \Lambda$ and $\Theta_2 \vee \Lambda$,
\item $\mu_1$ and $\mu_2$ the refinings from $\Theta_1 \vee \Lambda$ and $\Theta_2 \vee \Lambda$ to $\Theta_1 \vee \Theta_2 \vee \Lambda$.
\end{enumerate}
Then, $(\theta_1,\theta_2) \in R_\lambda(\Theta_1,\Theta_2)$ means that $\tau(\tau_1(\theta_1) \cap \tau_2(\theta_2)) \cap \mu(\lambda) \not= \emptyset$, and therefore $\tau_1(\theta_1) \cap \tau_2(\theta_2) \not= \emptyset$. By the property of minimal common refinements, we have then $\tau_1(\theta_1) \cap \tau_2(\theta_2) = \{\theta\}$ for some element $\theta \in \Theta_1 \vee \Theta_2$. This proves item 1.

Next, $\theta \in R_\lambda(\Theta_1 \vee \Theta_2)$ means that $\tau(\theta) \cap \mu(\lambda) \not= \emptyset$. But, if $\theta_1 = t_{\Theta_1}(\theta)$, then $\tau(\theta) \subseteq \tau(\tau_1(\theta_1))$, hence $\tau(\tau_1(\theta_1)) \cap \mu(\lambda) \not= \emptyset$. Then we have $\tau(\tau_1(\theta_1)) = \mu_1(\tau'_1(\theta_1))$, hence
\begin{eqnarray*}
\mu_1(\tau'_1(\theta_1) \cap \nu_1(\lambda)) = \mu_1(\tau'_1(\theta_1)) \cap \mu_1(\nu_1(\lambda)) = \mu_1(\tau'_1(\theta_1)) \cap \mu(\lambda) \not= \emptyset.
\end{eqnarray*}
From this we conclude that $\tau'_1(\theta_1) \cap \nu_1(\lambda) \not= \emptyset$, which means that $\theta_1 = t_{\Theta_1}(\theta) \in R_\lambda(\Theta_1)$. And $t_{\Theta_2}(\theta) \in R_\lambda(\Theta_2)$ is proved in the same way. Further, $\tau_1(t_{\Theta_1}(\theta)) \cap \tau_2(t_{\Theta_2}(\theta)) = \{\theta\}$ follows since $\theta \in \tau_1(t_{\Theta_1}(\theta))$ and $\theta \in \tau_2(t_{\Theta_2}(\theta))$ and by the property of the minimal common refinement. This shows that item 2 holds.
 
Finally, by item 2 $(t_{\Theta_1}(\theta)),t_{\Theta_2}(\theta))$ belongs to $R_\lambda(\Theta_1) \times R_\lambda(\Theta_2) = R_\lambda(\Theta_1,\Theta_2)$ if $\theta \in R_\lambda(\Theta_1 \vee \Theta_2)$. The map $\theta \inÊR_\lambda(\Theta_1 \vee \Theta_2) \mapsto (t_{\Theta_1}(\theta)),t_{\Theta_2}(\theta)) \in R_\lambda(\Theta_1,\Theta_2)$ is invertible, since $\{\theta\} = \tau_1(t_{\Theta_1}(\theta)) \cap \tau_2(t_{\Theta_2}(\theta))s$ and it is onto, since $(\theta_1,\theta_2) \in R_\lambda(\Theta_1,\Theta_2)$ implies by item 1 that $\theta \in R_\lambda(\Theta_1 \vee \Theta)$ if $\{\theta\} = \tau_1(\theta_1) \cap \tau_2(\theta_2)$.
\end{proof}

This concludes the discussion of conditional independence in f.c.f.


\section{Reasoning with PAS} \label{sec:ReasPAS}

\subsection{Independent PAS} \label{subsec:indepPAS}

In this section, we fix a family of compatible frames (f.c.f) $(\mathcal{F},\mathcal{R})$ and consider PAS defined on frames $\Theta \in \mathcal{F}$. We extend the combination procedure of two (or more) PAS, as described in Section \ref{sec:PAS} for PAS on the same frame, to PAS on different frames in the f.c.f. So, consider two PAS $(\Omega_1,p_1,X_1,\Theta_1)$ and $(\Omega_2,p_2,X_2,\Theta_2)$ for two frames $\Theta_1$ and $\Theta_2$ from $\mathcal{F}$. As in section \ref{sec:PAS} we consider combined assumptions $(\omega_1,\omega_2) \in \Omega_1 \times \Omega_2$. But now $X_1(\omega_1)$ is in $\Theta_1$, whereas $X_2(\omega_2)$ lies in $\Theta_2$. In order to combine or compare the two implications $X_1(\omega_1)$ and $X_2(\omega_2)$ of the two assumptions $\omega_1$ and $\omega_2$ we consider the elements in $\Theta_1 \vee \Theta_2$ compatible respectively with $X_1(\omega_1)$ and $X_2(\omega_2)$. According to Section \ref{sec:FcF} these elements are given by the sets $\tau_1(X_1(\omega_1))$ and $\tau_2(X_2(\omega_2))$, where $\tau_1$ and $\tau_2$ are  the refinings of $\Theta_1$ and $\Theta_2$  to their common refinement $\Theta_1 \vee \Theta_2$. So, the elements in $\Theta_1 \vee \Theta_2$ compatible both with $X_1(\omega_1)$ and $X_2(\omega_2)$ are in the intersection $\tau_1(X_1(\omega_1)) \cap \tau_2(X_2(\omega_2))$. According to the Existence of a Minimal Common Refinement in the definition of a f.c.f (see Section \ref{sec:FcF}), this intersection is either empty or contains exactly one atom. In the first case, the two assumptions $\omega_1$ and $\omega_2$ are \textit{contradictory}, no element in $\Theta_1 \vee \Theta_2$ is compatible with both assumptions. As argued in Section \ref{sec:PAS} such pairs of assumptions are to be eliminated as impossible, only the remaining pairs are to be accepted.

In order to express the combination rule for the two PAS $(\Omega_1,p_1,X_1,\Theta_1)$ and $(\Omega_2,p_2,X_2,\Theta_2)$ more formally, we simplify notation by writing $t_{\Theta_1 \vee \Theta_2}(\theta_1) = \tau_1(\{\theta_1\})$ and $t_{\Theta_1 \vee \Theta_2}(\theta_2) = \tau_2(\{\theta_2\})$. Then we define the combined PAS of $(\Omega_1,p_1,X_1,\Theta_1)$ and $(\Omega_2,p_2,X_2,\Theta_2)$ by $(\Omega,p,X,\Theta_1 \vee \Theta_2)$ where
\begin{enumerate}
\item $\Omega = \{(\omega_1,\omega_2) \in \Omega_1 \times \Omega_2:t_{\Theta_1 \vee \Theta_2}(X_1(\omega_1)) \cap t_{\Theta_1 \vee \Theta_2}(X_2(\omega_2)) \not= \emptyset\}$,
\item $p(\omega_1,\omega_2) = k^{-1}p_1(\omega_1)p_2(\omega_2)$ for $(\omega_1,\omega_2) \in \Omega$, provided
\begin{eqnarray*}
k = \sum_{(\omega_1,\omega_2) \in \Omega} p_1(\omega_1)p_2(\omega_2) \not= 0,
\end{eqnarray*}
\item $X(\omega_1,\omega_2) = \lambda$, if $t_{\Theta_1 \vee \Theta_2}(X_1(\omega_1)) \cap t_{\Theta_1 \vee \Theta_2}(X_2(\omega_2)) = \{\lambda\}$, $\lambda \in \Theta_1 \vee \Theta_2$, $(\omega_1,\omega_2) \in \Omega$.
\end{enumerate}
If $k = 0$, then the two PAS $(\Omega_1,p_1,X_1,\Theta_1)$ and $(\Omega_2,p_2,X_2,\Theta_2)$ are called \textit{contradictory}, they cannot be combined into a new PAS. This is also called the combination rule of \textit{independent} PAS. 

For a PAS on a frame $\Theta$ in a f.c.f we may define another operation, namely the one of the projection (or coarsening) of the PAS to a coarser frame $\Lambda \leq \Theta$. If $(\Omega,p,X,\Theta)$ is a PAS relative to the frame $\Theta$, and $\tau$ the refining of $\Lambda$ to $\Theta$, then, for an assumption $\omega \in \Omega$, if $X(\omega) \in \tau(\lambda)$, then $\omega$ implies $\lambda \in \Lambda$. Therefore, we may call the PAS $(\Omega,p,Y,\Lambda)$ with $Y(\omega) = t_\Lambda(X(\omega)) = \lambda$, if $X(\omega) \in \tau(\lambda)$ the \textit{projection} or \textit{coarsening} of the PAS $(\Omega,p,X,\Theta)$ to the frame $\Lambda$. 

So, we have two operations among independent PAS on a f.c.f $(\mathcal{F},\mathcal{R})$. This points to a certain algebraic structure of those PAS. This structure however expresses itself more clearly, when we consider the probability distributions on the frames $\Theta$ associated with the PAS. This will be discussed in the next section. Previously, we consider the question, whether a PAS $(\Omega,p,X,\Theta)$ can somehow also be transported to any other frame $\Lambda$ in the f.c.f, not only to $\Lambda \leq \Theta$, as in the projection operation. We may try to do this by considering the elements in $\Lambda$ which are compatible with the element $X(\omega)$ in $\Theta$. So, we might assign $t_\Lambda(X(\omega)$ in $\Lambda$ to the assumption $\omega$. The point is however, that $t_\Lambda(X(\omega)$ is a \textit{set} in general and not a single element. So, the resulting structure is no more a PAS in the strict sense of Section \ref{sec:PAS}. It can however definitely make sense to consider structures where assumptions imply a \textit{subset} of a frame rather than a single element. This has been extensively described in \cite{kohlasmonney95} in the theory of hints. And this point of view will also be of some help in subsequent sections. 

We now enlarge therefore the point of view a bit in the direction indicated above. We consider PAS $(\Omega,p,X,\Theta)$ relative to a frame $\Theta$ of an f.c.f where, however, $X$ maps $\Omega$ now into the \textit{power set} of $\Theta$, that is, $X(\omega)$ is a non-empty \textit{subset} of $\Theta$. Associated with such a generalized PAS is a \textit{basic probability assignment} (bpa) for subsets $S$ of $\Theta$, defined by
\begin{eqnarray*}
m(S) = \sum_{\omega: X(\omega) = S} p(\omega).
\end{eqnarray*}
If there is no assumption $\omega$ such that $X(\omega) = S$, then put $m(S) = 0$. Obvioulsly, we have
\begin{eqnarray*}
m(S) \geq 0 \textit{ for all subsets}\ S \subseteq \Theta, \quad \sum_{S \subseteq \Theta} m(S) = 1.
\end{eqnarray*}
Further, we have $m(\emptyset) = 0$. 

Using the transport operators $t_{\Theta_1 \vee \Theta_2}(S_1)$ for $S_1 \subseteq \Theta_1$ and $t_{\Theta_1 \vee \Theta_2}(S_2)$ for $S_2 \subseteq \Theta_2$ we can define combination $(\Omega,p,X.\Lambda)$ of two generalized PAS $(\Omega_1,p_1,X_1,\Theta_1)$ and $(\Omega_2,p_2,X_2,\Theta_2)$ along the same lines as the combination of ordinary PAS in Section \ref{sec:PAS}:
\begin{enumerate}
\item $\Lambda = \Theta_1 \vee \Theta_2$,
\item $\Omega = \{(\omega_1,\omega_2) \in \Omega_1 \times \Omega_2):t_{\Theta_1 \vee \Theta_2}(X_1(\omega_1)) \cap t_{\Theta_1 \vee \Theta_2}(X_2(\omega_2)) \not=Ê\emptyset\}$,
\item $p(\omega_1,\omega_2) = k^{-1}p_1(\omega_1)p_2(\omega_2)$, where $k = \sum_{(\omega_1,\omega_2) \in \Omega} p_1(\omega_1)p_2(\omega_2)$ for $(\omega_1,\omega_2) \in \Omega$,
\item $X(\omega_1,\omega_2) = t_{\Theta_1 \vee \Theta_2}(X_1(\omega_1)) \cap t_{\Theta_1 \vee \Theta_2}(X_2(\omega_2))$ for $(\omega_1,\omega_2) \in \Omega$.
\end{enumerate}
Here we assume that $k \not= \emptyset$, otherwise the two PAS are contradictory. This operation is reflected by a corresponding operation between the associated bpa $m_1$ and $m_2$ of the two PAS: For any subset $S$ of $\Theta_1 \vee \Theta_2$, we have for the bpa of the combined PAS,
\begin{eqnarray*}
m(S) = k^{-1} \sum_{t_{\Theta_1 \vee \Theta_2}(S_1) \cap t_{\Theta_1 \vee \Theta_2}(S_2) = S} m_1(S_1)m_2(S_2),
\end{eqnarray*}
with
\begin{eqnarray*}
k =  \sum_{t_{\Theta_1 \vee \Theta_2}(S_1) \cap t_{\Theta_1 \vee \Theta_2}(S_2) \not= \emptyset} m_1(S_1)m_2(S_2).
\end{eqnarray*}
This is known as \textit{Dempster's rule} in Dempster-Shafer theory of evidence, at least if $\Theta_1 = \Theta_2$, \cite{shafer76}, and the present discussion shows how Dempster-Shafer theory is related to (generalized) PAS. 

Similarly, we define transport of a PAS $(\Omega,p,X,\Theta)$ to some frame $\Lambda$ by $(\Omega,p,Y,\Lambda)$, where
\begin{eqnarray*}
Y(\omega) = t_\Lambda(X(\omega)),
\end{eqnarray*}
The bpa of the PAS $(\Omega,p,Y,\Lambda)$ is given for $S \subseteq \Lambda$ by
\begin{eqnarray*}
m(S) = \sum_{t_\Lambda(T)) = S} m_1(T)
\end{eqnarray*}
in terms of the bpa $m_1$ of the PAS $(\Omega,p,X,\Theta)$.

The PAS considered originally in this paper are essentially identical to generalized PAS, where $X(\omega)$ are one-element sets, $X(\omega) =\{\theta\}$ for all $\omega \in \Omega$. In this sense generalized PAS are an extension of \textit{precise} PAS as discussed above. Another important special case of generalized PAS arises, if $X(\omega) = S \subseteq \Theta$ for all $\omega \in \Omega$. This is called a \textit{deterministic} PAS, since it fixes a constant subset $S$ of $\Theta$ for all possible assumptions. 

As before, for any subset $S$ of $\Theta$, we may ask to what degree the hypothesis that the unknown element of $\Theta$ is in $S$ is supported by a generalized PAS $(\Omega,p,X,\Theta)$. And similarly as before, we consider the set of assumptions $\omega$ for which $X(\omega)$ implies $S$, that is
\begin{eqnarray*}
s(S) = \{\omega \in \Omega: X(\omega) \subseteq S\}.
\end{eqnarray*}
Further, we may obtain the probability of this set as
\begin{eqnarray*}
sp(S) = P(s(S)) = \sum_{\omega \in s(S)} p(\omega) = \sum_{T \subseteq S} m(T).
\end{eqnarray*}
This is called the \textit{support function} of the PAS. It corresponds to belief functions in Demster-Shafer theory of evidence. We may also ask to what extend the hypothesis $S$ is not excluded by the PAS, that is
\begin{eqnarray*}
pl(S) = 1 - sp(S^c) = \sum_{T : T \cap S \not= \emptyset} m(T).
\end{eqnarray*}
This is called the \textit{plausibility function} of the PAS. These are well-known functions in Dempster-Shafer theory of evidence, see \cite{shafer76} for more details. For our purposes, the plausibility of singleton sets $S = \{\theta\}$, 
\begin{eqnarray*}
pl(\theta) = \sum_{\omega: \theta \in X(\omega)} p(\omega) = \sum_{T: \theta \in T} m(T)
\end{eqnarray*}
are of particular importance, as we shall see in the next section. This is also called the \textit{likelihood function} of the PAS. Note that if the PAS is precise, then the likelihood function is identical to the probability distribution induced by the PAS.

\subsection{Algebras of Set- and Probability-Potentials} \label{subsec:AlgOfProbPot}

With the operations of combination and projection or transport, the class of PAS, whether precise or generalized, acquire an algebraic flavor. The associated algebraic structures of PAS have been studied elsewhere in detail \cite{kohlas17}. Here we focus on related algebras of probability distributions or bpas associated with probabilistic argumentation structures. 
We have seen in Section \ref{sec:PAS} that any precise PAS $(\Omega,p,X,\Theta)$ induces a discrete probability distribution $sp$ on the frame $\Theta$, whereas generalized PAS induce basic probability assignments. With each concept both combination of (independent) PAS as well as projection or transport respectively can be expressed. So this indicates the existence of associated algebraic structures for probability distributions and bpas. It is well-known that bpas in the multivariate setting form a valuation algebra \cite{shenoyshafer90,kohlas03}. Here this will be extended to the general f.c.f setting and also a related algebraic structure for probability distributions will be presented and discussed. 

Let $(\mathcal{F},\mathcal{R})$ be a f.c.f. We extend the concept of bpa on such a f.c.f by considering $\Psi_\Theta$, the family of all functions $m : \mathcal{P}(\Theta) \rightarrow \mathbb{R}^+ \cup \{0\}$ which assigns all subsets of a frame $\Theta \in \mathcal{F}$, including the empty set, a nonnegative real number. Let
\begin{eqnarray*}
\Psi = \bigcup_{\Theta \in \mathcal{F}} \Psi_\Theta.
\end{eqnarray*}
We call the elements of $\Psi$ (non- normalized) set potentials; non-normalized because the sum of the $m(S)$ equals not necessarily one, and the empty set may have a positive value $m(\emptyset)$. We remark that to any non-null non-normalized bpa $m$ we may associate a uniquely determined normalized $m_n$ bpa in the following way:
\begin{eqnarray} \label{eq:ScaleSetPot}
m^\downarrow(S) = k^{-1}m(S) \textrm{ for}\ S \not= \emptyset, \quad m^\downarrow(\emptyset) = 0,
\end{eqnarray}
where
\begin{eqnarray*}
k = \sum_{S \not= \emptyset} m(S).
\end{eqnarray*}
This process is called normalization or scaling.

We focus first on set potentials and come back later to normalization. Within the family $\Psi$ of set potentials relative to a f.c.f $(\mathcal{F},\mathcal{R})$ we introduce three operations, namely
\begin{enumerate}
\item \textit{Labeling:} $d : \Psi \rightarrow \mathcal{F}$, defined by $m \mapsto d(m) = \Theta$ if $m \in \Psi_\Theta$.
\item \textit{Combination:} $\cdot : \Psi \times \Psi \rightarrow \Psi$, defined by $(m_1,m_2) \mapsto m_1 \cdot m_2$, where, for $S \subseteq d(m_1) \vee d(m_2)$, if $d(m_1) = \Theta_1$ and $d(p_2) = \Theta_2$,
\begin{eqnarray} \label{eq:CombOfSetPot}
 m_1 \cdot m_2(S) = \sum_{S_1 \in \Theta_1, S_2 \in \Theta_2: t_{\Theta_1 \vee \Theta_2}(S_1) \cap t_{\Theta_1 \vee \Theta_2}(S_2) = S} m_1(S_1)m_2(S_2).
\end{eqnarray} \label{eq:TranspOfSetPot}
\item \textit{Transport:} $t : \Psi \times \mathcal{F} \rightarrow \Psi$, defined by $(m,\Theta) \mapsto t_\Theta(m)$, where for $S \subseteq \Theta$,
\begin{eqnarray}
t_\Theta(m)(S) = \sum_{T \in d(m):t_\Theta(T) = S} m(T).
\end{eqnarray}
\end{enumerate}
Here combination is non-normalized, in contrast to the Dempser-Shafer rule of the previous section. The set potential $\mathbf{0}_\Theta(S) = 0$ for all subsets $S$ is the null element of combination on the frame $\Theta$, that is $mÊ\cdot \mathbf{0}_\Theta= \mathbf{0}_\Theta \cdot m = \mathbf{0}_\Theta$ for all $m$ with $d(m) = \Theta$; and the set potential $\mathbf{1}_\Theta(\Theta) = 1$, $\mathbf{1}_\Theta(S) = 0$ if $S \not= \Theta$, is the unit element on $\Theta$, that is $m \cdot \mathbf{1}_\Theta = \mathbf{1}_\Theta = m$ if $d(m) = \Theta$. Note that the combination of two non-null set potentials may well result in the null potential. Then the two set potentials are called contradictory.

The family of set potentials $\Psi$ on a f.c.f $(\mathcal{F},\mathcal{R})$ satisfies the following properties:
\begin{description}
\item[A0] \textit{Quasi-Separoid:} $(\mathcal{F},\leq,\bot)$ is a quasi-separoid.
\item[A1] \textit{Semigroup:} $(\Psi,\cdot)$ is a commutative semigroup.
\item[A2] \textit{Labeling:} $d(m_1 \cdot m_2) = d(m_1) \vee d(m_2)$, $d(t_\Lambda(m)) = \Lambda$.
\item[A3] \textit{Unit and Null:} For all $\Theta \in \mathcal{F}$ there is a unit element $\mathbf{1}_\Theta$ with $d(\mathbf{1}_\Theta) = \Theta$ and a null element $\mathbf{0}_\Theta$ with $d(\mathbf{0}_\Theta) = \Theta$ such that
\begin{enumerate}
\item $m \cdot \mathbf{1}_\Theta = m$ and $mÊ\cdot \mathbf{0}_\Theta = \mathbf{0}_\Theta$ if $d(m) = \Theta$,
\item $t_\Theta(m) = \mathbf{0}_\Theta$ if and only if $m = \mathbf{0}_{d(m)}$,
\item $m \cdot \mathbf{1}_\Theta = t_{d(m) \vee \Theta}(m)$,
\item $\mathbf{1}_\Theta \cdot \mathbf{1}_\Lambda = \mathbf{1}_{\Theta \vee \Lambda}$.
\end{enumerate}
\item[A4] \textit{Transport:} $\Theta_1 \bot \Theta_2 \vert \Lambda$ and $d(m) = \Theta_1$ imply 
\begin{eqnarray*}
t_{\Theta_2}(m) = t_{\Theta_2}(t_\Lambda(m)).
\end{eqnarray*}
\item[A5] \textit{Combination:} $\Theta_1 \bot \Theta_2 \vert \Lambda$ and $d(m_1) = \Theta_1$, $d(m_2) = \Theta_2$ imply
\begin{eqnarray*}
t_\Lambda(m_1 \cdot m_2) = t_\Lambda(m_1) \cdot t_\Lambda(m_2).
\end{eqnarray*}
\item[A6] \textit{Identity:} $d(m) = \Theta$ implies $t_\Theta(m) = m$.
\end{description}
Most of these properties are obvious. The important conditional independence properties A4 and A5 are proved in \cite{kohlasmonney95}. An algebraic structure satisfying these properties is called a \textit{(generalized) information algebra} in \cite{kohlas17} \footnote{In \cite{kohlas03}, only idempotent algebras are called information algebras, here we drop this requirement}. Since combination and transport of PAS are reflected by (non normalized) combination and transport of the associated set potentials, these operations on PAS may be as well executed in the algebra of bpa. As we shall see in the following Section Ê\ref{sec:LocComp} this may have great advantages.

Next we consider the family $\Phi_\Theta$ of functions $p : \Theta \rightarrow \mathbb{R}^+ \cup \{0\}$ of non-negative real-valued functions on the frame $\Theta$ and we define
\begin{eqnarray*}
\Phi = \bigcup_{\Theta \in \mathcal{F}} \Phi_\Theta.
\end{eqnarray*}
Since any such non-null function $p$ can be normalized to a probability distribution over $\Theta$ by
\begin{eqnarray} \label{eq:ScaledProbPot}
p^\downarrow(\theta) = k^{-1} p(\theta) \textrm{ for}\ \theta \in \Theta, 
\end{eqnarray}
where
\begin{eqnarray*}
k = \sum_{\theta \in \Theta} p(\theta),
\end{eqnarray*}
we call the elements of $\Phi$ \textit{probability potentials}, or short \textit{potentials}. Just as a probability distribution is essentially a special bpa, namely one whose probability assignments are different from zero only for one-element sets, such that $p(\theta) = m(\{\theta\})$ is a probability distribution, probability potentials are essentially identical to set potentials $m$ where $m(S) \not= 0$ only if $S$ is a one element set, and where $p(\theta) = m(\{\theta\})$ is a probability potential. Just as for bpas we may define for any set potential $m$,
\begin{eqnarray*}
pl_m(\theta) = \sum_{T:\theta \in T} m(T),
\end{eqnarray*}
the likelihood function of $m$. It is a probability potential. Now, clearly, any probability potential $p$ is the likelihood function of the set potential $m$ with $m(\{\theta\}) = p(\theta)$, hence $p = pl_m$. We are going to exploit this relation between set and probability potentials. In order to facilitate this discussion we we define for any potential $p$ the corresponding set potential $m_p$ by $m_p(\{\theta\}) = p(\theta)$, $m(S) = 0$ for any subset $S$ of cardinality different form one.

We define in $\Phi$ three operations similar to the ones for set potentials, namely
\begin{enumerate}
\item \textit{Labeling:} $d : \Phi \rightarrow \mathcal{F}$, defined by $p \mapsto d(p) = \Theta$ if $p \in \Phi_\Theta$.
\item \textit{Combination:} $\cdot : \Phi \times \Phi \rightarrow \Phi$, defined by $(p_1,p_2) \mapsto p_1 \cdot p_2$, where, for $\theta \in d(m_1) \vee d(m_2)$, if $d(p_1) = \Theta_1$ and $d(p_2) = \Theta_2$,
\begin{eqnarray} \label{eq:CombOfProbPot}
 p_1 \cdot p_2(\theta) = p_1(t_{\Theta_1}(\theta)) \cdot p_2(t_{\Theta_2}(\theta))).
\end{eqnarray} 
\item \textit{Transport:} $\pi : \Phi \times \mathcal{F} \rightarrow \Phi$, defined by $(p,\Lambda) \mapsto \pi_\Lambda(p) = pl_{t_\Lambda(m_p)}$.
\end{enumerate}

The transport operation maps a potential to a potential and can be described more explicitly by
\begin{eqnarray} \label{eq:TranspOfProbPot}
\pi_\Lambda(p)(\lambda) = \sum_{\theta \in \Theta:\lambda \in t_\Lambda(\theta)} p(\theta)
\end{eqnarray}
for any $\lambda \in \Lambda$, if $d(p) = \Theta$. In case $\Lambda \leq \Theta$ this corresponds to projection,
\begin{eqnarray*}
\pi_\Lambda(p)(\lambda) = \sum_{\theta \in \tau(\lambda)} p(\theta)
\end{eqnarray*}
if $\tau$ is the refining of $\Lambda$ to $\Theta$. If $p$ is a probability distribution, then so is $\pi_\Lambda(p)$ in this case.

Now, the map $m \mapsto pl_m$ is a map from set potentials to probability potentials with nice properties, which are important for computing with probability potentials. The main result is the following one:

\begin{theorem} \label{th:CombOfPl}
If $m_1$ and $m_2$ are set potentials in $\Psi$, then in $\Phi$ we have
\begin{eqnarray} \label{eq:CompofLikeliohood}
pl_{m_1 \cdot m_2} = pl_{m_1} \cdot pl_{m_2}.
\end{eqnarray}
\end{theorem}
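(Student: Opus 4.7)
The plan is to compute both sides of (\ref{eq:CompofLikeliohood}) pointwise at an arbitrary $\theta \in \Theta_1 \vee \Theta_2$, where $\Theta_i = d(m_i)$, and show they agree by a straightforward interchange of summation combined with the crucial observation that refining images form partitions. Let $\tau_i : \Theta_i \to 2^{\Theta_1 \vee \Theta_2}$ be the refinings, so that $t_{\Theta_1 \vee \Theta_2}(S_i) = \bigcup_{\theta_i \in S_i} \tau_i(\theta_i)$, and note that since $\Theta_i \leq \Theta_1 \vee \Theta_2$, the transport $t_{\Theta_i}(\theta)$ is a singleton, whose unique element I will also denote $t_{\Theta_i}(\theta)$.

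Unfolding the definition of $pl$ and of combination (\ref{eq:CombOfSetPot}), I would write
\begin{eqnarray*}
pl_{m_1 \cdot m_2}(\theta)
&=& \sum_{S : \theta \in S} (m_1 \cdot m_2)(S) \\
&=& \sum_{S : \theta \in S} \;\; \sum_{(S_1,S_2) : t_{\Theta_1 \vee \Theta_2}(S_1) \cap t_{\Theta_1 \vee \Theta_2}(S_2) = S} m_1(S_1) m_2(S_2).
\end{eqnarray*}
Since every pair $(S_1,S_2)$ determines a unique $S$, I can collapse the double sum into a single sum over all pairs $(S_1,S_2)$ with $\theta \in t_{\Theta_1 \vee \Theta_2}(S_1) \cap t_{\Theta_1 \vee \Theta_2}(S_2)$.

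The key reduction step is then: because $\{\tau_1(\theta_1) : \theta_1 \in \Theta_1\}$ is a partition of $\Theta_1 \vee \Theta_2$, the condition $\theta \in t_{\Theta_1 \vee \Theta_2}(S_1) = \bigcup_{\theta_1 \in S_1} \tau_1(\theta_1)$ is equivalent to the single requirement $t_{\Theta_1}(\theta) \in S_1$, and analogously for index $2$. Hence the summation decouples:
\begin{eqnarray*}
pl_{m_1 \cdot m_2}(\theta)
= \left( \sum_{S_1 : t_{\Theta_1}(\theta) \in S_1} m_1(S_1) \right) \left( \sum_{S_2 : t_{\Theta_2}(\theta) \in S_2} m_2(S_2) \right)
= pl_{m_1}(t_{\Theta_1}(\theta)) \cdot pl_{m_2}(t_{\Theta_2}(\theta)),
\end{eqnarray*}
which is precisely $(pl_{m_1} \cdot pl_{m_2})(\theta)$ by definition (\ref{eq:CombOfProbPot}).

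There is no real obstacle to overcome, only a bookkeeping point that deserves care: one must verify that the passage from \emph{there exists $\theta_i \in S_i$ with $\theta \in \tau_i(\theta_i)$} to \emph{$t_{\Theta_i}(\theta) \in S_i$} is genuinely an equivalence, which rests on property 2 in the definition of a refining (disjointness of the images $\tau_i(\theta_i)$) together with property 3 (they cover $\Theta_1 \vee \Theta_2$). Once this is noted, the rest is pure unfolding of definitions, and the Fubini-style exchange is unconditional since all terms are nonnegative.
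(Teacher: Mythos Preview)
Your proof is correct and follows essentially the same route as the paper's own argument: expand $pl_{m_1\cdot m_2}(\theta)$ via the combination rule, collapse the double sum to a sum over pairs $(S_1,S_2)$ with $\theta$ in both refined images, use the partition property of refinings to rewrite each condition as $t_{\Theta_i}(\theta)\in S_i$, and factor. Your explicit justification of the equivalence $\theta \in t_{\Theta_1\vee\Theta_2}(S_i) \Leftrightarrow t_{\Theta_i}(\theta)\in S_i$ via the disjointness and covering properties of refinings is a welcome clarification that the paper leaves implicit.
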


\begin{proof}
Assume that $m_1$ and $m_2$ are set potentials on domains $\Theta_1$ and $\Theta_2$ and consider an element $\theta \in \Theta_1 \vee \Theta_2$. Then we have
\begin{eqnarray*}
\lefteqn{pl_{m_1 \cdot m_2}(\theta) = \sum_{S:\theta \in S} m_1 \cdot m_2(S) }Ê\\
&&=  \sum_{S:\theta \in S} \sum \{m_1(A)m_2(B):t_{\Theta_1 \vee \Theta_2}(A) \cap t_{\Theta_1 \vee \Theta_2}(B) =S\}.
\end{eqnarray*}
Now $\theta \in S$, where $S = t_{\Theta_1 \vee \Theta_2}(A) \cap t_{\Theta_1 \vee \Theta_2}(B)$, holds if and only if $\theta \in t_{\Theta_1 \vee \Theta_2}(A)$ and $\theta \in t_{\Theta_1 \vee \Theta_2}(B)$. Therefore we have
\begin{eqnarray*}
pl_{m_1 \cdot m_2}(\theta) = \sum_{A:\theta \in t_{\Theta_1 \vee \Theta_2}(A)} m_1(A) \cdot \sum_{B:\theta \in t_{\Theta_1 \vee \Theta_2}(B)} m_2(B).
\end{eqnarray*}
Further $\theta \in t_{\Theta_1 \vee \Theta_2}(A)$ holds if and only if $t_{\Theta_1}(\theta) \in A$ and, similarly, $\theta \in t_{\Theta_1 \vee \Theta_2}(B)$ holds if and only if $t_{\Theta_2}(\theta) \in B$. So we obtain finally
\begin{eqnarray*}
\lefteqn{ pl_{m_1 \cdot m_2}(\theta) = \sum_{A:t_{\Theta_1}(\theta) \in A} m_1(A) \cdot \sum_{B:t_{\Theta_2}(\theta) \in B} m_2(B) }\\
&&= pl_{m_1}(t_{\Theta_1}(\theta)) \cdot pl_{m_2}(t_{\Theta_2}(\theta)) = pl_{m_1} \cdot pl_{m_2}(\theta).
\end{eqnarray*}
This proves the claim.
\end{proof}

The potentials in $\Phi$ do not form an information algebra under the operations of combination and transport; in particular the Combination and Transport Axioms A4 and A5 do not hold in full generality. There are however weaker results about the interplay of combination and transport, which make the algebra $\Phi$ still an interesting and useful structure, especially for local computation (see Section \ref{subsec:LocComp}) and maximization (most probable elements, see Section \ref{sec:MostProb}). Some of these results follow here, for further ones, see Section \ref{sec:MostProb}

\begin{theorem} \label{th:WeakCombAxiom1}
Assume $\Theta_1 \bot \Theta_2 \vert \Lambda$ and $p_1$. $p_2$ and $p$ potentials in $\Phi$ with $d(p_1) = \Theta_1$, $d(p_2) = \Theta_2$ and $d(p) = \Lambda$. Then
\begin{eqnarray*}
\pi_\Lambda(p_1 \cdot p_2 \cdot p) = \pi_\Lambda(p_1) \cdot \pi_\Lambda(p_2) \cdot p.
\end{eqnarray*}
\end{theorem}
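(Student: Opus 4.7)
The plan is to unfold the left-hand side using the explicit formulas (\ref{eq:TranspOfProbPot}) and (\ref{eq:CombOfProbPot}), reparameterize the resulting sum through the minimal common refinement, and then apply the bijection supplied by Lemma \ref{CondIndepRes3} to separate the two independent factors. Setting $\Xi = \Theta_1 \vee \Theta_2 \vee \Lambda$ and fixing $\lambda \in \Lambda$, the transport definition gives
\begin{eqnarray*}
\pi_\Lambda(p_1 \cdot p_2 \cdot p)(\lambda) = \sum_{\zeta \in \Xi:\ t_\Lambda(\zeta) = \lambda} p_1(t_{\Theta_1}(\zeta))\, p_2(t_{\Theta_2}(\zeta))\, p(t_\Lambda(\zeta)).
\end{eqnarray*}
Since $\Lambda \leq \Xi$, the transport $t_\Lambda(\zeta)$ is the singleton $\{\lambda\}$, so $p(t_\Lambda(\zeta)) = p(\lambda)$ factors out of the sum.

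Next I would reparameterize the remaining sum through $\Theta_1 \vee \Theta_2$. Writing $\tau$ and $\mu$ for the refinings of $\Theta_1 \vee \Theta_2$ and of $\Lambda$ into $\Xi$, the minimal common refinement axiom (\ref{eq:SuffQSep}) says that each $\zeta$ with $t_\Lambda(\zeta) = \lambda$ is uniquely expressed as $\{\zeta\} = \tau(\theta) \cap \mu(\lambda)$ for a unique $\theta \in R_\lambda(\Theta_1 \vee \Theta_2)$, and composition of refinings yields $t_{\Theta_i}(\zeta) = t_{\Theta_i}(\theta)$. Hence the sum becomes
\begin{eqnarray*}
\pi_\Lambda(p_1 \cdot p_2 \cdot p)(\lambda) = p(\lambda) \sum_{\theta \in R_\lambda(\Theta_1 \vee \Theta_2)} p_1(t_{\Theta_1}(\theta))\, p_2(t_{\Theta_2}(\theta)).
\end{eqnarray*}

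At this point the conditional independence assumption enters through Lemma \ref{CondIndepRes3}, item 3: the map $\theta \mapsto (t_{\Theta_1}(\theta), t_{\Theta_2}(\theta))$ is a bijection of $R_\lambda(\Theta_1 \vee \Theta_2)$ onto $R_\lambda(\Theta_1) \times R_\lambda(\Theta_2)$. The sum therefore factors as
\begin{eqnarray*}
p(\lambda) \left( \sum_{\theta_1 \in R_\lambda(\Theta_1)} p_1(\theta_1) \right) \left( \sum_{\theta_2 \in R_\lambda(\Theta_2)} p_2(\theta_2) \right).
\end{eqnarray*}
Since $R_\lambda(\Theta_i) = \{\theta_i \in \Theta_i : \lambda \in t_\Lambda(\theta_i)\}$, each parenthesised sum equals $\pi_\Lambda(p_i)(\lambda)$ by (\ref{eq:TranspOfProbPot}). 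Because all three factors $\pi_\Lambda(p_1)$, $\pi_\Lambda(p_2)$, $p$ already live on $\Lambda$, the combination rule (\ref{eq:CombOfProbPot}) collapses to pointwise multiplication, and the expression coincides with $(\pi_\Lambda(p_1) \cdot \pi_\Lambda(p_2) \cdot p)(\lambda)$, completing the verification.

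The main hurdle I anticipate is the reparameterization step: one must verify that $\zeta \leftrightarrow \theta$ is a well-defined bijection between the fiber over $\lambda$ in $\Xi$ and $R_\lambda(\Theta_1 \vee \Theta_2)$, and that transport composes correctly so that $t_{\Theta_i}(\zeta) = t_{\Theta_i}(\theta)$. Both rest on the minimal common refinement axiom and composition of refinings, but this is the pivot on which the subsequent application of Lemma \ref{CondIndepRes3} depends; once the sum has been pushed down to $\Theta_1 \vee \Theta_2$ the remainder is purely algebraic.
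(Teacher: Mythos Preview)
Your argument is correct but follows a different path from the paper. The paper proves the theorem by lifting everything to the surrounding information algebra of set potentials: it observes that $m_{p_1\cdot p_2\cdot p} = m_{p_1}\cdot m_{p_2}\cdot m_p$, invokes the Combination Axiom A5 for set potentials (whose validity is quoted from \cite{kohlasmonney95}) to get $t_\Lambda(m_{p_1}\cdot m_{p_2}\cdot m_p) = t_\Lambda(m_{p_1})\cdot t_\Lambda(m_{p_2})\cdot t_\Lambda(m_p)$, and then pushes the result back down to probability potentials via the homomorphism of Theorem~\ref{th:CombOfPl}. Your proof instead works directly at the level of $\Phi$: you expand the transport sum on $\Xi$, reduce it to a sum over $R_\lambda(\Theta_1\vee\Theta_2)$ via the minimal-common-refinement property, and then factor it using the bijection of Lemma~\ref{CondIndepRes3}. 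This is a genuinely more elementary route---it avoids appealing to the (externally proved) information-algebra axioms for set potentials---and in effect gives an independent, self-contained verification of this particular instance of A5. The paper's route, by contrast, makes clearer that the result is a structural consequence of the embedding $p\mapsto m_p$ and Theorem~\ref{th:CombOfPl}, which is the viewpoint exploited repeatedly in the subsequent Theorems~\ref{th:WeakCombAxiom2} and~\ref{th:ProjAxiom}.
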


\begin{proof}
By definition, we have
\begin{eqnarray*}
\pi_\Lambda(p_1 \cdot p_2 \cdot p) = pl_{t_\lambda(m_{p_1 \cdot p_2 \cdot p})}.
\end{eqnarray*}
Further, for $\theta \in \Theta_1 \vee \Theta_2 \vee \Lambda$,
\begin{eqnarray*}
\lefteqn{m_{p_1 \cdot p_2 \cdot p}(\{\theta\}) = (p_1 \cdot p_2 \cdot p)(\theta) = p_1(t_{\Theta_1}(\theta)) \cdot p_2(t_{\Theta_2}(\theta)) \cdot p(t_\Lambda(\theta))} \\
&&= m_{p_1}(\{t_{\Theta_1}(\theta)\}) \cdot m_{p_2}(\{t_{\Theta_2}(\theta)\}) \cdot m_p(\{t_{\Lambda}(\theta)\}) = (m_{p_1} \cdot m_{p_2} \cdot m_p)(\theta),
\end{eqnarray*}
hence we conclude that $m_{p_1 \cdot p_2 \cdot p} = m_{p_1} \cdot m_{p_2} \cdot m_p$. Further, by the Combination Axiom A4 for set potentials, $\Theta_1 \bot \Theta_2 \vert \Lambda$ implies
\begin{eqnarray*}
t_\Lambda(m_{p_1} \cdot m_{p_2} \cdot m_p) = t_\Lambda(m_{p_1}) \cdot t_\Lambda(m_{p_2}) \cdot t_\Lambda(m_p).
\end{eqnarray*}
So, by Theorem \ref{th:CombOfPl} we obtain finally,
\begin{eqnarray*}
\lefteqn{\pi_\Lambda(p_1 \cdot p_2 \cdot p) = pl_{t_\Lambda(m_{p_1}) \cdot t_\Lambda(m_{p_2}) \cdot t_\Lambda(m_p)}} Ê\\
&&= pl_{t_\Lambda(m_{p_1})} \cdot pl_{t_\Lambda(m_{p_2})} \cdot pl_{t_\Lambda(m_p)} = \pi_\Lambda(p_1) \cdot \pi_\Lambda(p_2) \cdot p,
\end{eqnarray*}
since $pl_{m_p} = p$. This concludes the proof.
\end{proof}

\begin{theorem} \label{th:WeakCombAxiom2}
Assume $\Lambda \leq \Lambda_1 \leq \Theta$ and $p$ and $q$ potentials with $d(p) = \Theta$, $d(q) = \Lambda$. Then
\begin{eqnarray*}
\pi_{\Lambda_1}(p \cdot q) = \pi_{\Lambda_1}(p) \cdot q.
\end{eqnarray*}
\end{theorem}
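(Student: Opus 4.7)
The plan is to evaluate both sides of the asserted equality pointwise on $\Lambda_1$ and reduce to a single expression using the fact that a refining composes correctly: since $\Lambda \leq \Lambda_1 \leq \Theta$, the transport $t_\Lambda$ restricted to $\Theta$ factors through $\Lambda_1$, which will allow the factor involving $q$ to be pulled out of a sum indexed by the refining fibre.

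First I would unfold the left-hand side. The combination $p \cdot q$ lives on $d(p) \vee d(q) = \Theta \vee \Lambda = \Theta$, and by (\ref{eq:CombOfProbPot}),
\begin{eqnarray*}
(p \cdot q)(\theta) = p(t_\Theta(\theta)) \cdot q(t_\Lambda(\theta)) = p(\theta) \cdot q(t_\Lambda(\theta))
\end{eqnarray*}
for $\theta \in \Theta$. Then, because $\Lambda_1 \leq \Theta$, the transport formula (\ref{eq:TranspOfProbPot}) specializes to the projection form,
\begin{eqnarray*}
\pi_{\Lambda_1}(p \cdot q)(\lambda_1) = \sum_{\theta \in \tau(\lambda_1)} p(\theta) \cdot q(t_\Lambda(\theta)),
\end{eqnarray*}
where $\tau$ denotes the refining of $\Lambda_1$ to $\Theta$.

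The key step is the observation that for every $\theta \in \tau(\lambda_1)$ one has $t_\Lambda(\theta) = t_\Lambda(\lambda_1)$. This is immediate from the Composition of Refinings in Definition \ref{def:fcf}: the refining from $\Lambda$ to $\Theta$ equals the composition of the refining from $\Lambda$ to $\Lambda_1$ with $\tau$, so the coarsening of any $\theta \in \tau(\lambda_1)$ down to $\Lambda$ is exactly $t_\Lambda(\lambda_1)$. Consequently the factor $q(t_\Lambda(\theta))$ does not depend on $\theta$ in the summation above and may be pulled out, giving
\begin{eqnarray*}
\pi_{\Lambda_1}(p \cdot q)(\lambda_1) = q(t_\Lambda(\lambda_1)) \sum_{\theta \in \tau(\lambda_1)} p(\theta) = q(t_\Lambda(\lambda_1)) \cdot \pi_{\Lambda_1}(p)(\lambda_1).
\end{eqnarray*}

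Finally I would unfold the right-hand side: since $d(\pi_{\Lambda_1}(p)) \vee d(q) = \Lambda_1 \vee \Lambda = \Lambda_1$, the combination rule gives
\begin{eqnarray*}
(\pi_{\Lambda_1}(p) \cdot q)(\lambda_1) = \pi_{\Lambda_1}(p)(\lambda_1) \cdot q(t_\Lambda(\lambda_1)),
\end{eqnarray*}
which matches the previous display. The only real obstacle is keeping the two roles of $t_\Lambda$ straight — as a map on $\Theta$ and as a map on $\Lambda_1$ — but this is resolved cleanly by invoking composition of refinings, after which the identity follows by inspection.
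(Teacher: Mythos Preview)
Your argument is correct. You compute both sides pointwise using the explicit combination formula (\ref{eq:CombOfProbPot}) and the projection form of (\ref{eq:TranspOfProbPot}), and the only nontrivial step---that $t_\Lambda(\theta) = t_\Lambda(\lambda_1)$ whenever $\theta \in \tau(\lambda_1)$---is exactly the Composition of Refinings condition in Definition~\ref{def:fcf}, as you say.

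The paper proceeds differently. Rather than unfolding the concrete formulas, it lifts everything into the surrounding information algebra of set potentials: it writes $\pi_{\Lambda_1}(p \cdot q) = pl_{t_{\Lambda_1}(m_p \cdot m_q)}$, invokes the Combination Axiom A5 for set potentials (using $\Theta \bot \Lambda_1 \vert \Lambda_1$) to split the transport, and then applies the homomorphism property of Theorem~\ref{th:CombOfPl}. This establishes the special case $\Lambda_1 = \Lambda$ first, and the general case $\Lambda \leq \Lambda_1 \leq \Theta$ is then obtained by the trick $p \cdot q = p \cdot (\mathbf{1}_{\Lambda_1} \cdot q)$, which reduces to the special case with $q$ replaced by a potential on $\Lambda_1$. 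Your route is more elementary and handles the general case in one stroke; the paper's route is more uniform with the proofs of Theorems~\ref{th:WeakCombAxiom1} and~\ref{th:ProjAxiom}, which all exploit the embedding $p \mapsto m_p$ and the axioms of the set-potential algebra rather than direct computation.
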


\begin{proof}
Assume first that $\Lambda_1 = \Lambda$. Then, by definition, and since $m_{p \cdot q} = m_p \cdot m_q$ (see the proof of the previous theorem),
\begin{eqnarray*}
\pi_{\Lambda_1}(p \cdot q) = pl_{t_\Lambda(m_{p \cdot q})} = pl_{t_\Lambda(m_p \cdot m_q)},
\end{eqnarray*}
Now, $\Theta \bot \Lambda \vert \Lambda$. Therefore, by the Combination Axiom A4 for set potentials, be obtain $t_\Lambda(m_p \cdot m_q) = t_\Lambda(m_p) \cdot m_q$, hence (Theorem \ref{th:CombOfPl})
\begin{eqnarray} \label{eq:ProjOfComb}
\pi_{\Lambda_1}(p \cdot q) = pl_{t_\Lambda(m_p)} \cdot pl_{m_q} = \pi_\Lambda(p) \cdot q.
\end{eqnarray}

Now, if $\Lambda \leq \Lambda_1 \leq \Theta$, then 
\begin{eqnarray*}
\pi_{\Lambda_1}(p \cdot q) = \pi_{\Lambda_1}(p \cdot (\mathbf{1}_{\Lambda_1} \cdot q)).
\end{eqnarray*}
Now we apply (\ref{eq:ProjOfComb}), since $d(\mathbf{1}_{\Lambda_1} \cdot q) = \Lambda_1$, and obtain
\begin{eqnarray*}
\pi_{\Lambda_1}(p \cdot q) = \pi_{\Lambda_1}(p) \cdot \mathbf{1}_{\Lambda_1} \cdot q =  \pi_{\Lambda_1}(p) \cdot q,
\end{eqnarray*}
since the unit element $\mathbf{1}_{\Lambda_1}$ is absorbed by $\pi_{\Lambda_1}(p)$. This proves the claim.
\end{proof}

Note the special case $\Lambda_1 = \Lambda$, which is important.

\begin{theorem} \label{th:ProjAxiom}
Assume $\Lambda \leq \Theta$ and $p$ a potential with $d(p) \geq \Theta$. Then 
\begin{eqnarray*}
\pi_\Lambda(p) = \pi_\Lambda(\pi_\Theta(p)).
\end{eqnarray*}
\end{theorem}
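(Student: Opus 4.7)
Write $\Xi = d(p)$, so the hypothesis becomes $\Lambda \leq \Theta \leq \Xi$. The strategy is to reduce both sides to an explicit marginalisation along a refining and then collapse a two-step marginalisation into a one-step one via the Composition of Refinings axiom. I will not need to unfold the general definition $\pi_\Lambda(p) = pl_{t_\Lambda(m_p)}$; instead I will use the simplified formula noted in the paper immediately after~(\ref{eq:TranspOfProbPot}), which says that whenever $\Lambda \leq d(q)$ with refining $\tau$, one has $\pi_\Lambda(q)(\lambda) = \sum_{\theta \in \tau(\lambda)} q(\theta)$.

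First I would fix the three refinings $\tau_1 : \Lambda \rightarrow 2^\Theta$, $\tau_2 : \Theta \rightarrow 2^\Xi$, and $\tau : \Lambda \rightarrow 2^\Xi$ that exist by $\Lambda \leq \Theta$, $\Theta \leq \Xi$, and (transitively) $\Lambda \leq \Xi$. Item~1 of Definition~\ref{def:fcf} (Composition of Refinings) together with Item~3 (Identity of Refinings) forces $\tau = \tau_2 \circ \tau_1$; that is, for every $\lambda \in \Lambda$,
\[
\tau(\lambda) \;=\; \bigcup_{\theta \in \tau_1(\lambda)} \tau_2(\theta),
\]
and this union is disjoint, by property~(2) of the refining $\tau_2$ (the sets $\tau_2(\theta)$ are pairwise disjoint as $\theta$ varies).

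Next I would apply the simplified transport formula twice. Since $\Theta \leq \Xi$, the potential $\pi_\Theta(p)$ lives on $\Theta$ and is given by $\pi_\Theta(p)(\theta) = \sum_{\xi \in \tau_2(\theta)} p(\xi)$. Since $\Lambda \leq \Theta$, a second application then gives
\[
\pi_\Lambda(\pi_\Theta(p))(\lambda) \;=\; \sum_{\theta \in \tau_1(\lambda)} \pi_\Theta(p)(\theta) \;=\; \sum_{\theta \in \tau_1(\lambda)} \sum_{\xi \in \tau_2(\theta)} p(\xi).
\]
By the disjoint-union identity of the previous paragraph, the double sum collapses to $\sum_{\xi \in \tau(\lambda)} p(\xi)$, which is precisely $\pi_\Lambda(p)(\lambda)$ by a third use of the simplified formula (now with $\Lambda \leq \Xi$). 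Since this holds for every $\lambda \in \Lambda$, the required equality of potentials follows.

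I do not foresee a serious obstacle: the statement is essentially the lift of the Composition of Refinings axiom from elements of frames to probability potentials. The only small points requiring care are that the three refinings above are genuinely the ones determined by the partial order (so that the Composition and Identity of Refinings axioms can be chained), and that the simplified marginalisation formula is applicable at each step because the target frame is always below the source domain.
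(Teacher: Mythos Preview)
Your proof is correct, but it takes a genuinely different route from the paper's argument.

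The paper does not compute with refinings at all. Instead it works inside the surrounding information algebra of set potentials: writing $\Theta_1 = d(p)$, it invokes the quasi-separoid property $\Theta_1 \bot \Theta \vert \Theta$ (hence $\Theta_1 \bot \Lambda \vert \Theta$ by C3, since $\Lambda \leq \Theta$), and then applies the Transport Axiom A4 for set potentials to obtain $t_\Lambda(m_p) = t_\Lambda(t_\Theta(m_p))$. Taking $pl$ of both sides and using the definition $\pi_\Lambda(q) = pl_{t_\Lambda(m_q)}$ together with the observation that $t_\Theta(m_p) = m_{\pi_\Theta(p)}$ (since projection of a singleton-supported set potential to a coarser frame stays singleton-supported) yields the claim.

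Your approach is more elementary and self-contained: you bypass the set-potential embedding entirely and instead reduce the statement to the Composition of Refinings and Identity of Refinings axioms of the f.c.f, which lets you collapse the double sum directly. This is arguably cleaner for this particular statement. The paper's approach, by contrast, is more in keeping with the methodology of Section~\ref{subsec:AlgOfProbPot}, where properties of $\Phi$ are systematically derived from the established information-algebra axioms A0--A6 of $\Psi$ via the map $m \mapsto pl_m$; it highlights that the projection property is inherited structurally rather than proved from scratch.
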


\begin{proof}
Let $d(p) = \Theta_1$. Then, by definition $\pi_\Lambda(p) = pl_{t_\Lambda(m_p)}$. Further, since $\Theta_1 \bot \Theta \vert \Theta$, hence $\Theta_1 \bot \Lambda \vert \Theta$, by the Transport Axiom A5 for set potentials, we have $t_\Lambda(m_p) = t_\Lambda(t_\Theta(m_p))$. So, it follows that
\begin{eqnarray*}
\pi_\Lambda(P) = pl_{t_\Lambda(t_\Theta(m_p))} = \pi_\Lambda(pl_{t_\Theta(m_p)}) = \pi_\Lambda(\pi_\Theta(p))
\end{eqnarray*}
as claimed.
\end{proof}

We return now to normalization or scaling. For a set potential $m$ let $m^\downarrow$ be the the normalized potential, that is the bpa, associated with $m$, see (\ref{eq:ScaleSetPot}). Similarly, for a probability potential $p$ let $p^\downarrow$ be the associated normalized probability distribution, see (\ref{eq:ScaledProbPot}). We may define combination in the family $\Psi^\downarrow$ of bpa and the family of probability distributions $\Phi^\downarrow$ by
\begin{eqnarray*}
(m_1 \cdot m_2)^\downarrow, \quad (p_1 \cdot p_2)^\downarrow,
\end{eqnarray*}
where the dot denotes combination among set or probability potentials respectively. This is simply normalized combination like Dempster's rule. Similarly, we may define transport by
\begin{eqnarray*}
(t_\Theta(m))^\downarrow, \quad (\pi_\Theta(p))^\downarrow.
\end{eqnarray*}
We claim that for any set potentials $m_1,m_2$ or $m$ or probability potentials $p_1.p_2$ or $p$.
\begin{eqnarray*}
(m_1 \cdot m_2)^\downarrow = (m_1^\downarrow \cdot m_2^\downarrow)^\downarrow, \quad (p_1 \cdot p_2)^\downarrow = (p_1^\downarrow \cdot p_2^\downarrow)^\downarrow
\end{eqnarray*}
and 
\begin{eqnarray*}
(t_\Theta(m))^\downarrow = (t_\Theta(m^\downarrow))^\downarrow   \quad (\pi_\Theta(p))^\downarrow = (\pi_\Theta(p^\downarrow))^\downarrow.
\end{eqnarray*}
This has been proved in \cite{kohlas03} in the multivariate setting; it certainly holds also in the present case. These results say - loosely speaking - that normalization defines an information algebra homorphism, so that the normalized potentials form themselves information algebras.


\subsection{Commutative Families of Compatible frames} \label{subsec:CommFcF}

In this section commutative f.c.f $(\mathcal{F},\mathcal{R})$  are considered. Such a family of frames is characterized by the following two conditions:
\begin{enumerate}
\item $(\mathcal{F};\leq)$ is a lattice,
\item for all $\Theta,\Lambda \in \mathcal{F}$, we have $\Theta \bot \Lambda \vert \Theta \wedge \Lambda$.
\end{enumerate}
In particular the second condition is very strong, see Section \ref{subsec:CondInd} on this subject. Another way to express this condition is that if $\mathcal{P}_\Theta$ and $\mathcal{P}_\Lambda$ are the two partitions of $\Theta \vee \Lambda$ induced by the refinings of frames $\Theta$ and $\Lambda$ and if $v_\Theta$ and $v_\Lambda$ are the saturation mappings associated to these two partitions, defined by
\begin{eqnarray*}
v_\Theta(S) = \cup \{B \in \mathcal{P}_\Theta:B \cap S \not= \emptyset\}, v_\Lambda(S) = \cup \{B \in \mathcal{P}_\Lambda:B \cap S \not= \emptyset\},
\end{eqnarray*}
then $v_\Theta \circ v_\Lambda = v_\Lambda \circ v_\Theta$. Therefore it is said that the partitions commute and so therefore we call also such a f.c.f \textit{commutative}. Finally, commuting partitions are also called type I partitions \cite{graetzer78}. In \cite{menginwilson99} the condition above is called conditional independence (of frames). But in our development, conditional independence of frames means something more general, as explained in Section \ref{sec:FcF}.

Commutative f.c.f. are rather special. But the very important multivariate model belongs to this class. Or, in other words, commutative f.c.f are a generalization of multivariate models, keeping most of the desirable properties of it. In fact commutative f.c.f have interesting properties, not shared with general f.c.f. 

Reconsider the information algebra $\Psi$ of set potentials introduced in the previous section, but this time on a \textit{commutative} f.c.f $(\mathcal{F},\mathcal{R})$. In \cite{kohlas17} the following has been shown: If a new operator $\pi_\Lambda(m)$ for set potentials is defined by
\begin{eqnarray*}
\pi_\Lambda(m) = t_\Lambda(m) \textrm{ for}\ \Lambda \leq d(m)
\end{eqnarray*}
then we have a system where frames and set potential satisfy the following conditions:
\begin{description}
\item[B0] \textit{Lattice:} $(\mathcal{F};\leq)$ is a lattice.
\item[B1] \textit{Semigroup:} $(\Psi;\cdot)$ is a commutative semigroup.
\item[B2] \textit{Labeling:} $d(m_1 \cdot m_2) = d(m_1) \vee d(m_2)$ and $d(\pi_\Lambda(m)) = \Lambda$.
\item[B3] \textit{Unit and Null:} For all $\Theta \in \mathcal{F}$ there is a unit element $\mathbf{1}_\Theta$ with $d(\mathbf{1}_\Theta) = \Theta$ and a null element $\mathbf{0}_\Theta$ with $d(\mathbf{0}_\Theta) = \Theta$ such that
\begin{enumerate}
\item $m \cdot \mathbf{1}_\Theta  = m$ and $mÊ\cdot \mathbf{0}_\Theta = \mathbf{0}_\Theta$ if $d(m) = \Theta$,
\item if $\Lambda \leq \Theta = d(m)$, then $\pi_\Lambda(m) = \mathbf{0}_\Lambda$ if and only if $m = \mathbf{0}_\Theta$,
\item $\mathbf{1}_\Theta \cdot \mathbf{1}_\Lambda = \mathbf{1}_{\Theta \vee \Lambda}$.
\end{enumerate}
\item[B4] \textit{Projection:} If $\Lambda_1 \leq \Lambda_2 \leq d(m)$, then
\begin{eqnarray*}
\pi_{\Lambda_1}(\pi_{\Lambda_2}(m)) = \pi_{\Lambda_1}(m).
\end{eqnarray*}
\item[B5] \textit{Combination:} If $d(m_1) = \Theta$ and $d(m_2) = \Lambda$, then then
\begin{eqnarray*}
\pi_\Theta(m_1 \cdot m_2) = m_1 \cdot \pi_{\Theta \wedge \Lambda}(m_2).
\end{eqnarray*}
\end{description}
Such a system is called a \textit{valuation algebra}, since its axioms correspond to the older axiomatic systems introduced by \cite{shenoyshafer90}, see also \cite{kohlas03}.

According to the previous section, the map $m \rightarrow pl_m$ is a semi-group homomorphism from the set potentials $\Psi$ onto the probability potentials $\Phi$. Moreover, now we have for the transport operator of potentials also $\pi_\Lambda(pl_m) = pl_{t_\Lambda(m)} = pl_{\pi_\Lambda(m)}$, for $\Lambda \leq d(m)$, and where on the right hand side $\pi$ denotes the projection operator of bpa. Therefore, the map $m \mapsto pl_m$ is now a whole valuation algebra homomorphism.  This implies that in the case of commutative f.c.f, $\Phi$ is also a valuation algebra, satisfying conditions B0 to B5 above. The most popular and well-known version, indeed the almost uniquely considered instance in the literature, of this valuation algebra is the one relative to the multivariate model of frames; and this is also the version originally proposed in \cite{shenoyshafer90}. Usually this system is considered in the context of causal modeling or Bayesian networks rather than functional modeling like in this paper. Thus, this classical system of probability potentials can be extended to commutative f.c.f, see also \cite{kohlas17} where it is shown that the same effect is also valid for abstract information and valuation algebras. In the next section, we present, in a special multivariate setting, another valuation algebra related to probabilistic argumentation systems. To conclude, we remark that the units in the valuation algebra of set potentials have an additional property not shared by the valuation algebra of probability potentials, namely
\begin{eqnarray*}
\pi_\Lambda(\mathbf{1}_\Theta) = \mathbf{1}_\Lambda.
\end{eqnarray*}
This property is called stabilty. It allows to extend the projection operation backwards to the general transport operation, and thus reconstruct the original system of a generalized information algebra (axioms A1 to A6), see \cite{kohlas17}. Note that this is not possible for the valuation algebra of probability potentials, since there stability is not valid. With respect to normalization of set and probability potentials to bpa and probability distributions, the same as at the end of the previous section holds. In addition, the projection of a probability distribution is itself normalized, hence still a probability distribution. The same holds for bpas.


\subsection{Absolutely Continuous PAS} \label{subsec:AbsContPAS}

In this section we consider real-valued probabilistic argumentation systems, that is structures $(\Omega,\mathcal{A},P;X,\mathbb{R}^s)$ where $(\Omega,\mathcal{A},P)$ is a probability space and $X : \Omega \rightarrow \mathbb{R}^s$ a Borel-measurable function in s-dimensional real space, that is a random variable. Again, we consider $\omega \inÊ\Omega$ as an assumption, which, if valid, implies the value $X(\omega) \in \mathbb{R}^s$. We proceed in steps: First we examine the simple case of a one-dimensional value space $\mathbb{R}$. Next, we generalize to families of multidimensional real spaces. 

So, consider a tuple $(\Omega,\mathcal{A},P;X,\mathbb{R})$ as above. If $B$ is an element of the Borel-algebra $\mathcal{B}$ of $\mathbb{R}$, then the set $X^{-1}(B)$ belongs to $\mathcal{A}$. If we look at $(\Omega,\mathcal{A},P;X,\mathbb{R})$ as a probabilistic argumentation system in the sense of Section \ref{sec:PAS}, then $\mathbb{R}$ is considered as the set of possible values of some unknown magnitude and $X(\omega)$ the answer to this question, assuming $\omega$. Then we ask which set of assumptions $\omega \in \Omega$ support the hypothesisi that  the unknown magnitude is less than $x$, $\{\omega \in \Omega: X(\omega) \leq x\}$. Next we may also quantify the strength of this support as
\begin{eqnarray*}
sp(-\infty,x] = P\{\omega \in \Omega: X(\omega) \leq x\} = F(x).
\end{eqnarray*}
Of course $F(x)$ is simply the distribution function of the random variable $X$, interpreted however in our view as the degree of support induced by the PAS $(\Omega,\mathcal{A},P;X,\mathbb{R})$ on the hypothesis that the unknown magnitude 
is smaller than $x$.

For the following we require that $F(x)$ is \textit{absolutely continuous}, that is there exists a function $f(x)$ such that
\begin{eqnarray*}
F(x) = \int_\infty^x f(y) dy,
\end{eqnarray*}
where $f$ is measurable and the integral the Lebesgue integral or, if $f$ is continuous, the Rieman integral.

Things become now more interesting, if we assume that two (or more) structures $(\Omega_1,\mathcal{A}_1,P_1;X_1,\mathbb{R})$ and $(\Omega_2,\mathcal{A}_2,P_2;X_2,\mathbb{R})$ are available for the same unknown magnitude. How do we combine these two PAS into a new aggregated PAS? The approach is the same as in Section Ê\ref{subsec:indepPAS}. So, in each PAS an assumption must be valid, such that, combined, a pair $(\omega_1,\omega_2) \in \Omega_1 \times \Omega_2$ must be valid and consequently, the values are $X_1(\omega_1)$ and $X_2(\omega_2)$ conjointly. This however can only be the case if $X_1(\omega_1) = X_2(\omega_2)$. Therefore, the only consistent, non-contradictory pairs are those, which satisfy this condition, that is
\begin{eqnarray*}
\Omega = \{(\omega_1,\omega_2) \in \Omega_1 \times \Omega_2:X_1(\omega_1) = X_2(\omega_2)\}.
\end{eqnarray*}
If we assume the two PAS as \textit{independent}, then the pairs come from the product probability space $(\Omega_1 \times \Omega_2,\mathcal{A}_1 \times \mathcal{A}_2,P_1P_2)$. At this point the technical problem arises that the set $\Omega$ has probability zero, so that a conditioning of the product probability to the event $X_1(\omega_1) = X_2(\omega_2)$ in the usual way is not possible.

To circumvent this problem we resort to an approach used in \cite{kohlasmonney07}. Instead of considering random variables $X$ on probability spacees $(\Omega,\mathcal{A},P)$, we work directly with the distribution function of $X$, that is with the induced probability measure on $\mathbb{R}$. By our assumption that the distribution is absolutely  continuous, this probability is defined by the density function $f$ on $\mathbb{R}$. So, we consider PAS for an unknown magnitude in $\mathbb{R}$ essentially given by density functions $f$. If we denote the unknown magnitude by $x$, then, given two PAS of this form, we have the equations
\begin{eqnarray*}
x &=& \omega_1, \quad \omega_1 \sim f(w), \\
x &=& \omega_2, \quad \omega_2 \sim g(w).
\end{eqnarray*}
Here, consistency means $\omega_1 = \omega_2$ or $\omega_1 - \omega_2 = 0$, since the magnitude $x$ is unique. We use the variable-transformation
\begin{eqnarray*}
\xi_1 = \omega_1, \quad \xi_2 = \omega_2 - \omega_1,
\end{eqnarray*}
or, in the inverse transformation
\begin{eqnarray*}
\omega_1 = \xi_1, \quad \omega_2 = \xi_1 + \xi_2.
\end{eqnarray*}
The product density $f(w_1)g(w_2)$ for the pairs $(\omega_1,\omega_2)$ transforms then into
\begin{eqnarray*}
h(\xi_1,\xi_2) = f(\xi_1)g(\xi_1 + \xi_2).
\end{eqnarray*}
What we need now is the conditional density $h(\xi_1 \vert \xi_2 = 0)$ corresponding to the condition $\omega_1 - \omega_2 = 0$,
\begin{eqnarray*}
h(\xi_1 \vert \xi_2 = 0) = cf(\xi_1)g(\xi_1) = cf(x)g(x)
\end{eqnarray*}
since $x = \xi_1$. Here $c$ is a normalization constant. This shows that combining absolutely continuous PAS results essentially in multiplying the densities. This holds also in a more general setting to be presented next. And this leads to a valuation algebra of densities, generalizing the valuation algebra of (discrete) probability potentials as shown below.

Consider a finite or countable set $r$ of real-valued variables $x_i$, $i \in r$ and suppose the common value of these variables $x_i$ are the unknowns, we want to determine. Let $s,t,\ldots$ denote finite subsets of $r$ and $x_s : s \rightarrow \mathbb{R}$ denote $s$-tuples of real values. Then $\mathbb{R}^s$ is the corresponding linear $s$-space of these $s$-tuples. We assume now that information about the $x_i$ is given by some $s$-densities on the space $\mathbb{R}^s$ for some subsets $s$ of $r$. More precisely, lef $f$ and $g$ be two density functions, the first one on $\mathbb{R}^s$, the second one on $\mathbb{R}^t$. We want to aggregate these two PAS into an aggregated one, generalizing the technique used above to combine two PAS relative to $\mathbb{R}$. Let $\omega_1$ be a $\mathbb{R}^s$ random variable with density function $f$ and $\omega_2$ be a $\mathbb{R}^t$ random variable with density function $g$. Consider two PAS, one relating to $s$, the other one to $t$ given by
\begin{eqnarray*}
x_s &=& \omega_1, \quad \omega_1 \sim f, \\
x_t &=& \omega_2, \quad \omega_2 \sim g, 
\end{eqnarray*}
We may decompose these equations into
\begin{eqnarray*}
x_{s-t} &=& \omega_{1,s-t}, \\
x_{s \cap t} &=& \omega_{1,s \cap t}, \\
x_{s \cap t} &=& \omega_{2,s \cap t}, \\
x_{t-s} &=& \omega_{2,t-s}, 
\end{eqnarray*}
The consistency condition between $\omega_1$ and $\omega_2$ is now $\omega_{1,s \cap t} = \omega_{2,s \cap t}$ or $\omega_{2,s \cap t} - \omega_{1,s \cap t} = 0$. Following the model above for $\mathbb{R}$ we use the transformation
\begin{eqnarray*}
\xi_1 = \omega_1, \quad \xi_{2,s \cap t} = \omega_{2,s \cap t} - \omega_{1,s \cap t}, \quad \xi_{2,t-s} = \omega_{2,t-s},
\end{eqnarray*}
or, inversely,
\begin{eqnarray*}
\omega_1 = \xi_1, \quad \omega_{2,s \cap t} = \xi_{1,s \cap t} + \xi_{2,s \cap t}, \quad \omega_{2,t-s} = \xi_{2,t-s},
\end{eqnarray*}
Again, assuming stochastic independence between the two PAS, the pairs $(\omega_1,\omega_2)$ have the product density $f(\omega_1)g(\omega_2)$. Then the pairs $(\xi_1,\xi_2)$ have the density
\begin{eqnarray*}
h(\xi_1,\xi_2) = f(\xi_1)g(\xi_{1,s \cap t} + \xi_{2,s \cap t},\xi_{2,t-s}).
\end{eqnarray*}
The conditional density of $\xi_1$ given $\xi_{2,s \cap t} = 0$, corresponding to the consistency condition $\omega_{2,s \cap t} - \omega_{1,s \cap t} = 0$ is then
\begin{eqnarray*}
h(\xi_1 \vert \xi_{2,s \cap t = 0}) = cf(\xi_1)g(\xi_{1,s \cap t},\xi_{2,t-s}).
\end{eqnarray*}
Then, since $x_s = \omega_1 = \xi_1$ and $x_{t-s} = \omega_{2,t-s} = \xi_{2,t-s}$ we obtain for the density of the support of the unknown magnitude $x_{s \cup t}$
\begin{eqnarray*}
h(x_{s \cup t}) = cf(x_s)g(x_t).
\end{eqnarray*}
Again, we find essentially for the aggregation of the two PAS the multiplication law of the two densities of the two PAS.

Projecting a PAS given by a density on $\mathbb{R}^s$ to $\mathbb{R}^t$ for $t \subseteq s$ gives a new PAS with density
\begin{eqnarray} \label{eq:ProjOfDens}
(\pi_t(f))(x_t) = \int_{-\infty}^{+ \infty} f(x_t,x_{s-t})dx_{s-t}.
\end{eqnarray}
These operations of combination and projection of absolutely continuous PAS on real spaces $\mathbb{R}^s$ give rise to a valuation algebra of density functions, similar to the one of probability potentials. 

In fact, let $\Psi$ denote the family of non-negative real-valued continuous functions $f : \mathbb{R}^s \rightarrow \mathbb{R}^+ \cup \{0\}$ for $s \subseteq r$ with \textit{finite} integral
\begin{eqnarray*}
\int_{- \infty}^{+ \infty} f(x_s)dx_s < \infty.
\end{eqnarray*}
Define the following operations in $\Psi$: 
\begin{enumerate}
\item \textit{Labeling:} $d(f) = s$ if $f$ is defined on $\mathbb{R}^s$,
\item \textit{Combination:} If $d(f) = $s and $d(g) = t$, then $f \cdot g$ is defined by $(f \cdot g)(x) = f(x_s)g(x_t)$ for $x \in \mathbb{R}^{s \cup t}$.
\item \textit{Projection:} if $d(f) = s$ and $t \subseteq s$, then $\pi_t(f)$ is defined by (\ref{eq:ProjOfDens}).
\end{enumerate}
Note that all operations are well defined in the sense that both combination and projection result in elements of $\Psi$. Let  $\mathcal{F} = \{\mathbb{R}^s:s \subseteq r\}$. With these operations $\Psi$ forms a valuation algebra, that is satisfies the following axioms:
\begin{description}
\item[C0] \textit{Lattice:} $(\mathcal{F};\leq)$ is a lattice with $\mathbb{R}^t \leq \mathbb{R}^s$ if $t \subseteq s$..
\item[C1] \textit{Semigroup:} $(\Psi;\cdot)$ is a commutative semigroup.
\item[C2] \textit{Labeling:} $d(f_1 \cdot f_2) = d(f_1) \vee d(f_2)$ and $d(\pi_t(f)) = \mathbb{R}^t$.
\item[C3] \textit{Null:} For all $\mathbb{R}^s$ there is a null element $\mathbf{0}_s$ with $d(\mathbf{0}_s) = \mathbb{R}^s$  such that
\begin{enumerate}
\item $mÊ\cdot \mathbf{0}_s = \mathbf{0}_s$ if $d(f) = \mathbb{R}^s$,
\item if $\mathbb{R}^t \leq \mathbb{R}^s = d(f)$, then $\pi_t(f) = \mathbf{0}_t$ if and only if $f = \mathbf{0}_s$.
\end{enumerate}
\item[B4] \textit{Projection:} If $t_1 \leq t_2 \leq d(f)$, then
\begin{eqnarray*}
\pi_{t_1}(\pi_{t_2}(f)) = \pi_{t_1}(f).
\end{eqnarray*}
\item[B5] \textit{Combination:} If $d(f_1) = s$ and $d(f_2) =t$, then then
\begin{eqnarray*}
\pi_s(f_1 \cdot f_2) = f_1 \cdot \pi_{s \cap t}(f_2).
\end{eqnarray*}
\end{description}
Note that $(\mathcal{F};\leq)$ is even a \textit{dstributive} lattice. The null element is defined by $\mathbf{0}_s(x) = 0$ for all $x \in \mathbb{R}^s$. Instead of continuous functions, we might also consider Lebesgue-measurable functions. An interesting subalgebra of this valuation algebra is the algebra of Gaussian densities \cite{kohlas03}, see also \cite{kohlasmonney07,poulykohlas11}. Note that the function $\mathbf{1}_s(x) = 1$ for all $x \in \mathbb{R}^s$ is a unit element for combination, but is not integrable and therefore does not belong to $\Psi$. However, $\Psi$ can be extended to incorporate these unit elements (and other elements), see \cite{kohlas03}. More on this kind of probabilistic argumentation structures for statistical analysis can be found in \cite{km04}.



\section{Conditioning} \label{sec:Cond}


\subsection{The Nature of Conditioning} \label{subsec:NatCond}

In classical (discrete) probability theory, conditioning refers to changing a probability of an event if another event occurs, leading to a conditional probability. Slightly more generally, if a (discrete) probability distribution is given, observing an event leads in this way to a conditional probability distribution (given the observed event). If the probability distribution is considered as induced by a (precise) PAS, then the process of conditioning can be seen as a combination of information, namely of the precise PAS with the deterministic PAS determined by the event. This view puts conditioning into a wider perspective: Conditioning is simply the combination of a (precise) PAS with any other PAS. It is thus a process carried out within the injformation algebra of bpas. This point of view will be developed in this section, whereas in the following section, conditioning will be limited to a more restricted view.

Conditioning can be captured algebraically by an operation of combination between a probability and a set potential, written as $p \cdot m$. Consider the information algebra of bpas over some f.c.f $(\mathcal{F},\mathcal{R})$ (see Section \ref{subsec:AlgOfProbPot}). Within this framework consider the set potential $m_p$ associated with the probability potential $p$ and any other set potential $m$, both on the same domain $\Theta$. Then we have
\begin{eqnarray}
(m_p \cdot m)(\{\theta\}) = \sum_{S:\theta \in S} p(\theta)m(S).
\end{eqnarray}
Recall that
\begin{eqnarray*}
pl_m(\theta) = \sum_{S:\theta \in S} m(S).
\end{eqnarray*}
Thus we conclude that
\begin{eqnarray*}
(m_p \cdot m)(\{\theta\}) = p(\theta)pl_m(\theta),
\end{eqnarray*}
Suppose next that $m$ is a bpa on a frame $\Lambda \leq \Theta$. Then $m_p \cdot m$ is a bpa on the frame $\Theta$ and this combined bpa is still a set potential non-null only on singleton sets, hence essentially a probability potential. In view of these results we define the combination $p \cdot m$ by
\begin{eqnarray*}
p \cdot m(\theta) = (m_p \cdot m)(\{\theta\})
\end{eqnarray*}
for all $\theta \in d(p9$. We may also see this as a map $\Phi \times \Psi \rightarrow \Phi$, defining combination between a probability potential $p$ and a set potential, $m$, provided that $d(p) \geq d(m)$. Here $\Psi$ denotes as before the family of set potentials and $\Phi$ the family of potentials over an f.c.f $(\mathcal{F},\mathcal{R})$.

Note that $m_p \cdot m$ with $d(m) = \Lambda$ arbitrary is, in general, no more a probability potential. However, since
\begin{eqnarray*}
t_\Theta(m_p \cdot m) = m_p \cdot t_\Theta(m),
\end{eqnarray*}
it follows that $t_\Theta(m_p \cdot m)$ is essentially a probability potential for any $m$. This last case covers also the first two cases. In fact, $t_\Theta(m)$ is a set potential on domain $\Theta$, thus, as above,
\begin{eqnarray}
t_\Theta(m_p \cdot m)(\{\theta\}) = p(\theta)pl_{t_\Theta(m)}(\theta).
\end{eqnarray}
Note that in any of these cases this represents finally a combination operation between two probability potentials (where at least one of them is not normalized). All this may be called \textit{conditioning} of a probability potential $p$ on another information represented by some bpa $m$.

If in particular $m$ is a \textit{deterministic} bpa, that is $m(B) = 1$ for some $B \subseteq \Lambda$ and $m(A) = 0$ for all other subsets of $\Lambda$, then $pl_{t_\Theta(m)}(\theta) = 1$ for all $\theta \in t_\Theta(B)$. If we normalize $t_\Theta(p \cdot m)$ in this case, we obtain
\begin{eqnarray*}
p'(\theta) = \left\{ \begin{array}{l} \frac{p(\theta)}{\sum_{\theta \in t_\Theta(B)} p(\theta)}, \textrm{ if}\ \theta \in t_\Theta(B) \\ 0, \textrm{ otherwise}. \end{array} \right.
\end{eqnarray*}
So, $p'$ is the ordinary, classical conditional probability distribution of $p$, given the event $t_\Theta(B)$! In particular, if $d(m) = \Theta$, then $p'$ is the conditional probability distribution of $p$ given $B$. 

An important special case is the following: Let $p$ be a probability potential on a frame $\Theta$ and $\Lambda \leq \Theta$. Then consider the probability potential
\begin{eqnarray} \label{eq:DefOfConditional_2}
p'(\theta) =  \frac{p(\theta)}{\pi_\Lambda(p)(t_\Lambda(\theta))}.
\end{eqnarray}
For any $\lambda \in \Lambda$ and $\theta \in t_\Theta(\lambda)$, we have then
\begin{eqnarray*}
p'(\theta) =  \frac{p(\theta)}{p(t_\Theta(\lambda))},
\end{eqnarray*}
that is, $p'(\theta)$ is the conditional probability of $\theta \in t_\Theta(\lambda)$, given $t_\Theta(\theta) = \lambda$. We call the probability potential $p'$ as defined in (\ref{eq:DefOfConditional_2}) the \textit{conditional} of $p$ in $\Theta$, given $\Lambda$ and write $p' = p_{\Theta \vert \Lambda}$. Note that it is a potential on $\Theta$, that is $d(p_{\Theta \vert \Lambda}) = \Theta$. 

This reduces to a more familiar notion if we consider the special case of probability distributions on multivariate models. Let $r,s,t$ be index sets
 sucht that $r = s \cup t$, $s \cap t = \emptyset$ and let $\Theta_r$, $\Theta_s$ and $\Theta_t$ be corresponding frames,
 \begin{eqnarray*}
\Theta_r = \prod_{i \in r} \Theta_i, \quad \Theta_r = \prod_{i \in s} \Theta_i, \quad \Theta_r = \prod_{i \in t} \Theta_i,
\end{eqnarray*}
where $\Theta_i$ are finite sets, the frames of variables $X_i$. Consider now a probability distribution $p$ over frame $\Theta_r$. For $\theta_r \in \Theta_r$, $\theta_s \in \Theta_s$ and $\theta_t \in \Theta_t$ with $\theta_r = (\theta_s,\theta_t)$, define
\begin{eqnarray} \label{eq:CondPropDistr}
p'(\theta_r)  = p'(\theta_s,\theta_t) = \frac{p(\theta_s,\theta_t)}{\sum_{\theta_t \in \Theta_t} p(\theta_s,\theta_t)}-
\end{eqnarray}
Clearly, the denominator is $t_{\Theta_s}(p)(\theta_s)$ and $\theta_s = t_{\Theta_s}(\theta_s,\theta_t)$. So, we see that $p'$ here corresponds to (\ref{eq:DefOfConditional_2}). But $p'$ defined according to (\ref{eq:CondPropDistr}) is nothing else than the usual multivariate family of conditional probability distributions, often written as $p'(\theta_s,\theta_t) = p(\theta_s \vert \theta_t)$. Therefore, the probability potentials defined by (\ref{eq:DefOfConditional_2}) are generalizations of the classical concept of multivariate conditional probability distributions related to multivariate models to the more general concept of probability distributions on a $f.c.f$. This concept is of some interest and will be studied in the next section.


\subsection{Conditionals and Continuation} \label{subsec:CondAndCont}

Consider a f.c.f $(\mathcal{F},\mathcal{R})$, where $(\mathcal{F};\leq)$ is a join-semilattice and consider the family of probability potentials $\Phi$ on it. Recall that this system of potentials is closed under combination and projection, that is transport to $\Lambda \leq d(p)$. When we use here the notation $\pi_\Lambda(p)$ it is always implicitly assumed that $\Lambda \leq d(p)$. 

As a preparation to the study of conditionals as defined in the previous section, we define the support set $supp(p) = \{\theta \in d(p):p(\theta > 0\}$ of a probability potential $p$. Associated with any potential $p$ we define also the potential
\begin{eqnarray*}
f_p(\theta) = \left\{ \begin{array} {l} 1, \textrm{ if}\ \theta \in supp(p) \\ 0, \textrm{ otherwise}. \end{array} \right.
\end{eqnarray*}
These potentials $f_p$ are in fact the indicator functions of the sets $supp(p)$, they have the same domain as $p$, $d(f_p) = d(p)$ and are \textit{idempotent}, 
\begin{eqnarray*}
f_p \cdot f_p = f_p.
\end{eqnarray*}
Furthermore, if $p$ and $q$ are two probability potentials, then
\begin{eqnarray*}
f_p \cdot f_q = f_{p \cdot q}.
\end{eqnarray*}
Between these idempotent elements, a partial order (p.o) can be defined by
\begin{eqnarray*}
f_p \leq f_q \textrm{ if}\ f_p \cdot f_q = f_q.
\end{eqnarray*}
This p.o is in fact a join-semilattice, that is,
\begin{eqnarray*}
f_p \cdot f_q = \sup\{f_p,f_q\},
\end{eqnarray*}
since $f_p,f_q \leq g$ for some idempotent $g$ implies $f_p \cdot f_q \leq g \cdot g = g$. We shall write $\sup\{f_p,f_q\}$ as $f_p \vee f_q$. This p.o between the idempotent elements $f_p$ represents in fact also a p.o between support sets $supp(p)$, since we may define $supp(p) \leq supp(q)$ iff $f_p \leq f_q$. If $p$ and $q$ have the same domain, then obviously $supp(p) \leq supp(q)$ if and only if $supp(p) \supseteq supp(q)$. In passing, we remark that the system of the subsets of frames of an f.c.f, or equivalently, of idempotents form themselves an \textit{idempotent} generalized information algebra, which is isomorphic to the subalgebra of deterministic set potentials, also called a set algebra, see \cite{kohlas17}.

Write $p \equiv q$ if $supp(p) = supp(q)$. This is an equivalence relation and moreover a congruence relative to the operations of combination and projection, that is $p \equiv q$ implies 
\begin{eqnarray*}
d(p) &=& d(q), \\
p \cdot u &\equiv& q \cdot u, \textrm{ for any potential}\ u,\\
\pi_\Lambda(p) &\equiv& \pi_\Lambda(q), \textrm{ for any frame}\ \Lambda \leq d(p) = d(q).
\end{eqnarray*}
Let $[p]$ denote the equivalence classes of this equivalence relation. Each such class turns out to be a commutative group, the unit of $[p]$ is $f_p$ and the inverse of $p$ is defined by
\begin{eqnarray*}
p^{-1}(\theta) = \left\{ \begin{array}{l} 1/p(\theta), \textrm{ if}\ \theta \in supp(p), \\ 0, \textrm{Êotherwise}. \end{array} \right.
\end{eqnarray*}
Thereby, in the commutative semigroup of probability potentials over a f.c.f, a partial division is defined. This is an instance of a general theory of division in \textit{regular} semigroups, \cite{cliffordpreston67,croisot53}. It is also an extension of the theory of regular valuation algebras \cite{kohlas03,kohlas17} from the multivariate setting to potentials over a f.c.f. The equivalence classes form an idempotent and commutative semigroup if we define
\begin{eqnarray*}
[p]Ê\cdot [q] = [p \cdot q]
\end{eqnarray*}
and the p.o of the idempotents $f_p$ carries also over to these classes: $[p] \leq [q]$ iff $[p] \cdot [q] = [p \cdot q] = [p] \vee [q]\}$.  Note that $[p^{-1}] = [p]$.

The following Lemma is very important for the the subsequent study of conditionals. It is in fact a statement about support sets of potentials.

\begin{lemma} \label{le:ProjOrder}     
\begin{enumerate}
\item Assume $d(p) = d(q)$. Then $[p] \leq |q]$ implies $[\pi_\Lambda(p)] \leq [\pi_\Lambda(q)]$,
\item For all $\Lambda \leq d(p)$ we have $[\pi_\Lambda(p)] \leq [p]$.
\end{enumerate}
\end{lemma}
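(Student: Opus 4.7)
The plan is to translate both statements into statements about support sets and then verify them from the explicit formula for $\pi_\Lambda$. Recall that $[p] \leq [q]$ is defined by $[p] \cdot [q] = [q]$. This really has two components: a domain condition, $d(p) \vee d(q) = d(q)$, i.e.\ $d(p) \leq d(q)$, and the equivalence $p \cdot q \equiv q$, i.e.\ $supp(p \cdot q) = supp(q)$. Using the pointwise combination formula (\ref{eq:CombOfProbPot}), the latter says that for every $\theta \in supp(q)$ we must have $p(t_{d(p)}(\theta)) > 0$. Both parts reduce to checking this kind of support inclusion.

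For part 2, I would take $\Theta = d(p)$ and verify the two conditions for the pair $(\pi_\Lambda(p), p)$. The domain condition is trivial, since $d(\pi_\Lambda(p)) = \Lambda \leq \Theta$. For the support condition I need $(\pi_\Lambda(p))(t_\Lambda(\theta)) > 0$ whenever $\theta \in supp(p)$. Because $\Lambda \leq \Theta$, formula (\ref{eq:TranspOfProbPot}) simplifies to $(\pi_\Lambda(p))(\lambda) = \sum_{\theta' \in \tau(\lambda)} p(\theta')$, where $\tau$ is the refining of $\Lambda$ into $\Theta$. Since $\theta \in \tau(t_\Lambda(\theta))$, this sum dominates the strictly positive term $p(\theta)$, and we are done.

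For part 1 both $\pi_\Lambda(p)$ and $\pi_\Lambda(q)$ have domain $\Lambda$, so the domain condition is automatic. The hypothesis $[p] \leq [q]$ with $d(p) = d(q)$ means $p \cdot q \equiv q$ on a shared domain, and since combination is then pointwise multiplication, $supp(p \cdot q) = supp(p) \cap supp(q)$; requiring this to equal $supp(q)$ is exactly $supp(q) \subseteq supp(p)$. The remaining task is a lifting argument: given $\lambda \in supp(\pi_\Lambda(q))$, the sum defining $\pi_\Lambda(q)(\lambda)$ produces some $\theta_0 \in \tau(\lambda)$ with $q(\theta_0) > 0$; the hypothesis gives $\theta_0 \in supp(p)$, and the analogous sum for $p$ then yields $(\pi_\Lambda(p))(\lambda) \geq p(\theta_0) > 0$. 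Hence $supp(\pi_\Lambda(q)) \subseteq supp(\pi_\Lambda(p))$, i.e.\ $[\pi_\Lambda(p)] \leq [\pi_\Lambda(q)]$.

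There is no deep obstacle; the main thing to keep track of is the algebraic bookkeeping hidden in the order $\leq$ on equivalence classes, which silently encodes a domain condition and a support condition, with the support condition reversed relative to set inclusion (smaller support means \emph{larger} in this order, because the semilattice operation is multiplication). Once this is unpacked, everything collapses to the observation that the projection sum $\sum_{\theta' \in \tau(\lambda)} p(\theta')$ is positive exactly when some element of $\tau(\lambda)$ lies in $supp(p)$.
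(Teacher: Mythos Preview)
Your proof is correct and follows essentially the same approach as the paper's own argument: both parts are reduced to support-set inclusions, and the key observation in each case is that $\pi_\Lambda(p)(\lambda) = \sum_{\theta' \in \tau(\lambda)} p(\theta')$ is positive precisely when some $\theta' \in \tau(\lambda)$ lies in $supp(p)$. Your explicit unpacking of the order $[p] \leq [q]$ into a domain condition and a support condition is a helpful clarification, but the substance of the argument is identical to the paper's.
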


\begin{proof}
1.) $[p] \leq [q]$ implies $supp(q) \subseteq supp(p)$. Consider an element $\lambda \in supp(\pi_\Lambda(q))$. Then, if $d(q) = \Theta$,
\begin{eqnarray*}
\pi_\Lambda(q)(\lambda) = \sum_{\theta \in t_\Theta(\lambda)} q(\theta) > 0.
\end{eqnarray*}
Thus, there is at least one element $\theta' \in t_\Theta(\lambda)$ such that $q(\theta') > 0$. Then, since $supp(q) \subseteq supp(p)$, we have also $p(\theta') > 0$, hence $\pi_\Lambda(p)(\lambda) > 0$. This means that $supp(\pi_\Lambda(q)) \subseteq supp(\pi_\Lambda(p))$, hence $[\pi_\Lambda(p)] \leq \pi_\Lambda(q)]$.

2.) We show that $f_{\pi_\Lambda(p)} \leq f_p$, that is $f_{\pi_\Lambda(p)}(t_\Lambda(\theta))f(\theta) = f(\theta)$. This holds exactly if $f(\theta) = 1$ implies $f_{\pi_\Lambda(p)}(t_\Lambda(\theta)) = 1$. Now, $f_p(\theta) = 1$ means that $\theta \in supp(p)$, such that $\pi_\Lambda(p)(\lambda) > 0$ if $\lambda = t_\Lambda(\theta)$. But then $f_{\pi_\Lambda(p)}(\lambda) = f_{\pi_\Lambda(p)}(t_\Lambda(\theta)) = 1$. 
\end{proof}

Note that if $[p] \leq [q]$, then $f_p \cdot f_q = f_q$, hence $f_p \cdot q = f_p \cdot f_q \cdot q = f_q \cdot q = q$. So, we conclude that $[p] \leq [q]$ implies $f_{\pi_\Lambda(p)} \cdot \pi_\Lambda(q) = \pi_\Lambda(q)$ and $f_{\pi_\Lambda(p)} \cdot p = p$. These observations will be useful later.

For a probability potential $p$ and $\Lambda \leq \Theta \leq d(p)$ we define the potential
\begin{eqnarray} \label{eq:DefOfConditional}
p_{\Theta \vert \Lambda} = \pi_\Theta(p) \cdot (\pi_\Lambda(p))^{-1}.
\end{eqnarray}
If $d(p) = \Theta$, then this corresponds to (\ref{eq:DefOfConditional_2}). Therefore, we call $p_{\Theta \vert \Lambda}$ the \textit{conditional} of $p$ for $\Theta$ given $\Lambda$. In considering a conditional $p_{\Theta \vert \Lambda}$ we always implicitly assume that $\Lambda \leq \Theta \leq d(p)$. Such conditionals have bee studied in detail in \cite{kohlas03} in the case of a valuation algebra in a multivariate setting. We shall show now, that the results obtained there, which generalize well-know results of classical multivariate conditional probability distributions, extend also to the present case of potentials over a f.c.f.

The following lemma gives some basic properties of conditionals.

\begin{lemma} \label{le:BasPropOfCiond}
The following statements are valid:
\begin{enumerate}
\item $[\pi_\Lambda(p)] \leq [p_{\Theta \vert \Lambda}]$,
\item $\pi_\Lambda(p_{\Theta \vert \Lambda}) = f_{\pi_\Lambda(p)}$,
\item If $\Lambda_1 \leq \Lambda_2 \leq \Theta$ then $p_{\Theta \vert \Lambda_2} = p_{\Theta \vert \Lambda_1} \cdot p_{\Lambda_1 \vert \Lambda_2}$.
\item If $\Lambda \leq \Theta_1 \leq \Theta$ then $\pi_{\Theta_1}(p_{\Theta \vert \Lambda}) = p_{\Theta_1 \vert \Lambda}$,
\item If $\Theta_1 \leq \Lambda_1,\Lambda_2 \leq \Theta$ then $\pi_{\Lambda_1}(p_{\Theta \vert \Lambda_2} \cdot p_{\Lambda_2 \vert \Theta_1}) = p_{\Lambda_1 \vert \Theta_1}$,
\item If $d(p_2) = \Lambda$, then $(\pi_\Theta(p_1) \cdot p_2)_{\Theta \vert \Lambda} = p_{1 \Theta \vert \Lambda} \cdot f_{p_2}$.
\end{enumerate}
\end{lemma}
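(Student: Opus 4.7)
The plan is to reduce every item to algebraic manipulation of the definition $p_{\Theta \vert \Lambda} = \pi_\Theta(p) \cdot (\pi_\Lambda(p))^{-1}$ inside the partial-group structure on equivalence classes $[p]$. Three tools will do most of the work: Theorem \ref{th:WeakCombAxiom2}, which lets a projection be pulled past a factor whose domain is dominated by the target frame; Theorem \ref{th:ProjAxiom}, which collapses iterated projections; and the remark following Lemma \ref{le:ProjOrder} that $[p] \leq [q]$ implies $f_p \cdot q = q$. The pointwise identity $(p \cdot q)^{-1} = p^{-1} \cdot q^{-1}$, used in item~6, follows from inverting on the common support.

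For item~1, the definition (\ref{eq:TranspOfProbPot}) of transport gives $\pi_\Lambda(\pi_\Theta(p))(t_\Lambda(\theta)) \geq \pi_\Theta(p)(\theta)$ for every $\theta$; Theorem \ref{th:ProjAxiom} identifies the left side with $\pi_\Lambda(p)(t_\Lambda(\theta))$, so the factor $(\pi_\Lambda(p))^{-1}$ is positive exactly on the support of $\pi_\Theta(p)$. Hence $[p_{\Theta \vert \Lambda}] = [\pi_\Theta(p)]$, and Lemma \ref{le:ProjOrder}(2) yields $[\pi_\Lambda(p)] = [\pi_\Lambda(\pi_\Theta(p))] \leq [\pi_\Theta(p)]$. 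Item~2 is then a direct calculation, since $(\pi_\Lambda(p))^{-1}$ lives on $\Lambda$:
\begin{eqnarray*}
\pi_\Lambda(p_{\Theta \vert \Lambda}) = \pi_\Lambda(\pi_\Theta(p)) \cdot (\pi_\Lambda(p))^{-1} = \pi_\Lambda(p) \cdot (\pi_\Lambda(p))^{-1} = f_{\pi_\Lambda(p)},
\end{eqnarray*}
using Theorems \ref{th:WeakCombAxiom2} and \ref{th:ProjAxiom}. Item~4 repeats the same move with target $\Theta_1$ in place of $\Lambda$, justified by $\Lambda \leq \Theta_1 \leq \Theta$, and delivers $\pi_{\Theta_1}(p_{\Theta \vert \Lambda}) = \pi_{\Theta_1}(p) \cdot (\pi_\Lambda(p))^{-1} = p_{\Theta_1 \vert \Lambda}$.

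Items~3 and~5 are telescoping identities in the partial group. Reading item~3 under the natural hypothesis $\Lambda_2 \leq \Lambda_1 \leq \Theta$ (required for $p_{\Lambda_1 \vert \Lambda_2}$ to be defined), the product expands as
\begin{eqnarray*}
p_{\Theta \vert \Lambda_1} \cdot p_{\Lambda_1 \vert \Lambda_2} &=& \pi_\Theta(p) \cdot (\pi_{\Lambda_1}(p))^{-1} \cdot \pi_{\Lambda_1}(p) \cdot (\pi_{\Lambda_2}(p))^{-1} \\
&=& \pi_\Theta(p) \cdot f_{\pi_{\Lambda_1}(p)} \cdot (\pi_{\Lambda_2}(p))^{-1},
\end{eqnarray*}
and $f_{\pi_{\Lambda_1}(p)}$ is absorbed by $\pi_\Theta(p)$ because $[\pi_{\Lambda_1}(p)] \leq [\pi_\Theta(p)]$, leaving $p_{\Theta \vert \Lambda_2}$. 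Item~5 applies the same telescoping to collapse $p_{\Theta \vert \Lambda_2} \cdot p_{\Lambda_2 \vert \Theta_1}$ to $p_{\Theta \vert \Theta_1}$ and then invokes item~4 with target $\Lambda_1$. For item~6, set $q = \pi_\Theta(p_1) \cdot p_2$; since $d(p_2) = \Lambda \leq \Theta$, Theorems \ref{th:WeakCombAxiom2} and \ref{th:ProjAxiom} give $\pi_\Lambda(q) = \pi_\Lambda(p_1) \cdot p_2$, hence
\begin{eqnarray*}
q_{\Theta \vert \Lambda} &=& \pi_\Theta(p_1) \cdot p_2 \cdot (\pi_\Lambda(p_1))^{-1} \cdot p_2^{-1} \\
&=& p_{1 \Theta \vert \Lambda} \cdot f_{p_2}
\end{eqnarray*}
after commuting factors and using $p_2 \cdot p_2^{-1} = f_{p_2}$.

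The main obstacle is not any individual step but the bookkeeping around partial inverses. Because $(\pi_\Lambda(p))^{-1}$ lives only within the equivalence class $[\pi_\Lambda(p)]$, one must verify in every identity that the idempotents produced by cancellations ($f_{\pi_{\Lambda_i}(p)}$ in items~3 and~5, $f_{p_2}$ in item~6) are actually absorbed by the factor they meet. Lemma \ref{le:ProjOrder}(2) together with Theorem \ref{th:ProjAxiom} supplies exactly the comparison $[\pi_{\Lambda_i}(p)] \leq [\pi_\Theta(p)]$ that makes this absorption work uniformly.
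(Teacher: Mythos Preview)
Your proof is correct and follows essentially the same route as the paper: expand the definition $p_{\Theta\vert\Lambda}=\pi_\Theta(p)\cdot(\pi_\Lambda(p))^{-1}$, pull projections past the $\Lambda$-domain factor, telescope in items~3 and~5, and handle item~6 via $(p\cdot q)^{-1}=p^{-1}\cdot q^{-1}$. The only differences are cosmetic: the paper obtains item~1 from the join-semilattice identity $[p_{\Theta\vert\Lambda}]=[\pi_\Theta(p)]\vee[\pi_\Lambda(p)]\geq[\pi_\Lambda(p)]$ rather than your pointwise support comparison, and in items~2, 4, 6 it invokes Theorem~\ref{th:WeakCombAxiom1} (via $\Theta\bot\Lambda\vert\Lambda$ and $\Lambda\bot\Theta\vert\Theta_1$) where you invoke Theorem~\ref{th:WeakCombAxiom2}; both routes are valid here. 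You are also right that the hypothesis of item~3 should read $\Lambda_2\leq\Lambda_1\leq\Theta$ for $p_{\Lambda_1\vert\Lambda_2}$ to be defined, and the paper's own proof tacitly works under that corrected ordering.
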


\begin{proof}
1.) By definition and since the p.o of equivalence classes $[p]$ is a join-semilattice with $[p \cdot q] = [p] \vee [q]$,  we have
\begin{eqnarray*}
[p_{\Theta \vert \Lambda}] = [\pi_\Theta(p) \cdot (\pi_\Lambda(p))^{-1}] = [\pi_\Theta(p)] \vee [(\pi_\Lambda(p))^{-1}] \geq [(\pi_\Lambda(p))^{-1}] = [\pi_\Lambda(p)].
\end{eqnarray*}

2.) Again, by definition,
\begin{eqnarray*}
\pi_\Lambda(p_{\Theta \vert \Lambda}) = \pi_\Lambda(\pi_\Theta(p) \cdot (\pi_\Lambda(p))^{-1}) = \pi_\Lambda(\pi_\Theta(p)) \cdot (\pi_\Lambda(p))^{-1}
\end{eqnarray*}
by Theorem \ref{th:WeakCombAxiom1}, since $\Theta \bot \Lambda \vert \Lambda$. But $\Lambda \leq \Theta$ implies $\pi_\Lambda(\pi_\Theta(p)) = \pi_\Lambda(p)$( see (Theorem \ref{th:ProjAxiom}), hence 
\begin{eqnarray*}
\pi_\Lambda(p_{\Theta \vert \Lambda}) = \pi_\Lambda(p) \cdot (\pi_\Lambda(p))^{-1} = f_{\pi_\Lambda(p)}.
\end{eqnarray*}

3.) We have $p_{\Theta \vert \Lambda_2} = \pi_\Theta(p) \cdot (\pi_{\Lambda_2}(p))^{-1}$. Since $\pi_{\Lambda_1}(p) \cdot (\pi_{\Lambda_1}(p))^{-1} = f_{\pi_{\Lambda_1}(p)} = f_{\pi_{\Lambda_1}(\pi_\Theta(p))}$ (see Theorem \ref{th:ProjAxiom}), we have by Lemma \ref{le:ProjOrder}
\begin{eqnarray*}
p_{\Theta \vert \Lambda_2} = (\pi_\Theta(p) \cdot (\pi_{\Lambda_1}(p))^{-1}) \cdot (\pi_{\Lambda_1}(p) \cdot (\pi_{\Lambda_2}(p))^{-1} )
= p_{\Theta \vert \Lambda_1} \cdot p_{\Lambda_1 \vert \Lambda_2}.
\end{eqnarray*}

4.)  Here we start with $\pi_{\Theta_1}(p_{\Theta \vert \Lambda}) = \pi_{\Theta_1}(\pi_\Theta(p) \cdot (\pi_\Lambda(p))^{-1})$. Since $\Lambda \leq \Theta_1 \leq \Theta$, we have $\Theta_1 \bot \Theta \vert \Theta_1$, hence $\Lambda \bot \Theta \vert \Theta_1$. It follows from Theorem \ref{th:WeakCombAxiom1} and Theorem \ref{th:ProjAxiom} that
\begin{eqnarray*}
\pi_{\Theta_1}(p_{\Theta \vert \Lambda}) = \pi_{\Theta_1}(p) \cdot (\pi_\Lambda(p))^{-1} = p_{\Theta_1 \vert \Lambda}.
\end{eqnarray*}

5.)  We have $p_{\Theta \vert \Lambda_2} \cdot p_{\Lambda_2 \vert \Theta_1} = p_{\Theta \vert \Theta_1}$ by item 3 above and $\pi_{\Lambda_1}(p_{\Theta \vert \Theta_1}) = p_{\Lambda_1 \vert \Theta_1}$ by item 4.

6.) The definition of conditionals gives
\begin{eqnarray*}
(\pi_\Theta(p_1) \cdot p_2)_{\Theta \vert \Lambda} = \pi_\Theta(\pi_\Theta(p_1) \cdot p_2) \cdot (\pi_\Lambda(\pi_\Theta(p_1) \cdot p_2))^{-1}.
\end{eqnarray*}
Applying Theorems \ref{th:WeakCombAxiom1} and \ref{th:ProjAxiom} we obtain
\begin{eqnarray*}
\lefteqn{(\pi_\Theta(p_1) \cdot p_2)_{\Theta \vert \Lambda} = (\pi_\Theta(p_1) \cdot p_2) \cdot (\pi_\Lambda(p_1) \cdot p_2)^{-1}}Ê\\
&&= (\pi_\Theta(p_1) \cdot (\pi_\Lambda(p_1))^{-1}) \cdot (p_2 \cdot p_2^{-1})  = p_{1 \Theta \vert \Lambda} \cdot f_{p_2}.
\end{eqnarray*}
This concludes the proof
\end{proof}

If we multiply both sides of the definition  (\ref{eq:DefOfConditional}) by $\pi_\Lambda(p)$, then we obtain $\pi_\Theta(p) \cdot f_{\pi_\Lambda(p)} = p_{\Theta \vert \Lambda} \cdot \pi_\Lambda(p)$. Due to Lemma \ref{le:ProjOrder} we conclude then that
\begin{eqnarray} \label{eq:Continuation}
\pi_\Theta(p) = p_{\Theta \vert \Lambda} \cdot \pi_\Lambda(p).
\end{eqnarray}
In the words of \cite{shafer96} the conditional $p_{\Theta \vert \Lambda}$ \textit{continues} $\pi_\Lambda(p)$ from $\Lambda$ to $\Theta$ and we call (\ref{eq:Continuation}) the \textit{continuation} property of the conditional.


\subsection{Factorization of Potentials} \label{subsec:CondIndPot}

Based on factorizations of a probability potential $p$, a new relation between frames in a f.c.f is introduced and studied in this section. We extend for this purpose the definition of a conditional slightly. So, if $p$ is a probability potental with $d(p) \geq \Theta \vee \Lambda$, then we define
\begin{eqnarray*}
p_{\Theta \vert \Lambda} = \pi_{\Theta \vee \Lambda}(p) \cdot (\pi_\Lambda(p))^{-1}.
\end{eqnarray*}
Hence, we do no more assume the $\Lambda \leq \Theta$. But note that $p_{\Theta \vert \Lambda} = p_{\Theta \vee \Lambda \vert \Lambda}$ so that the new definition is an extension of the old one.

\begin{definition} \label{def:CondIndepPot}
Let $\Theta_1 \bot \Theta_2 \vert \Lambda$. If $q_1$ and $q_2$ are two probability potentials such that $d(q_1) = \Theta_1 \vee \Lambda$ and $d(q_2) = \Theta_2 \vee \Lambda$. Then we call the probability potential $q_1$ and $q_2$ conditionally independent given $\Lambda$, and write $q_1 \bot q_2 \vert \Lambda$.
\end{definition}

The following theorem gives an interpretation of the meaning of this concept.

\begin{theorem} \label{th:CondIndOfPot}
$q_1 \bot q_2 \vert \Lambda$ implies
\begin{enumerate}
\item $\pi_{\Theta_1 \vee \Lambda}(q_1 \cdot q_2) = q_1 \cdot \pi_\Lambda(q_2)$ and $\pi_{\Theta_2 \vee \Lambda}(p) = q_2 \cdot \pi_\Lambda(q_1)$,
\item $\pi_\Lambda(q_1 \cdot q_2) = \pi_\Lambda(q_1) \cdot \pi(_\Lambda(q_2)$.
\end{enumerate}
\end{theorem}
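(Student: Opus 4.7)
My strategy is to pull the problem back to the information algebra $\Psi$ of set potentials, where the full Combination and Transport Axioms A4 and A5 hold, using the correspondence $p \mapsto m_p$ together with the semigroup homomorphism property $pl_{m \cdot m'} = pl_m \cdot pl_{m'}$ from Theorem \ref{th:CombOfPl}. I prove item 1 first and then deduce item 2.

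For the first identity of item 1, I observe (as in the proof of Theorem \ref{th:WeakCombAxiom1}) that $m_{q_1 \cdot q_2} = m_{q_1} \cdot m_{q_2}$. By property C1 of Theorem \ref{th:Q-Separoid}, the trivial conditional independence $(\Theta_1 \vee \Lambda) \bot (\Theta_2 \vee \Lambda) \vert (\Theta_1 \vee \Lambda)$ holds, so axiom A5 for set potentials, together with axiom A6, gives
\begin{eqnarray*}
t_{\Theta_1 \vee \Lambda}(m_{q_1} \cdot m_{q_2}) = m_{q_1} \cdot t_{\Theta_1 \vee \Lambda}(m_{q_2}).
\end{eqnarray*}
From $\Theta_1 \bot \Theta_2 \vert \Lambda$, Theorem \ref{th:ExtQSeparoid} produces $(\Theta_2 \vee \Lambda) \bot (\Theta_1 \vee \Lambda) \vert \Lambda$, and then axiom A4 yields $t_{\Theta_1 \vee \Lambda}(m_{q_2}) = t_{\Theta_1 \vee \Lambda}(t_\Lambda(m_{q_2})) = t_{\Theta_1 \vee \Lambda}(m_{\pi_\Lambda(q_2)})$. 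Taking $pl$ of the combined equality and invoking Theorem \ref{th:CombOfPl} gives $\pi_{\Theta_1 \vee \Lambda}(q_1 \cdot q_2) = q_1 \cdot \pi_{\Theta_1 \vee \Lambda}(\pi_\Lambda(q_2))$. Since $\Lambda \leq \Theta_1 \vee \Lambda$, for $\theta \in \Theta_1 \vee \Lambda$ one has $\pi_{\Theta_1 \vee \Lambda}(\pi_\Lambda(q_2))(\theta) = \pi_\Lambda(q_2)(t_\Lambda(\theta))$, which is precisely the value the definition of combination assigns to $\pi_\Lambda(q_2)$ when multiplied with $q_1$, so the two combinations coincide and the first identity is established. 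The symmetric identity follows by interchanging the roles of $q_1$ and $q_2$ (the relation $\Theta_1 \bot \Theta_2 \vert \Lambda$ is symmetric by C2, and combination is commutative).

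For item 2, I chain item 1 with the projection axiom and Theorem \ref{th:WeakCombAxiom2}. Since $\Lambda \leq \Theta_1 \vee \Lambda \leq \Theta_1 \vee \Theta_2 \vee \Lambda = d(q_1 \cdot q_2)$, Theorem \ref{th:ProjAxiom} gives $\pi_\Lambda(q_1 \cdot q_2) = \pi_\Lambda(\pi_{\Theta_1 \vee \Lambda}(q_1 \cdot q_2))$. Substituting item 1 turns the right-hand side into $\pi_\Lambda(q_1 \cdot \pi_\Lambda(q_2))$, and Theorem \ref{th:WeakCombAxiom2} applied with $\Lambda \leq \Lambda \leq \Theta_1 \vee \Lambda = d(q_1)$ and $d(\pi_\Lambda(q_2)) = \Lambda$ collapses this to $\pi_\Lambda(q_1) \cdot \pi_\Lambda(q_2)$, as desired.

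The main obstacle is keeping the several conditional independence relations correctly aligned for the two applications of the set-potential axioms: the A5 step only needs the trivial $(\Theta_1 \vee \Lambda) \bot (\Theta_2 \vee \Lambda) \vert (\Theta_1 \vee \Lambda)$ from C1, whereas the A4 step uses the stronger $(\Theta_2 \vee \Lambda) \bot (\Theta_1 \vee \Lambda) \vert \Lambda$, which in turn requires Theorem \ref{th:ExtQSeparoid} to enlarge both sides of the hypothesis $\Theta_1 \bot \Theta_2 \vert \Lambda$ by $\Lambda$. A small but easy-to-overlook step in item 1 is the final reduction $q_1 \cdot \pi_{\Theta_1 \vee \Lambda}(\pi_\Lambda(q_2)) = q_1 \cdot \pi_\Lambda(q_2)$, which is immediate from the definition of combination in $\Phi$ because the second factor is only evaluated at $t_\Lambda(\theta)$.
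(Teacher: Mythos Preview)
Your proof is correct and follows essentially the same route as the paper: pass to set potentials via $m_{q_1 \cdot q_2} = m_{q_1} \cdot m_{q_2}$, apply A5 with the trivial independence $(\Theta_1 \vee \Lambda) \bot (\Theta_2 \vee \Lambda) \vert (\Theta_1 \vee \Lambda)$, then A4 with $(\Theta_2 \vee \Lambda) \bot (\Theta_1 \vee \Lambda) \vert \Lambda$, and return to $\Phi$ via $pl$. The only cosmetic differences are that the paper handles the last simplification in item 1 at the set-potential level (writing $t_{\Theta_1 \vee \Lambda}(t_\Lambda(m_{q_2})) = \mathbf{1}_{\Theta_1 \vee \Lambda} \cdot t_\Lambda(m_{q_2})$ and absorbing the unit into $m_{q_1}$) rather than at the probability-potential level as you do, and for item 2 the paper invokes Theorem~\ref{th:WeakCombAxiom1} directly (with domains $\Theta_1 \vee \Lambda$, $\Theta_2 \vee \Lambda$) instead of chaining item 1 with Theorems~\ref{th:ProjAxiom} and~\ref{th:WeakCombAxiom2}.
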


\begin{proof}
1.) We have by definition
\begin{eqnarray*}
\pi_{\Theta_1 \vee \Lambda}(q_1 \cdot q_2) = pl_{t_{\Theta_1 \vee \Lambda}(m_{q_1} \cdot m_{q_2})}.
\end{eqnarray*}
Then $\Theta_1 \vee \Lambda \bot \Theta_2 \vee \Lambda \vert \Theta_1 \vee \Lambda$ implies $t_{\Theta_1 \vee \Lambda}(m_{q_1} \cdot m_{q_2}) = m_{q_1} \cdot t_{\Theta_1 \vee \Lambda}(m_{q_2})$. Further, from $\Theta_1 \bot \Theta_2 \vert \Lambda$ follows $\Theta_1 \vee \Lambda \bot \Theta_2 \vee \Lambda \vert \Lambda$ and therefore by the Transport Axiom A4 for set potentials 
\begin{eqnarray*}
t_{\Theta_1 \vee \Lambda}(m_{q_2}) = t_{\Theta_1 \vee \Lambda}(t_\Lambda(m_{q_2})) = \mathbf{1}_{\Theta_1 \vee \Lambda} \cdot t_\Lambda(m_{q_2}).
\end{eqnarray*}
This gives us then $t_{\Theta_1 \vee \Lambda}(m_{q_1} \cdot m_{q_2}) = m_{q_1} \cdot t_\Lambda(m_{q_2})$, since the unit $\mathbf{1}_{\Theta_1 \vee \Lambda}$ is absorbed by the first factor. So, we obtain finally that
\begin{eqnarray*}
\pi_{\Theta_1 \vee \Lambda}(q_1 \cdot q_2) = pl_{m_{q_1} \cdot t_\Lambda(m_{q_2})} = q_1 \cdot \pi_\Lambda)q_2).
\end{eqnarray*}
This proves the first part.

2.) follows from then Theorem \ref{th:WeakCombAxiom1}.
\end{proof}

These results show that in case of conditional independence of potentials $q_1$ and $q_2$ given $\Lambda$, the part of information in $p = q_1 \cdot q_2$ relating to frame $\Theta_1 \vee \Lambda$ depends only on the information $q_2$ relating to frame $\Lambda$ and the part of information $p$ relating to frame $\Lambda$ depends only on the information in $q_1$ and $q_2$ relating to this same frame. As we shall see later, this has important computational consequences, see Section \ref{sec:LocComp}.

The following theorem shows that conditionals are closely related to conditional independence. The results of this theorem are a generalization of results for conditionals of probability potentials, or more generally, valuations in a regular valuation algebra in a \textit{multivariate} framework \cite{kohlas03}.

\begin{theorem}Ê\label{th:EquivOfCondIndep}
Assume $\Theta_1 \bot \Theta_2 \vert \Lambda$. Then the following statements are all equivalent:
\begin{enumerate}
\item $p = q_1 \cdot q_2$, where $d(q_1) = \Theta_1 \vee \Lambda$ and $d(q_2) = \Theta_2 \vee \Lambda$,
\item $p = p_{\Theta_1 \vert \Lambda} \cdot p_{\Theta_2 \vert \Lambda} \cdot \pi_\Lambda(p)$,
\item $p_{\Theta_1 \vee \Theta_2 \vert \Lambda} = p_{\Theta_1 \vert \Lambda}Ê\cdot p_{\Theta_2 \vert \Lambda}$,
\item $p_{\Theta_1 \vee \Theta_2 \vert \Lambda} = p_1 \cdot p_2$ where $d(p_1) = \Theta_1 \vee \Lambda$ and $d(p_2) = \Theta_2 \vee \Lambda$.,
\item $p  \cdot \pi_\Lambda(p) = \pi_{\Theta_1 \vee \Lambda}(p) \cdot \pi_{\Theta_2 \vee \Lambda}(p)$,
\item $p = p_{\Theta_1 \vert \Lambda} \cdot \pi_{\Theta_2 \vee \Lambda}(p)$,
\item $p_{\Theta_1 \vert \Theta_2 \vee \Lambda} = p_{\Theta_1 \vert \Lambda} \cdot f_{\pi_{\Theta_2 \vee \Lambda}(p)}$,
\item $p_{\Theta_1 \vert \Theta_2 \vee \Lambda} = q \cdot f_{\pi_{\Theta_2 \vee \Lambda}(p)}$, where $d(q) = \Theta_1 \vee \Lambda$.
\end{enumerate}
\end{theorem}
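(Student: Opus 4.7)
My plan is to prove the eight equivalences through two overlapping cycles sharing statement 1 as the hub. Specifically I will establish $1 \Rightarrow 2 \Rightarrow 3 \Rightarrow 4 \Rightarrow 1$ and $1 \Rightarrow 5 \Rightarrow 6 \Rightarrow 7 \Rightarrow 8 \Rightarrow 1$. All eight statements implicitly force $d(p) = \Theta_1 \vee \Theta_2 \vee \Lambda$ (statement 1 because $d(q_1 \cdot q_2) = \Theta_1 \vee \Theta_2 \vee \Lambda$; the others by the domains appearing on their right-hand sides), so I may assume this throughout without loss of generality.

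For the first cycle, $1 \Rightarrow 2$ uses Theorem \ref{th:CondIndOfPot} (whose hypothesis follows from $\Theta_1 \bot \Theta_2 \vert \Lambda$ via Theorem \ref{th:ExtQSeparoid}) to compute
\begin{eqnarray*}
\pi_{\Theta_i \vee \Lambda}(p) \;=\; q_i \cdot \pi_\Lambda(q_{3-i}), \quad \pi_\Lambda(p) \;=\; \pi_\Lambda(q_1) \cdot \pi_\Lambda(q_2),
\end{eqnarray*}
so that $p_{\Theta_i \vert \Lambda} = q_i \cdot f_{\pi_\Lambda(q_{3-i})} \cdot \pi_\Lambda(q_i)^{-1}$; then $p_{\Theta_1 \vert \Lambda} \cdot p_{\Theta_2 \vert \Lambda} \cdot \pi_\Lambda(p)$ simplifies to $q_1 \cdot q_2 \cdot f_{\pi_\Lambda(q_1)} \cdot f_{\pi_\Lambda(q_2)} = q_1 \cdot q_2 = p$ after absorbing the idempotents via Lemma \ref{le:ProjOrder}(2). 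For $2 \Rightarrow 3$, multiply statement 2 by $\pi_\Lambda(p)^{-1}$; the resulting factor $f_{\pi_\Lambda(p)}$ is absorbed into $p_{\Theta_i \vert \Lambda}$ by Lemma \ref{le:BasPropOfCiond}(1). The step $3 \Rightarrow 4$ is trivial, and $4 \Rightarrow 1$ follows from continuation (\ref{eq:Continuation}): $p = p_{\Theta_1 \vee \Theta_2 \vert \Lambda} \cdot \pi_\Lambda(p) = p_1 \cdot (p_2 \cdot \pi_\Lambda(p))$, whose two factors have domains $\Theta_1 \vee \Lambda$ and $\Theta_2 \vee \Lambda$.

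For the second cycle, $1 \Rightarrow 5$ is direct substitution, using the three projection identities from Theorem \ref{th:CondIndOfPot} to check $q_1 \cdot q_2 \cdot \pi_\Lambda(q_1) \cdot \pi_\Lambda(q_2) = (q_1 \cdot \pi_\Lambda(q_2)) \cdot (q_2 \cdot \pi_\Lambda(q_1))$. For $5 \Rightarrow 6$, right-multiply by $\pi_\Lambda(p)^{-1}$ and use Lemma \ref{le:ProjOrder}(2) to eliminate the resulting $f_{\pi_\Lambda(p)}$ on the left. Then $6 \Rightarrow 7$ is the one-line substitution of statement 6 into $p_{\Theta_1 \vert \Theta_2 \vee \Lambda} = p \cdot \pi_{\Theta_2 \vee \Lambda}(p)^{-1}$, producing $p_{\Theta_1 \vert \Lambda} \cdot f_{\pi_{\Theta_2 \vee \Lambda}(p)}$. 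The step $7 \Rightarrow 8$ is trivial (set $q = p_{\Theta_1 \vert \Lambda}$). Finally, $8 \Rightarrow 1$ uses continuation once more: $p = p_{\Theta_1 \vert \Theta_2 \vee \Lambda} \cdot \pi_{\Theta_2 \vee \Lambda}(p) = q \cdot f_{\pi_{\Theta_2 \vee \Lambda}(p)} \cdot \pi_{\Theta_2 \vee \Lambda}(p) = q \cdot \pi_{\Theta_2 \vee \Lambda}(p)$, which is a factorization of type 1 with $q_1 = q$ and $q_2 = \pi_{\Theta_2 \vee \Lambda}(p)$.

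The main obstacle is not any single implication but the disciplined bookkeeping of support idempotents $f_{\pi_\Lambda(p)}$ and $f_{\pi_{\Theta_i \vee \Lambda}(p)}$ that appear whenever a potential is divided by one of its projections. Each transition produces such spurious idempotents that must be absorbed, and every absorption step relies on a specific support-order inequality from Lemma \ref{le:ProjOrder}(2) or Lemma \ref{le:BasPropOfCiond}(1). The underlying algebraic engine is the partial division in the regular semigroup of potentials (the equivalence classes $[p]$), supplemented by repeated appeals to Theorem \ref{th:WeakCombAxiom1} to interchange combination and projection under the quasi-separoid relations implied by $\Theta_1 \bot \Theta_2 \vert \Lambda$. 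Keeping track, in each direction, of which conditional-independence instance is being invoked and which idempotent must be cleared is the real substance of the argument.
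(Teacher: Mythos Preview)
Your proof is correct and uses the same toolkit as the paper (Theorem~\ref{th:CondIndOfPot}, continuation~(\ref{eq:Continuation}), Lemmas~\ref{le:ProjOrder} and~\ref{le:BasPropOfCiond}, and the regular-semigroup division), but you organise the implications differently. The paper runs a single chain $(1)\Rightarrow(2)\Rightarrow\cdots\Rightarrow(8)\Rightarrow(1)$, whereas you split it into two loops through statement~1. The only substantive divergence is at statement~4: the paper proves $(4)\Rightarrow(5)$ directly, which requires computing $\pi_{\Theta_i\vee\Lambda}(p)$ from the anonymous factorisation $p_{\Theta_1\vee\Theta_2\vert\Lambda}=p_1\cdot p_2$ and then showing $\pi_\Lambda(p_1)\cdot\pi_\Lambda(p_2)=f_{\pi_\Lambda(p)}$ via Lemma~\ref{le:BasPropOfCiond}(2); you instead close the first loop with the cheap step $(4)\Rightarrow(1)$ (continuation plus regrouping) and reach~(5) afresh from~(1), where the explicit factors $q_1,q_2$ make the verification a one-line commutativity check. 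Your route is marginally shorter at that junction; the paper's single chain has the aesthetic advantage that each statement is shown to imply the next without returning to the hub.
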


\begin{proof}
We prove $(i) \Rightarrow (i+1)$ for $i=1$ to $7$ and then $(8) \Rightarrow (1)$,

$(1) \Rightarrow (2)$: Using the continuation property of conditionals (\ref{eq:Continuation}) we have, by Theorem \ref{th:CondIndOfPot},
\begin{eqnarray} \label{eq:CojndDecomp}
\lefteqn{p = q_1 \cdot q_2 }Ê \nonumber \\
&&= q_{1\ \Theta_1 \vert \Lambda} \cdot q_{2\ \Theta_2 \vert \Lambda} \cdot \pi_\Lambda(q_1) \cdot \pi_\Lambda(q_2) = q_{1\ \Theta_1 \vert \Lambda} \cdot q_{2\ \Theta_2 \vert \Lambda}  \cdot \pi_\Lambda(p)
\end{eqnarray}
Further, from Theorem \ref{th:CondIndOfPot} we obtain
\begin{eqnarray*}
\pi_{\Theta_1 \vee \Lambda}(p) = q_1 \cdot \pi_\Lambda(q_2) = q_{1\ \Theta_2 \vert \Lambda} \cdot \pi_\Lambda(q_1) \cdot \pi_\Lambda(q_2) = q_{1\ \Theta_1 \vert \Lambda} \cdot \pi_\Lambda(p).
\end{eqnarray*}
Using continuation, we get the equation
\begin{eqnarray*}
\pi_{\Theta_1 \vee \Lambda}(p) = p_{\Theta_1 \vert \Lambda} \cdot \pi_\Lambda(p) = q_{1\ \Theta_1 \vert \Lambda} \cdot \pi_\Lambda(p)
\end{eqnarray*}
and from this we derive $p_{\Theta_1 \vert \Lambda} = q_{1\ \Theta_1 \vert \Lambda} \cdot f_{\pi_\Lambda(p)}$, since by Lemma \ref{le:BasPropOfCiond} $[p_{\Theta_1 \vert \Lambda}] \geq [\pi_\Lambda(p)]$. In the same way we find that $p_{\Theta_2 \vert \Lambda} = q_{2\ \Theta_2 \vert \Lambda} \cdot f_{\pi_\Lambda(p)}$. Thus, from (\ref{eq:CojndDecomp}) we have
\begin{eqnarray*}
p = (q_{1 \Theta_1 \vert \Lambda} \cdot f_{\pi_\Lambda(p)}) \cdot (q_{2\ \Theta_2 \vert \Lambda} \cdot f_{\pi_\Lambda(p)}) \cdot \pi_\Lambda(p) = p_{\Theta_1 \vert \Lambda} \cdot p_{\Theta_2 \vert \Lambda} \cdot \pi_\Lambda(p),
\end{eqnarray*}
since $f_{\pi_\Lambda(p)} \cdot \pi_\Lambda(p) = \pi_\Lambda(p)$.

$(2) \Rightarrow (3)$: By continuation
\begin{eqnarray*}
p = p_{\Theta_1 \vee \Theta_2 \vert \Lambda} \cdot \pi_\Lambda(p)
\end{eqnarray*}
and by (2)
\begin{eqnarray*}
p = p_{\Theta_1 \vert \Lambda}Ê\cdot p_{\Theta_2 \vert \Lambda} \cdot \pi_\Lambda(p).
\end{eqnarray*}
Thus we have
\begin{eqnarray*}
p_{\Theta_1 \vee \Theta_2 \vert \Lambda} \cdot \pi_\Lambda(p) = p_{\Theta_1 \vert \Lambda}Ê\cdot p_{\Theta_2 \vert \Lambda} \cdot \pi_\Lambda(p)
\end{eqnarray*}
Multiplying both sides by $(\pi_\Lambda(p))^{-1}$ and using Lemma \ref{le:BasPropOfCiond} gives (3).

$(3) \Rightarrow (4)$: (4) follows from (3) by taking $p_1 = p_{\Theta_1 \vert \Lambda}$ and $p_2 = p_{\Theta_2 \vert \Lambda}$.

$(4) \Rightarrow (5)$: From (4) using continuation we have
\begin{eqnarray} \label{eq:DecomOfP}
p \cdot \pi_\Lambda(p) = p_{\Theta_1 \vee \Theta_2 \vert \Lambda} \cdot \pi_\Lambda(p) \cdot \pi_\Lambda(p)
= (p_1 \cdot \pi_\Lambda(p)) \cdot (p_2 \cdot \pi_\Lambda(p)).
\end{eqnarray}
Further, again by continuation, and Theorem \ref{th:CondIndOfPot},
\begin{eqnarray*}
\lefteqn{\pi_{\Theta_1 \vee \Lambda}(p) = \pi_{\Theta_1 \vee \Lambda}(p_{\Theta_1 \vee \Theta_2 \vert \Lambda} \cdot \pi_\Lambda(p)) }Ê\\
&&= \pi_{\Theta_1 \vee \Lambda}(p_1 \cdot p_2 \cdot \pi_\Lambda(p)) = p_1 \cdot \pi_\Lambda(p_2 \cdot \pi_\Lambda(p)).
\end{eqnarray*}
It then follows further, see Theorem \ref{th:WeakCombAxiom2},
\begin{eqnarray*}
\pi_{\Theta_1 \vee \Lambda}(p) = p_1 \cdot \pi_\Lambda(p_2) \cdot \pi_\Lambda(p).
\end{eqnarray*}
In the same way we obtain
\begin{eqnarray*}
\pi_{\Theta_2 \vee \Lambda}(p) = p_2 \cdot \pi_\Lambda(p_1) \cdot \pi_\Lambda(p).
\end{eqnarray*}
Furthermore, from Lemma \ref{le:BasPropOfCiond}, item 2, (4) and Theorem \ref{th:CondIndOfPot},
\begin{eqnarray*}
f_{\pi_\Lambda(p)} = \pi_\Lambda(p_{\Theta_1 \vee \Theta_2 \vert \Lambda}) = \pi_\Lambda(p_1 \cdot p_2) = \pi_\Lambda(p_1) \cdot \pi_\Lambda(p_2).
\end{eqnarray*}
This allows us finally to write
\begin{eqnarray*}
\lefteqn{\pi_{\Theta_1 \vee \Lambda}(p) \cdot \pi_{\Theta_2 \vee \Lambda}(p) = (p_1 \cdot \pi_\Lambda(p)) \cdot (p_2 \cdot \pi_\Lambda(p)) \cdot (\pi_\Lambda(p_1) \cdot \pi_\Lambda(p_2))    }Ê\\
&&= (p_1 \cdot \pi_\Lambda(p)) \cdot (p_2 \cdot \pi_\Lambda(p)) \cdot f_{\pi_\Lambda(p)} = (p_1 \cdot \pi_\Lambda(p)) \cdot (p_2 \cdot \pi_\Lambda(p)) Ê\\
&&= p \cdot \pi_\Lambda(p)
\end{eqnarray*}
the last equality is due to (\ref{eq:DecomOfP}).

$(5) \Rightarrow (6)$: Starting with (5) and using continuation, we have
\begin{eqnarray*}
p \cdot \pi_\Lambda(p) = \pi_{\Theta_1 \vee \Lambda}(p) \cdot \pi_{\Theta_2 \vee \Lambda}(p)
= p_{\Theta_1 \vert \Lambda} \cdot \pi_\Lambda(p) \cdot \pi_{\Theta_2 \vee \Lambda}(p).
\end{eqnarray*}
Eliminating $\pi_\Lambda$ on both sides (using Lemma \ref{le:ProjOrder}) yields
\begin{eqnarray*}
p = p_{\Theta_1 \vert \Lambda} \cdot \pi_{\Theta_2 \vee \Lambda}(p)
\end{eqnarray*}

$(6) \Rightarrow (7)$: By continuation and (6) we have
\begin{eqnarray*}
p = p_{\Theta_1 \vert \Theta_2 \vee \Lambda} \cdot \pi_{\Theta_2 \vee \Lambda}(p) = p_{\Theta_1 \vert  \Lambda} \cdot  \pi_{\Theta_2 \vee \Lambda}(p).
\end{eqnarray*}
(7) follows from the rightmost equality by elimination of $ \pi_{\Theta_2 \vee \Lambda}(p)$ (using Lemma \ref{le:ProjOrder}),

$(7) \Rightarrow (8)$: Take $q = p_{\Theta_1 \vert \Lambda}$.

$(8) \Rightarrow (1)$: By continuation and (8)
\begin{eqnarray*}
p = p_{\Theta_1 \vert  \Theta_2 \vee \Lambda} \cdot \pi_{\Theta_2 \vee \Lambda}(p) = q \cdot f_{\pi_{\Theta_2 \vee \Lambda}(p)} \cdot \pi_{\Theta_2 \vee \Lambda}(p) = q \cdot \pi_{\Theta_2 \vee \Lambda}(p),
\end{eqnarray*}
where $d(q) = \Theta_1 \vee \Lambda$ Take now $q_1 = q$ and $q_2 =  \pi_{\Theta_2 \vee \Lambda}(p)$ and then (1) follows.
\end{proof}

We remark that al these results hold also for densities (Section \ref{subsec:AbsContPAS}), as well as in many other valuation algebras, as has been shown in \cite{kohlas03}.

\section{Conditional Independence Structures} \label{sec:LocComp}


\subsection{Markov Trees} \label{subsec:MTrees}

In this section we review a more complex conditional independence structure which plays also an important role in algorithmic issues (see Section \ref{subsec:LocComp}). Most of this material has been developed in \cite{kohlas17}, so that we may refer to this text for proofs. 

Consider a tree $T = (V,E)$ with nodes set $V$ and edges $E \subseteq V^2$, where $V^2$ is the family of two-element subsets of $V$. Let $\Lambda : V \rightarrow \mathcal{F}$ be a labeling of the nodes of the tree with frames. The pair $(T,\Lambda)$ is called a labeled tree. By $ne(v)$ we denote the sets of neighbour nodes of $v$ in the tree, that is $ne(v) = \{w:w \in \{v,w\} \in E\}$. When a node $v$ is eliminated for $T$ together with all edges incident to it, then a family of subtrees $\{T_{v,w} = (V_{v,w},E_{v,w}):w \in ne(v)\}$ remain, where $T_{v,w}$ is the subtree of $T$ containing node $w \in ne(v)$. For any subset $U$ of nodes let
\begin{eqnarray*}
\Lambda(U) = \vee_{v \in U} \Lambda(v).
\end{eqnarray*}
These considerations lead to the definition of \textit{Markov Trees}.

\begin{definition} {Markov Tree:}
A labeled tree $(T,\Lambda)$ with $T =(V,E)$ is called a Markov tree, if for all $v \in V$,
\begin{eqnarray}
\bot \{\Lambda(V_{v,w}):w \in ne(v)\} \vert \Lambda(v).
\end{eqnarray}
\end{definition}

Markov tree have early been identified as important structures for efficient computation with belief functions using Dempster's rule \cite{SSM87,kohlasmonney95}. For computations with probability potentials, structures like join- or junction trees were proposed. In the multivariate setting Markov and join trees are equivalent, but this is no more true in the present more general setting of f.c.f. We refer to \cite{kohlas17} for more on this subject.

The following are two important results on Markov trees:

\begin{theorem} \label{th:SubTreeMarkov}
Let $(T,\Lambda)$ be a Markov tree. Then any subtree is also a Markov tree.
\end{theorem}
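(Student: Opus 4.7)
The plan is to reduce the Markov property of a subtree $T' = (V',E')$ directly to the Markov property of the ambient tree $T = (V,E)$ by exploiting two straightforward set-theoretic facts and the q-separoid closure properties collected in Theorem \ref{th:ExtQSeparoid}.

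First I would fix an arbitrary node $v \in V'$ and establish two observations. Write $ne'(v)$ for the neighbours of $v$ in $T'$ and $V'_{v,w}$ for the subtree of $T'$ on the $w$-side after removing $v$. Since the edges of $T'$ are a subset of the edges of $T$, every $w \in ne'(v)$ is also a neighbour of $v$ in $T$, so $ne'(v) \subseteq ne(v)$. Moreover, if $u \in V'_{v,w}$, then the path from $w$ to $u$ in $T' - v$ is also a path in $T - v$, giving $V'_{v,w} \subseteq V_{v,w}$, and hence $\Lambda(V'_{v,w}) \leq \Lambda(V_{v,w})$ by monotonicity of the join.

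Now I would invoke the Markov hypothesis for $T$ at $v$, namely $\bot \{\Lambda(V_{v,w}):w \in ne(v)\} \vert \Lambda(v)$. Item 2 of Theorem \ref{th:ExtQSeparoid} lets me restrict the indexing to the subset $ne'(v) \subseteq ne(v)$, yielding $\bot \{\Lambda(V_{v,w}):w \in ne'(v)\} \vert \Lambda(v)$. Then I apply item 3 of the same theorem once for each $w \in ne'(v)$ (using the dominance $\Lambda(V'_{v,w}) \leq \Lambda(V_{v,w})$ established above and the permutation invariance from item 1 to move each targeted entry into the first position), replacing each occurrence of $\Lambda(V_{v,w})$ by $\Lambda(V'_{v,w})$. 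Since $ne'(v)$ is finite this iteration terminates in finitely many steps, producing $\bot \{\Lambda(V'_{v,w}):w \in ne'(v)\} \vert \Lambda(v)$. As $v \in V'$ was arbitrary, $(T',\Lambda)$ is a Markov tree.

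The only real subtlety is the iterated use of item 3: each application only weakens a single frame, so I must compose the steps, and permutation invariance (item 1) is what guarantees that the operation is legitimate at every position of the tuple. Apart from this bookkeeping the argument is purely structural, and no refining-level calculation is required.
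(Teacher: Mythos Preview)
Your argument is correct: the two set-theoretic inclusions $ne'(v) \subseteq ne(v)$ and $V'_{v,w} \subseteq V_{v,w}$ are exactly what is needed, and the q-separoid closure properties of Theorem~\ref{th:ExtQSeparoid} (items 1, 2, and 3) then yield the Markov condition at every node of the subtree. The paper itself does not give a proof here but defers to \cite{kohlas17}, so there is no in-paper argument to compare against; your self-contained derivation from Theorem~\ref{th:ExtQSeparoid} is precisely the kind of proof one would expect in that reference.
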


\begin{theorem} \label{th:NeigborCondIndep}
Let $(T,\Lambda)$ be a Markov tree. Then for any node $v$ and all nodes $w \in ne(v)$, we have
\begin{eqnarray*}
\Lambda(v) \bot \Lambda(V_{v,w}) \vert \Lambda(w).
\end{eqnarray*}
\end{theorem}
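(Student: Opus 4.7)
The plan is to apply the Markov property at node $w$ (not at $v$) and then clean up using the closure properties of the quasi-separoid collected in Theorem \ref{th:ExtQSeparoid} and Theorem \ref{th:Q-Separoid}.

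The first step is the combinatorial observation about how the tree decomposes at the edge $\{v,w\}$. When $v$ is removed, the component containing $w$ is $V_{v,w}$; this component is built out of $w$ together with the subtrees hanging off $w$ on the side away from $v$. Formally, writing $ne(w) = \{v\} \cup (ne(w) \setminus \{v\})$, we have
\begin{eqnarray*}
V_{v,w} = \{w\} \cup \bigcup_{u \in ne(w) \setminus \{v\}} V_{w,u},
\end{eqnarray*}
and hence $\Lambda(V_{v,w}) = \Lambda(w) \vee \bigvee_{u \in ne(w) \setminus \{v\}} \Lambda(V_{w,u})$. This is the bookkeeping identity that translates the Markov hypothesis at $w$ into a statement about $\Lambda(V_{v,w})$.

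The second step is to invoke the Markov property at $w$: $\bot\{\Lambda(V_{w,u}) : u \in ne(w)\} \vert \Lambda(w)$. One of the frames in this family is $\Lambda(V_{w,v})$ (corresponding to $u = v$); the remaining frames are exactly the $\Lambda(V_{w,u})$ for $u \in ne(w)\setminus\{v\}$. Item 4 of Theorem \ref{th:ExtQSeparoid} lets me replace the latter by their join in a single step, yielding
\begin{eqnarray*}
\Lambda(V_{w,v}) \bot \bigvee_{u \in ne(w) \setminus \{v\}} \Lambda(V_{w,u}) \,\vert\, \Lambda(w).
\end{eqnarray*}

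The third step is to adjust the two sides of this binary relation so they match the target. Since $v \in V_{w,v}$, we have $\Lambda(v) \leq \Lambda(V_{w,v})$, so item 3 of Theorem \ref{th:ExtQSeparoid} (equivalently axiom C3) lets me shrink the left-hand frame from $\Lambda(V_{w,v})$ to $\Lambda(v)$. On the right-hand side, axiom C4 (or item 5 of Theorem \ref{th:ExtQSeparoid}) allows me to adjoin $\Lambda(w)$ into the join without disturbing the relation, producing exactly $\Lambda(V_{v,w})$ by the identity above. Combining these two adjustments gives $\Lambda(v) \bot \Lambda(V_{v,w}) \vert \Lambda(w)$, which is the claim.

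I do not anticipate a real obstacle here: the only thing to get right is the indexing convention for $V_{v,w}$ versus $V_{w,v}$ and the corresponding decomposition of the component of $T$ across the edge $\{v,w\}$. Once that is recorded, the conclusion is a three-line application of properties C3, C4 and the generalization of C4 to $n$-ary independence.
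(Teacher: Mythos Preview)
Your argument is correct: applying the Markov condition at $w$, collapsing the branches $u\in ne(w)\setminus\{v\}$ via item~4 (and item~1) of Theorem~\ref{th:ExtQSeparoid}, then shrinking the left side by C3 and absorbing $\Lambda(w)$ into the right side by C4 yields exactly $\Lambda(v)\bot\Lambda(V_{v,w})\vert\Lambda(w)$, using the decomposition $V_{v,w}=\{w\}\cup\bigcup_{u\in ne(w)\setminus\{v\}}V_{w,u}$. The only small omission is the degenerate case where $w$ is a leaf ($ne(w)=\{v\}$), so the join on the right is empty; there $\Lambda(V_{v,w})=\Lambda(w)$ and the claim is immediate from C1.

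As for comparison with the paper: the paper does not actually prove this theorem in the text but defers both Theorem~\ref{th:SubTreeMarkov} and Theorem~\ref{th:NeigborCondIndep} to the reference \cite{kohlas17}. Your self-contained argument from the q-separoid axioms is therefore a genuine addition rather than a reconstruction.
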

 For the proof of these theorems we refer to \cite{kohlas17}.

As mentioned, Markov trees are important for computational purposes, see Section \ref{subsec:LocComp}

In continuation of the subject of Section \ref{subsec:CondIndPot} we consider now factorizations over conditionally independent frames. We extend Definition \ref{def:CondIndepPot} of Section \ref{subsec:CondIndPot} as follows:

\begin{definition}
Let $\bot \{\Theta_1,\ldots,\Theta_n\} \vert \Lambda$. If $q_1,\ldots,q_n$ are probability potentials such that $d(q_i) = \Theta_i \vee \Lambda$ for $i = 1;\ldots, n$, then we call the probability potentials $q_1$ to $q_n$ conditionally independent given $\Lambda$ and write $\bot \{q_1,\ldots,q_n\} \vert \Lambda$.
\end{definition}
As in the binary case ($n = 2$) this implies the following results on projection:

\begin{theorem}
$\bot \{\Theta_1,\ldots,\Theta_n\} \vert \Lambda$ implies
\begin{enumerate}
\item $\pi_{\Theta_i \vee \Lambda}(q_1 \cdot \ldots \cdot q_n) = q_i \cdot \left( \prod_{j=1,j\not=i}^n \pi_\Lambda(q_j) \right)$,
\item $\pi_\Lambda(q_1 \cdot \ldots \cdot q_n) = \pi_\Lambda(q_1) \cdots \pi_\Lambda(q_n)$.
\end{enumerate}
\end{theorem}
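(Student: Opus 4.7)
The plan is to prove both items jointly by induction on $n \geq 2$, with Theorem \ref{th:CondIndOfPot} as the base case ($n=2$) and Theorem \ref{th:ExtQSeparoid} providing the machinery to reduce from $n$ to $n-1$. The key idea is that the $n$-ary conditional independence of frames can be ``bracketed'' into a binary one, so that the binary result carries almost all the load.

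For the inductive step, suppose both items hold for fewer than $n$ factors and that $\bot \{\Theta_1, \ldots, \Theta_n\} \vert \Lambda$. For item 1, fix $i$; by the permutation property (Theorem \ref{th:ExtQSeparoid}, item 1) we may assume $i = 1$. Repeatedly applying item 4 of Theorem \ref{th:ExtQSeparoid} to join the last $n-1$ frames yields
\begin{eqnarray*}
\Theta_1 \bot (\Theta_2 \vee \cdots \vee \Theta_n) \vert \Lambda.
\end{eqnarray*}
Setting $q' = q_2 \cdot \ldots \cdot q_n$, the labeling axiom gives $d(q') = \Theta_2 \vee \cdots \vee \Theta_n \vee \Lambda$, so the pair $(q_1, q')$ realises the binary conditional independence of Definition \ref{def:CondIndepPot} relative to $\Theta_1$ and $\Theta_2 \vee \cdots \vee \Theta_n$. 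Theorem \ref{th:CondIndOfPot} then delivers
\begin{eqnarray*}
\pi_{\Theta_1 \vee \Lambda}(q_1 \cdot q') = q_1 \cdot \pi_\Lambda(q'), \qquad \pi_\Lambda(q_1 \cdot q') = \pi_\Lambda(q_1) \cdot \pi_\Lambda(q').
\end{eqnarray*}

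To finish, I would invoke item 2 of Theorem \ref{th:ExtQSeparoid} to obtain $\bot \{\Theta_2, \ldots, \Theta_n\} \vert \Lambda$, and then apply the inductive hypothesis for part 2 (with $n-1$ factors) to conclude $\pi_\Lambda(q') = \pi_\Lambda(q_2) \cdots \pi_\Lambda(q_n)$. Substituting this back into the two displays above gives both part 1 (for $i = 1$, hence for all $i$ by the symmetry argument above) and part 2, closing the induction.

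The main obstacle is essentially bookkeeping rather than anything deep: one has to verify that the repeated joining of frames under item 4 of Theorem \ref{th:ExtQSeparoid} produces exactly the binary independence whose domain matches $d(q')$, so that Theorem \ref{th:CondIndOfPot} applies verbatim. Once the frame assignments are lined up, no new computation is needed; all the genuine work has already been done in the binary case.
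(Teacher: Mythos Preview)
Your proposal is correct and follows essentially the same route as the paper: reduce the $n$-ary conditional independence to the binary one $\Theta_1 \bot (\Theta_2 \vee \cdots \vee \Theta_n) \vert \Lambda$ via Theorem~\ref{th:ExtQSeparoid}, apply Theorem~\ref{th:CondIndOfPot} to $q_1$ and $q' = q_2 \cdots q_n$, and finish item~2 by induction on the remaining $n-1$ factors. Your write-up is in fact slightly more careful than the paper's, which simply cites Theorem~\ref{th:ExtQSeparoid} for the bracketing and Theorem~\ref{th:CondIndOfPot} for both displays without spelling out the permutation step or the domain check for $q'$.
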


\begin{proof}
Note that $\bot \{\Theta_1,\ldots,\Theta_n\} \vert \Lambda$ implies $\Theta_1 \bot \vee_{j=2}^n \Theta_j \vert \Lambda$, see Theorem \ref{th:ExtQSeparoid}. So, by Theorem \ref{th:CondIndOfPot} 
\begin{eqnarray*}
\pi_\Lambda(q_1 \cdot \ldots \cdot q_n) = \pi_\Lambda(q_1) \cdot \pi_\Lambda(\prod_{j=2}^n q_j).
\end{eqnarray*}
By induction we obtain 
\begin{eqnarray*}
\pi_\Lambda(\prod_{j=2}^n q_j) = \prod_{j=2}^n \pi_\Lambda(q_j),
\end{eqnarray*}
hence item 2 follows. And, again by Theorem \ref{th:CondIndOfPot},
\begin{eqnarray*}
\pi_{\Theta_1 \vee \Lambda}(q_1 \cdot \ldots \cdot q_n) = q_1 \cdot \pi_\Lambda(\prod_{j=2}^n q_j),
\end{eqnarray*}
from which item 1 follows.
\end{proof}

Theorem \ref{th:EquivOfCondIndep} about equivalent formulations of conditional independence extends to the case $n > 2$ in the following way.

\begin{theorem}
Assume $\bot \{\Theta_1,\ldots,\Theta_n\} \vert \Lambda$. Then the following statements are all equivalent:
\begin{enumerate}
\item $p = q_1 \cdot \ldots \cdot q_n$, where $d(q_i) = \Theta_i$ for $i = 1 \ldots,n$,
\item $p = p_{\Theta_1 \vert \Lambda} \cdots p_{\Theta_n \vert \Lambda} \cdot \pi_\Lambda(p)$,
\item $p_{\Theta_1 \vee \cdots \vee \Theta_n \vert \Lambda} = p_{\Theta_1 \vert \Lambda} \cdots p_{\Theta_n \vert \Lambda}$,
\item $p_{\Theta_1 \vee \cdots \vee \Theta_n \vert \Lambda} = p_1 \cdots p_n$ with $d(pq_i) = \Theta_i \vee \Lambda$, $i=1,\ldots,n$,
\item $p \cdot \pi_\Lambda^{n-1}(p) = \pi_{\Theta_1 \vee \Lambda}(p) \cdots \pi_{\Theta_1 \vee \Lambda}(p)$.
\item $p = p_{\Theta_1 \vert \Lambda} \cdot \pi_{\Theta_2 \vee \cdots \vee \Theta_n \vee \Lambda}(p)$.
\item $p_{\Theta_1 \vert \Theta_2 \vee \cdots  \vee \Theta_n \vee \Lambda} = p_{\Theta_1 \vert \Lambda} \cdot f_{\pi_{\Theta_2 \vee \cdots \vee \Theta_n \vee \Lambda}(p)}$.
\item $p_{\Theta_1 \vert \Theta_2 \vee \cdots  \vee \Theta_n \vee \Lambda} = q \cdot f_{\pi_{\Theta_2 \vee \cdots \vee \Theta_n \vee \Lambda}(p)}$, with $d(q) = \Theta_1 \vee \Lambda$.
\end{enumerate}
\end{theorem}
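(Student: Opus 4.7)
The plan is to proceed by induction on $n$, with the base case $n=2$ being Theorem~\ref{th:EquivOfCondIndep}. The pivot is Theorem~\ref{th:ExtQSeparoid}: items 2 and 4 imply that $\bot\{\Theta_1,\ldots,\Theta_n\}\vert\Lambda$ entails both the binary relation $\Theta_1\bot(\Theta_2\vee\cdots\vee\Theta_n)\vert\Lambda$ and the reduced multi-frame relation $\bot\{\Theta_2,\ldots,\Theta_n\}\vert\Lambda$. Thus at each inductive step I can peel off $\Theta_1$, apply the binary theorem to the split $(\Theta_1,\,\Theta_2\vee\cdots\vee\Theta_n)$, and then invoke the inductive hypothesis on the remaining $n-1$ frames. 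As in the binary case, I would establish the cycle $(1)\Rightarrow(2)\Rightarrow\cdots\Rightarrow(8)\Rightarrow(1)$ in the given order.

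For $(1)\Rightarrow(2)$: set $q_\ast = q_2\cdots q_n$, so that $d(q_\ast)=\Theta_2\vee\cdots\vee\Theta_n\vee\Lambda$ and $p=q_1\cdot q_\ast$ is a binary factorization along $\Theta_1$ vs.\ $\Theta_2\vee\cdots\vee\Theta_n$. The binary $(1)\Rightarrow(2)$ gives
\[
 p \;=\; p_{\Theta_1\vert\Lambda}\cdot p_{\Theta_2\vee\cdots\vee\Theta_n\vert\Lambda}\cdot\pi_\Lambda(p).
\]
Then $\pi_{\Theta_2\vee\cdots\vee\Theta_n\vee\Lambda}(p)=q_\ast\cdot\pi_\Lambda(q_1)$ (Theorem~\ref{th:CondIndOfPot}) is itself a conditionally-independent factorization of a potential of reduced arity. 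Applying the inductive hypothesis $(1)\Rightarrow(3)$ to it unfolds $p_{\Theta_2\vee\cdots\vee\Theta_n\vert\Lambda}$ as $p_{\Theta_2\vert\Lambda}\cdots p_{\Theta_n\vert\Lambda}$, after identifying the marginal conditionals of $\pi_{\Theta_2\vee\cdots\vee\Theta_n\vee\Lambda}(p)$ with those of $p$ via Lemma~\ref{le:BasPropOfCiond}~(item~4) and absorbing any $f_{\pi_\Lambda(p)}$ into the trailing factor $\pi_\Lambda(p)$. The implications $(2)\Rightarrow(3)$, $(3)\Rightarrow(4)$, $(6)\Rightarrow(7)$, $(7)\Rightarrow(8)$, $(8)\Rightarrow(1)$ transfer verbatim from the binary proof, as each relies only on continuation~(\ref{eq:Continuation}), Lemma~\ref{le:BasPropOfCiond}, and cancellation of $\pi_\Lambda(p)$ via Lemma~\ref{le:ProjOrder}.

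The delicate step is $(4)\Rightarrow(5)$, where the exponent $n-1$ on $\pi_\Lambda(p)$ appears. I would derive it inductively by peeling off one frame at a time: applying the binary $(4)\Rightarrow(5)$ to the split $(\Theta_1,\Theta_2\vee\cdots\vee\Theta_n)$ yields
\[
 p\cdot\pi_\Lambda(p) \;=\; \pi_{\Theta_1\vee\Lambda}(p)\cdot\pi_{\Theta_2\vee\cdots\vee\Theta_n\vee\Lambda}(p);
\]
the inductive hypothesis, applied to the factorization $\pi_{\Theta_2\vee\cdots\vee\Theta_n\vee\Lambda}(p)=p_2\cdots p_n\cdot\pi_\Lambda(p)\cdot f_{\pi_\Lambda(p)}$ obtained from~(3), expands this marginal into $\pi_{\Theta_2\vee\Lambda}(p)\cdots\pi_{\Theta_n\vee\Lambda}(p)\cdot\pi_\Lambda(p)^{-(n-2)}$, with the marginals of $p$ and of this reduced potential identified via Theorem~\ref{th:WeakCombAxiom2} and Theorem~\ref{th:ProjAxiom}. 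Substituting and clearing denominators accumulates exactly $n-1$ copies of $\pi_\Lambda(p)$ on the left. The implications $(5)\Rightarrow(6)$ and $(6)\Rightarrow(7)$ then follow by cancellation as in the binary proof.

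The main obstacle is precisely this bookkeeping of the power $\pi_\Lambda^{n-1}(p)$: one must ensure at every grouping that the support-set equivalences (Lemma~\ref{le:ProjOrder}) remain strong enough to cancel the extra $\pi_\Lambda(p)$ factors, and that the conditional-independence hypothesis required by the binary case is supplied at each stage (by Theorem~\ref{th:ExtQSeparoid}, items~2 and~4). Once these two points are pinned down, the remainder of the cycle is an essentially mechanical transcription of the proof of Theorem~\ref{th:EquivOfCondIndep}.
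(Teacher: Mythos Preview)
Your proposal is correct and matches one of the two options the paper itself suggests: the paper's proof consists only of the remark that ``the proof of Theorem~\ref{th:EquivOfCondIndep} carries easily over to this more general case, or, alternatively, the results may be derived directly from Theorem~\ref{th:EquivOfCondIndep}.'' You have chosen the second route and worked it out carefully, using Theorem~\ref{th:ExtQSeparoid} to supply the conditional-independence relations needed at each inductive step and Theorem~\ref{th:ProjAxiom} to identify marginals of the reduced potential with those of $p$; this is exactly the intended derivation, and your handling of the power $\pi_\Lambda^{n-1}(p)$ in $(4)\Rightarrow(5)$ via the telescoping product is the right mechanism.

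One small point worth tightening: in the step $(4)\Rightarrow(5)$ you invoke ``the factorization obtained from~(3)'', but at that point in the cycle you are only entitled to assume~(4). This is harmless, since~(4) together with continuation~(\ref{eq:Continuation}) already gives $p=p_1\cdots p_n\cdot\pi_\Lambda(p)$, and then Theorem~\ref{th:CondIndOfPot} (in its $n$-ary form proved just before this theorem) yields the required factorization of $\pi_{\Theta_2\vee\cdots\vee\Theta_n\vee\Lambda}(p)$ directly, without passing through~(3). With that adjustment your argument is complete.
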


\begin{proof}
The proof of Theorem \ref{th:EquivOfCondIndep} carries easily over to this more general case, or, alternatively, the results may be derived directly from Theorem \ref{th:EquivOfCondIndep}.
\end{proof}

Next, we consider factorizations over Markov trees. So, let $(T,\Lambda)$ be a Markov tree, $T = (V,E)$ and consider a probability potential $p$ sucht that
\begin{eqnarray} \label{eq:MTreeFact}
p = \prod_{v \in V} q_v, \textrm{ with}\ d(q_v)  = \Lambda(v).
\end{eqnarray} \label{eq:MTfact}
As we know, this is a probability potential. Define
\begin{eqnarray} \label{eq:SubTreefact}
p_{v,w} = \prod_{u \in V_{v,w}} q_u
\end{eqnarray}
so that 
\begin{eqnarray*}
p = \prod_{w \in ne(v)} p_{v,w} \cdot q_v,
\end{eqnarray*}
where $d(p_{v,w}) = \Lambda(V_{v,w})$. By the conditional independence condition defining Markov trees, we obtain, by a generalization of Theorem \ref{th:WeakCombAxiom1} (see Section \ref{subsec:AlgOfProbPot})
\begin{eqnarray*}
\pi_{\Lambda(v)}(p) = \prod_{w \in ne(v)} \pi_{\Lambda(v)}(p_{v,w}) \cdot q_v,
\end{eqnarray*}
Using Theorem \ref{th:NeigborCondIndep}, we obtain further
\begin{eqnarray} \label{eq:RecLocComp}
\pi_{\Lambda(v)}(p) = \prod_{w \in ne(v)} \pi_{\Lambda(v)}(\pi_{\Lambda(w)}(p_{v,w})) \cdot q_v,
\end{eqnarray}
We remark that this is a recursive formula to compute the projection $\pi_{\Lambda(v)}(p)$ of the factorization, since the trees $T_{v,w} = (V_{v,w},E_{v,w})$ are still Markov trees and $\pi_{\Lambda(w)}(p_{v,w})$ can be computed in these subtrees in a similar way. This means finally, in order to compute $\pi_{\Lambda(v)}(p)$, we need only to combine probability potentials on nodes of the tree and transport potentials to a neighbouring node. So, this is a \textit{local computation scheme}, generalizing the well-known procedure from multivariate models to the much more general case considered here. 

Let's examine the combination in (\ref{eq:RecLocComp}) a bit more closely. Of course, we may compute this combination sequentially as
\begin{eqnarray*} 
\pi_{\Lambda(v)}(p) = (\ldots((q_v \cdot \pi_{\Lambda(v)}(\pi_{\Lambda(w)}(p_{v,w})) \cdot \pi_{\Lambda(v)}(\pi_{\Lambda(w')}(p_{v,w'}) \cdot \ldots),
\end{eqnarray*}
over any sequence of neighbourg nodes $w,w',\ldots$ of $v$. Each time a transport of potential to node $v$ must be combined with a probability potential on node $v$. Recall the basic formula for this: Let $p_1$ be a potential on frame $\Theta$, $p_2$ on $\Lambda$, then
\begin{eqnarray} \label{eq:CombPlausSingl}
p_1 \cdot \pi_\Theta(p_2)(\theta) = p_1(\theta) \cdot \sum_{\lambda:\tau(\theta) \cap \mu(\lambda) \not= \emptyset} p_2(\lambda),
\end{eqnarray}
where $\tau$ and $\mu$ are the refinings of $\Theta$ and $\Lambda$ to $\Theta \vee \Lambda$. Here $p_1(\theta)$ is simply multiplied with the sum of the $p_2(\lambda)$ over $\lambda$ compatible with $\theta$. Note that this sum is a probability potential, although not normalized. This is important, since it means the whole computational scheme to compute $\pi_{\Lambda(v)}(p)$ runs in the subsystem of set potentials consisting of probability potentials.



\subsection{Local Computation} \label{subsec:LocComp}

Many local computation architectures proposed for multivariate models carry over to the present more general model of probability potentials on general f.c.f. The original paper on local computation in multivariate models with probability potentials is \cite{lauritzenspiegelhalter88}, later work on this subject is to be found in \cite{shafer96,cowell99}. This computational scheme has been generalized to abstract valuation algebras by \cite{shenoyshafer90} and the corresponding computational architecture, based on message passing, has been called \textit{Shenoy-Shafer architecture}, see also \cite{kohlasshenoy00,kohlas03}. This architecture carries directly over to our present case and will be the first scheme shortly described in this section. In the particular case of probability potentials on multivariate frames, division of potentials can be exploited. This can simplify the Shenoy-Shafer-architecture. A first version using division has already been proposed in \cite{lauritzenspiegelhalter88}, a variant thereof called HUGIN-architecture followed, see for instance \cite{shafer96}. Both archtitecures can be used also for some special kind of abstract valuation algebras, called regular or separative valuation algebras. This has first been observed by \cite{lauritzenjensen97} and worked out in \cite{kohlas03,kohlaswilson06}, always in the framework of multivariate models. These computational schemes using division can be adapted to probability potentials on \textit{commutative}  f.c.f. This will be discussed in the next section.

Let $(T,\Lambda)$ be a Markov tree, with $T = (V,E)$ a tree with vertices $V$ and edges $E \subseteq V^2$. Let further $p$ be a factorization over the Markov tree defined by (\ref{eq:MTreeFact}), so that $\pi_{\Lambda(v)}$ for a selected node $v \in V$ can be computed by the recursion (\ref{eq:RecLocComp}). This scheme can be described by a message passing mechanism. Define $p_{v,w}$ as in the previous section by (\ref{eq:SubTreefact}) relative to any pair of vertices. Then let, motivated by (\ref{eq:RecLocComp}),
\begin{eqnarray} \label{eq:OrigMess}
\mu_{w \rightarrow v} =\pi_{\Lambda(v)}(\pi_{\Lambda(w)}(p_{v,w})).
\end{eqnarray}
Define $\eta_w = \pi_{\Lambda(w)}(p_{v,w})$. Then we have, similar to (\ref{eq:CombPlausSingl}),
\begin{eqnarray*}
\mu_{w \rightarrow v}(\theta) = \sum_{\lambda:\tau(\theta) \cap \nu(\lambda) \not= \emptyset} \eta_w(\lambda),
\end{eqnarray*}
for all $\theta \inÊ\Lambda(v)$, if $\tau$ and $\nu$ are the refinings of $\Lambda(v)$ and $\Lambda(w)$ to $\Lambda(v) \vee \Lambda(w)$ resepectively. 

The recursive computational scheme of the previous section can now be described in terms of messages as follows: 
\begin{enumerate}
\item There is always at least one leaf node $w$ in the tree, which is incident to only one edge, hence with a single neighbour $v$. Then $\eta_w = q_w$ and the message $\mu_{w \rightarrow v}$ can be computed.
\item Once a node $w$ has received message form all its neighbors $u$, except a node $v$, then it can compute
\begin{eqnarray*}
\eta_w = q_w \cdot \prod_{u \in ne(w),u \not= v} \mu_{u \rightarrow w}
\end{eqnarray*}
and it can send the message $\mu_{w \rightarrow v}$ to its neighbor $v$.
\item The last node $v$, called \textit{root node}, which received messages from all its neighbors computes
\begin{eqnarray*}
\eta_v = \pi_{\Lambda(v)}(p) = q_v \cdot \prod_{w \in ne(v)} \mu_{w \rightarrow v}.
\end{eqnarray*}
\end{enumerate}
Note that in this procedure, we have $\eta_w = \pi_{\Lambda_w}(p_{v,w})$, when $w$ sends a message to node $v$. This procedure is called the \textit{collect algorithm}.

In the Shenoy-Shafer architecture it is proposed to store a message $\mu_{w \rightarrow v}$ on the edge $\{v,w\}$ so that it can be reused for a second phase after running the collect algorithm. In fact the root node $v$ can send messages
\begin{eqnarray} \label{eq:distmess}
\mu_{v \rightarrow w} = \pi_{\Lambda(w)}(q_v \cdot \prod_{u \in ne(v),u \not= w} \mu_{u \rightarrow v})
\end{eqnarray}
to all its neighbors $w$ (note that the messages $\mu_{u \rightarrow v}$ are stored on the edges $\{u,w\}$ in the collect phase). Then all these neighbors can compute
\begin{eqnarray*}
\eta_w = \pi_{\Lambda(w)}(p) = q_w \cdot \prod_{u \in ne(w)} \mu_{u \rightarrow w},
\end{eqnarray*}
and send further messages to their other neighbours, different from $v$, etc. until all nodes have computed $\pi_{\Lambda(w)}(p)$. This is called the \textit{distribute algorithm}. Collect and distribute algorithm constitute what is called the Shenoy-Shafer architecture to compute \textit{all} projections $\pi_{\Lambda(w)(p)}$ for a join tree factorization. 

This is called a local computation procedure, because the essential operations of combination of probability potentials are always carried out on a local domain $\Lambda(w)$. This is much more efficient than the naive approach in which first the combination $q_v \cdot q_w \cdot \ldots$ is computed on ever growing domains $\Lambda(v) \vee \Lambda(w) \vee \ldots$. Still there are some inefficiencies in this procedure, since if a node has more than three edges incident some sub-combinations of messages have to be computed several times. Therefore in \cite{shenoy97} more special, binary Markov trees in the multivariate setting have been proposed, which avoid this redundant combinations. It seems possible to extend this approach to the present more general Markov trees. Another method to avoid the redundant combinations is proposed in the next section. Computation with the Shenoy-Shafer architecture can also be applied to compute projections of combinations of set potentials \cite{kohlas03}.

The procedures described apply only to factorizations of potentials over a Markov tree. There remain a number of practical questions: If a combination of potentials is given, how can we decide whether it is a factorization over a Markov tree. If not, what then? In the multivariate case, the concept of covering trees is used. This applies also in our more general case. However, how do we find such a Markov tree? In the multivariate case this is done by successive variable elimination (see for instance \cite{kohlas03}, where further references are given). This is not possible for f.c.f in general. These, and other questions remain open so far.


\subsection{Local Computation in Commutative Frames} \label{subsec:Division}

According to Section \ref{subsec:CommFcF} the probability potentials on a commutative f.c.f form a valuation algebra satisfying axioms B0 to B5. But they are still embedded in the algebra of bpa. Further in a commutative f.c.f $(\mathcal{F},\leq)$ forms a lattice and for all frames $\Theta$ and $\Lambda$  we have $\Theta \bot \Lambda \vert \Theta \wedge \Lambda$. This permits to compute the transport operation $t_\Lambda$ for a potential with domain $\Theta$ within the algebra of set potentials as
\begin{eqnarray} \label{eq:TranspComFcF}
\pi_\Lambda(p) = \pi_{\Theta \wedge \Lambda}(p) \cdot \mathbf{1}_\Lambda.
\end{eqnarray}
We refer to \cite{kohlas03} for a derivation of this result. This in turn allows to write (\ref{eq:OrigMess}) as follows
\begin{eqnarray*}
\mu_{w \rightarrow v} =\pi_{\Lambda(w) \wedge \Lambda(v)}(\pi_{\Lambda(w)}(p_{v,w})) \cdot \mathbf{1}_{\Lambda(v)},
\end{eqnarray*} 
and in fact, the factor $\mathbf{1}_{\Lambda(v)}$ may be dropped in the Shenoy-Shafer architecture, since this message is alway combined wth $q_v$ and is thus absorbed by this term. So, in the case of commutative f.c.f let's define the messages as 
\begin{eqnarray} \label{eq:CommFcFmess}
\mu_{w \rightarrow v} =\pi_{\Lambda(w) \wedge \Lambda(v)}(\pi_{\Lambda(w)}(p_{v,w})),
\end{eqnarray} 
Then all the rest of collect and distribute algorithm of the Shenoy-Shafer architecture remains as in the general case. Note that as before all the messages remain probability potentials, that is the whole computational scheme runs within the valuation algebra of probability potentials on commutative f.c.f. Therefore, in this case computations may be somewhat simplified using division in the framework of the valuation algebra of probability potentials.

This is based on the fact that probability potentials have inverses as discussed in Section \ref{subsec:CondAndCont}, so that potentials like messages $\mu_{u \rightarrow w}$ can be divided out and this can be used to avoid redundant combinations.  Equipped with this operation of division, we may construct variants of both Lauritzen-Spiegelhalter and HUGIN architectures for potentials on a commutative f.c.f, see also \cite{kohlas03} for the multivariate case. In the first case, during the collect phase, instead of storing the message $\mu_{w \rightarrow v}$ on the edges $\{w,v\}$, rather its inverse $\mu_{w \rightarrow v}^{-1}$ is stored there. In both architectures the collect algorithm is essentially as in the Shenoy-Shafer architecture. That is, a node $u$ has associated the valuation
\begin{eqnarray} \label{eq:HUGINStore}
\eta_u = q_u \cdot \prod_{v \in ne(u),v \not= w} \mu_{v \rightarrow u}
\end{eqnarray}
just before it sends the message $\mu_{u \rightarrow w}$ according to (\ref{eq:CommFcFmess}) to node $w$. However, in the Lauritzen-Spiegelhalter architecture in the sending node $u$ this message is divided out, so that now the valuation
\begin{eqnarray} \label{eq:LSStore}
\eta_u := \eta_u \cdot \mu_{u \rightarrow w}^{-1} = q_u \cdot \prod_{v \in ne(u),v \not= w} \mu_{v \rightarrow u} \cdot \mu_{u \rightarrow w}^{-1}
\end{eqnarray}
is stored in this node. In contrast, in the HUGIN architecture during collect, the inverse of the message $\mu_{u \rightarrow w}$ is stored on the edge $\{u,w\}$ linking nodes $u$ and $w$ rather than divided out in node $u$. 

In the distribute phase, starting with the root node $v$, any node $w$ contains
\begin{eqnarray*}
\eta_w = q_w \cdot \prod_{n \in ne(w)} \mu_{n \rightarrow w}
\end{eqnarray*}
just before it sends the distribute message $\mu_{w \rightarrow u} = \pi_{\Lambda(w) \wedge \Lambda(u)}(\eta_w)$ to its neighbour $u$. We show by induction that this gives $\eta_w = \pi_{\Lambda(w)}(p)$. This holds for the root according to the discussion of the collect algorithm in the previous section. Assume it holds for $w$. Then in the Lauritzen-Spiegelhalter architecture, by the assumption of induction, the message sent to node $u$ is 
\begin{eqnarray*}
\lefteqn{\pi_{\Lambda(w) \wedge \Lambda(u)}(p) = \pi_{\Lambda(w) \wedge \Lambda(u)}(q_w \cdot \prod_{n \in ne(w)} \mu_{n \rightarrow w}) }\\
&&= \pi_{\Lambda(w) \wedge \Lambda(u)}(q_w \cdot \prod_{n \in ne(w),n \not= u} \mu_{n \rightarrow w}) \cdot \mu_{u \rightarrow w} \\
&&\mu_{w \rightarrow u} \cdot \mu_{u \rightarrow w}
\end{eqnarray*}
by the Combination Axiom B5, since we have $d(\mu_{u \rightarrow w}) = \Lambda(u) \wedge \Lambda(w)$. So, we have in node $u$ after combining the incoming message with the node store $\eta_u$ (\ref{eq:LSStore}) 
\begin{eqnarray*}
\lefteqn{ q_u \cdot \prod_{n \in ne(u),n \not= w} \mu_{v \rightarrow u} \cdot \mu_{u \rightarrow w}^{-1} \cdot \mu_{w \rightarrow u} \cdot \mu_{u \rightarrow w} }\\
&&=\pi_{\Lambda(u)}(p) \cdot f_{\mu_{u \rightarrow w}} = \pi_{\Lambda(u)}(p).
\end{eqnarray*}
The last equality holds because the support of $\mu_{u \rightarrow w}$ is larger than the one of $\pi_{\Lambda(u)}(p)$, see Lemma \ref{le:ProjOrder},
\begin{eqnarray*}
supp(\prod_{n \in ne(u)} \mu_{v \rightarrow u}) \subseteq supp(\mu_{u \rightarrow w}).
\end{eqnarray*}
This confirms the claim in the case of the Lauritzen-Spiegelhalter architecture. In the case of the HUGIN architecture, the situation is similar. The message $\mu_{w \rightarrow u}$ passes through the edge $\{w,u\}$, where it is combined with the valuation $\mu_{u \rightarrow w}^{-1}$ stored there, before the combination $\mu_{w \rightarrow u} \cdot \mu_{u \rightarrow w}^{-1}$ is combined with the valuation $\eta_u$ (\ref{eq:HUGINStore}) stored in node $u$. Then by the same argument we see that this gives again the valuation $\pi_{\Lambda(u)}(p)$. Therefore, both architecture with division give the same correct results. Both architectures apply also for computations in the valuation algebra of densities, see \cite{kohlas03}. These computational schemes can however not be applied in the case of a general f.c.f since no meets between domains need to exist.


\section{Most Probable Configuration} \label{sec:MostProb}

\subsection{Max/Product-Algebra of Potentials}

In this section we address the problem of finding the most probable configuration of a potential $p$ on a domain $\Theta$, that is, to determine an element or the elements of $\Theta$. which maximize $p(\theta)$. If $p$ is given explicitly, say in for a list of pairs $(\theta,p(\theta))$ it is no big problem, even if $\Theta$ has a big cardinality. The problem changes and becomes more important, when $p$ is given implicitly as a combination
\begin{eqnarray}
p = p_1 \cdot \ldots \cdot p_n
\end{eqnarray}
of a large number of potentials each with a relatively small domain. Then the case is complicated because $p$ is not given explicitly, but has to be computed. This is a realistic scenario and the solution of the maximization problem in this framework will be discussed in this section. In the case of a multivariate model this problem has been solved by local computation, corresponding to dynamic programming, see for instance \cite{shenoy91b,shenoy96}. Here we show that their approach extends to our much more general case of potentials on a family of compatible frames.

The starting point is the observation that there is an information algebra of probability potentials on a f.c.f $(\mathcal{F},\mathcal{R})$ associated with this maximization problem. Labeling and Combination are defined as before, transport however is now maximization. More precisely: Let $\Phi_\Theta$ denote the set of all potentials on frame $\Theta \in \mathcal{F}$, and 
\begin{eqnarray}
\Phi = \bigcup_{\Theta \in \mathcal{F}} \Phi_\Theta.
\end{eqnarray}
Then we have the following operations.

\begin{enumerate}
\item \textit{Labeling:} $d : \Phi \rightarrow \mathcal{F}$, defined by $p \mapsto d(p) = \Theta$ if $p \in \Phi_\Theta$.
\item \textit{Combination:} $\cdot : \Phi \times \Phi \rightarrow \Phi$, defined by $(p_1,p_2) \mapsto p_1 \cdot p_2$, where for $\theta \in d(p_1) \vee d(p_2)$,
\begin{eqnarray} \label{eq:CombOfPot}
 p_1 \cdot p_2(\theta) = p_1(t_{\Theta_1}(\theta))p_2(t_{\Theta_2}(\theta)).
 \end{eqnarray} \label{eq:TranspOfSePot}
\item \textit{Transport:} $t : \Phi \times \mathcal{F} \rightarrow \Phi$, defined by $(p,\Lambda) \mapsto t_\Lambda(p)$, where for $\lambda \in \Lambda$, $d(p) = \Theta$,
\begin{eqnarray}
t_\Lambda(p)(\lambda) = \max_{\theta \in \Theta:\theta \sim \lambda} p(\theta),
\end{eqnarray}
\end{enumerate}

So, here transport means to maximize $p(\theta)$ over all elements in $\Theta$, which are compatible with the element $\lambda \in \Lambda$. We have to verify that this algebraic structure indeed satisfies all of the axioms A0 to A6 of an information algebra. Axioms A0 to A3 and A6 are obvious. Here are the two main results which show that axioms A4 and A5 holds too..

\begin{theorem} \label{th:MaxTrAxiom}
Assume $\Theta \bot \Lambda \vert \Lambda_1$ and $d(p) = \Theta$. then we have
\begin{eqnarray} \label{eq:MaxTrAxiom}
\max_{\theta \in \Theta:\theta \sim \lambda} p(\theta)
= \max_{\lambda_1 \in \Lambda_1: \lambda_1 \sim \lambda}(\max_{\theta \in \Theta:\theta \sim \lambda_1} p(\theta))
\end{eqnarray}
\end{theorem}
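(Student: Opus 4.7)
The plan is to establish the equality by proving both inequalities separately, and the key tools will be Lemmas \ref{CondIndepRes1} and \ref{CondIndepRes2} from Section \ref{subsec:CondInd}, which together characterize how compatibility relates to the intermediate frame $\Lambda_1$ under the conditional independence assumption $\Theta \bot \Lambda \vert \Lambda_1$.

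For the direction $\text{LHS} \leq \text{RHS}$, I would pick $\theta^{*} \in \Theta$ with $\theta^{*} \sim \lambda$ that attains the maximum on the left. Lemma \ref{CondIndepRes2} then provides an element $\lambda_1^{*} \in \Lambda_1$ such that both $\theta^{*} \sim \lambda_1^{*}$ and $\lambda \sim \lambda_1^{*}$. Hence $p(\theta^{*})$ appears as one of the terms in the inner maximum indexed by $\lambda_1^{*}$, so
\begin{eqnarray*}
p(\theta^{*}) \leq \max_{\theta \in \Theta: \theta \sim \lambda_1^{*}} p(\theta) \leq \max_{\lambda_1 \in \Lambda_1: \lambda_1 \sim \lambda}\Bigl( \max_{\theta \in \Theta: \theta \sim \lambda_1} p(\theta) \Bigr).
\end{eqnarray*}

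For the direction $\text{RHS} \leq \text{LHS}$, I would pick $\lambda_1^{*} \in \Lambda_1$ with $\lambda_1^{*} \sim \lambda$ and $\theta^{*} \in \Theta$ with $\theta^{*} \sim \lambda_1^{*}$ that together realise the nested maximum. Here I invoke Lemma \ref{CondIndepRes1}: under $\Theta \bot \Lambda \vert \Lambda_1$, $\lambda \sim \lambda_1^{*}$ implies $R_{\lambda_1^{*}}(\Theta) \subseteq R_\lambda(\Theta)$, so from $\theta^{*} \sim \lambda_1^{*}$ we deduce $\theta^{*} \sim \lambda$. Consequently $p(\theta^{*})$ is one of the terms on the left, giving the reverse inequality.

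Combining the two inequalities yields the claimed identity. The only subtle step is the second direction, where conditional independence is actually used (the first direction would hold in any f.c.f), and this is exactly what Lemma \ref{CondIndepRes1} was designed to handle; without $\Theta \bot \Lambda \vert \Lambda_1$ it is perfectly possible to have $\theta \sim \lambda_1$ and $\lambda \sim \lambda_1$ but $\theta \not\sim \lambda$, which would break the inequality. Everything else is bookkeeping over finite sets, so no analytic subtleties arise.
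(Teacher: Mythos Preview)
Your proof is correct and follows essentially the same approach as the paper: both directions are established using Lemmas \ref{CondIndepRes1} and \ref{CondIndepRes2} exactly as you do, only with the order of the two inequalities swapped. Your closing remark that only the $\mathrm{RHS}\le\mathrm{LHS}$ direction genuinely relies on $\Theta\bot\Lambda\mid\Lambda_1$ is a correct observation not made explicit in the paper.
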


\begin{proof}
By Lemma \ref{CondIndepRes1} we have under the conditions of the theorem that $\theta \sim \lambda_1$ implies $\theta \sim \lambda$. This implies that
\begin{eqnarray*}
\max_{\theta \sim \lambda} p(\theta) \geq \max_{\theta \sim \lambda_1} p(\theta)),
\end{eqnarray*}
for all $\lambda_1 \in R_\lambda(\Lambda_1)$, hence 
\begin{eqnarray*}
\max_{\theta \sim \lambda} p(\theta)
\geq \max_{\lambda_1 \sim \lambda}(\max_{\theta \sim \lambda_1} p(\theta)).
\end{eqnarray*}

For any $\lambda \in \Lambda$, let $\hat{\theta}(\lambda)$ be a maximizing value of the left hand side of (\ref{eq:MaxTrAxiom}), that is 
\begin{eqnarray*}
p(\hat{\theta}(\lambda)) = \max_{\theta \sim \lambda} p(\theta).
\end{eqnarray*}
Then $\lambda$ and $\hat{\theta}(\lambda)$ are compatible. By Lemma \ref{CondIndepRes2} there is then an element in $\Lambda_1$, say $\lambda_1(\lambda)$ so that $\lambda_1(\lambda) \sim \lambda$ and $\hat{\theta}(\lambda) \sim \lambda_1(\lambda)$. Then we have 
\begin{eqnarray*}
\max_{\theta \sim \lambda_1(\lambda)} p(\theta) \geq p(\hat{\theta}(\lambda)),
\end{eqnarray*}
and therefore also
\begin{eqnarray*}
\max_{\lambda_1 \sim \lambda}(\max_{\theta \sim \lambda_1} p(\theta)) \geq
p(\hat{\theta}).
\end{eqnarray*}
Since the inverse equality has been shown above, this proves (\ref{eq:MaxTrAxiom}).
\end{proof}

This theorem shows that the Transport Axiom A4 holds. The next theorem shows that the Combination Axiom A5 holds too:

\begin{theorem} \label{th:MaxComAxiom}
Assume $\Theta_1 \bot \Theta_2 \vert \Lambda$ and $d(p_1) = \Theta_1$. $d(p_2) = \Theta_2$, then we have
\begin{eqnarray} \label{eq:MaxComrAxiom}
\max_{\theta \in \Theta_1 \vee \Theta_2: \theta \sim \lambda} p_1 \cdot p_2(\theta)
= \max_{\theta_1 \in \Theta_1:\theta_1 \sim \lambda} p_1(\theta_1) \cdot \max_{\theta_2 \in \Theta_2:\theta_2 \sim \lambda} p_2(\theta_2).
\end{eqnarray}
\end{theorem}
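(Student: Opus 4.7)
The plan is to unpack the combination according to its definition and then use the bijection established in Lemma \ref{CondIndepRes3} item 3 to re-index the maximization on the left-hand side as a maximization over pairs.

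First, I would rewrite the left-hand side using the definition of combination (\ref{eq:CombOfPot}):
\begin{eqnarray*}
\max_{\theta \in \Theta_1 \vee \Theta_2: \theta \sim \lambda} p_1 \cdot p_2(\theta)
= \max_{\theta \in R_\lambda(\Theta_1 \vee \Theta_2)} p_1(t_{\Theta_1}(\theta)) \cdot p_2(t_{\Theta_2}(\theta)).
\end{eqnarray*}
Here the index set has been rewritten as $R_\lambda(\Theta_1 \vee \Theta_2)$ since $\theta \sim \lambda$ means exactly that $\theta \in R_\lambda(\Theta_1 \vee \Theta_2)$.

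Next, I invoke Lemma \ref{CondIndepRes3}, item 3, which uses the assumption $\Theta_1 \bot \Theta_2 \vert \Lambda$ to guarantee that the map $\theta \mapsto (t_{\Theta_1}(\theta), t_{\Theta_2}(\theta))$ is a bijection from $R_\lambda(\Theta_1 \vee \Theta_2)$ onto $R_\lambda(\Theta_1, \Theta_2) = R_\lambda(\Theta_1) \times R_\lambda(\Theta_2)$. Under this substitution of summation/maximization index, the above maximum becomes
\begin{eqnarray*}
\max_{(\theta_1,\theta_2) \in R_\lambda(\Theta_1) \times R_\lambda(\Theta_2)} p_1(\theta_1) \cdot p_2(\theta_2).
\end{eqnarray*}

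Finally, since $p_1$ and $p_2$ are non-negative and the range of maximization is a Cartesian product, the maximum of the product splits into the product of the maxima:
\begin{eqnarray*}
\max_{\theta_1 \in R_\lambda(\Theta_1)} p_1(\theta_1) \cdot \max_{\theta_2 \in R_\lambda(\Theta_2)} p_2(\theta_2)
= \max_{\theta_1 \sim \lambda} p_1(\theta_1) \cdot \max_{\theta_2 \sim \lambda} p_2(\theta_2),
\end{eqnarray*}
which is exactly the right-hand side of (\ref{eq:MaxComrAxiom}).

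The main obstacle is already dispatched by Lemma \ref{CondIndepRes3}: without the conditional independence assumption, the bijection between $R_\lambda(\Theta_1 \vee \Theta_2)$ and the Cartesian product $R_\lambda(\Theta_1) \times R_\lambda(\Theta_2)$ fails (one only has an injection into pairs that are jointly compatible with $\lambda$ via their common refinement), and the product would in general be replaced by a maximum over a strict subset of pairs, breaking the factorization. Thus the content of the proof is really the reduction to that lemma, together with the elementary (non-negativity-based) observation that $\max_{(a,b) \in A \times B} f(a)g(b) = (\max_{a \in A} f(a))(\max_{b \in B} g(b))$.
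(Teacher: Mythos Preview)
Your proof is correct and rests on the same key ingredient as the paper, namely Lemma \ref{CondIndepRes3}. The only difference is presentational: you invoke the bijection of item 3 directly to re-index and then factor the maximum over the Cartesian product, whereas the paper takes a maximizer $\hat{\theta}$ on the left and argues by contradiction (using items 1 and 2 of the lemma) that its projections $t_{\Theta_1}(\hat{\theta})$ and $t_{\Theta_2}(\hat{\theta})$ must separately maximize $p_1$ and $p_2$ over $R_\lambda(\Theta_1)$ and $R_\lambda(\Theta_2)$.
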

\begin{proof}
Note that 
\begin{eqnarray}
\max_{\theta \in \Theta_1 \vee \Theta_2: \theta \sim \lambda} p_1 \cdot p_2(\theta) = \max_{\theta \in \Theta_1 \vee \Theta_2):\theta \sim \lambda} p_1(t_{\Theta_1}(\theta))p_2(t_{\Theta_2}(\theta)).
\end{eqnarray}
Assume this maximum is attained by an element $\hat{\theta} \in R_\lambda(\Theta_1 \vee \Theta_2)$, so that
\begin{eqnarray}
\max_{\theta \in \Theta_1 \vee \Theta_2: \theta \sim \lambda} p_1 \cdot p_2(\theta) = p_1(t_{\Theta_1}(\hat{\theta}))p_2(t_{\Theta_2}(\hat{\theta})).
\end{eqnarray}
Now, by Lemma \ref{CondIndepRes3}, we have $t_{\Theta_1}(\hat{\theta}) \sim \lambda$. Suppose there is a $\theta_1 \in R_\lambda(\Theta_1)$ such that $p_1(\theta_1) > p_1(\hat{\theta})$. Then, $(\theta_1,t_{\Theta_2}(\hat{\theta})) \in R_\lambda(\Theta_1) \times R_\lambda(\Theta_2) = R_\lambda(\Theta_1,\Theta_2)$ and, by the same Lemma, there is an element $\theta \in R_\lambda(\Theta_1 \vee \Theta_2)$ such that $t_{\Theta_1}(\theta) = \theta_1$ and $t_{\Theta_2}(\theta) = t_{\Theta_2}(\hat{\theta})$. But then
\begin{eqnarray*}
p_1(\theta_1)p_2(t_{\Theta_2}(\hat{\theta})) = p_1(t_{\Theta_1}(\theta))p_2(t_{\Theta_2}(\theta)) > p_1(t_{\Theta_1}(\hat{\theta}))p_2(t_{\Theta_2}(\hat{\theta})).
\end{eqnarray*}
This is a contradiction. Therefore we conclude that
\begin{eqnarray*}
p_1(t_{\Theta_1}(\theta)) = \max_{\theta_1 \in \Theta_1:\theta_1 \sim \lambda} p_1(\theta_1)
\end{eqnarray*}
and in the same way we see that
\begin{eqnarray*}
p_2(t_{\Theta_2}(\theta)) = \max_{\theta_2 \in \Theta_2:\theta_2 \sim \lambda} p_2(\theta_2).
\end{eqnarray*}
This concludes the proof
\end{proof}

If the f.c.f $(\mathcal{F},\mathcal{R})$ is \textit{commutative}, the max/product algebra of potentials becomes as the original algebra of potentials a \textit{valuation algebra}, see Section \ref{subsec:CommFcF}. This is in particular the case for the mutivariate model. As mentioned above, this is the case of dynamic programming treated in the paper \cite{shenoy96}. We now have seen that this approach generalizes to to the much more general case of f.c.f. In the next section, this general case of dynamic programming will be further developed and completed. In particular it will be embedded into the local computation approach on Markov trees.

It should also be remarked that the same approach serves to find \textit{most plausible} solutions with regard to likelihood functions associated with the algebra of bpa. This too generalizes from the case of multivariate models \cite{shenoy96}.

\subsection{Solution Construction}

Let $p$ be a potential on some frame $\Theta \in \mathcal{F}$ and $\Lambda$ any other frame of the f.c.f. Consider then the family of maximization problems
\begin{eqnarray*}
\max_{\theta \in \Theta:\theta \sim \lambda} p(\theta) \textrm{ for}\ \lambda \in \Lambda.
\end{eqnarray*}
For any $\lambda$ of the frame $\Lambda$ there are one or several elements $\theta$ in $\Theta$ for which the maximum is attained. Let $s_p^\Lambda(\lambda)$ denote the set of maximizing elements for $\max_{\theta \in \Theta:\theta \sim \lambda} p(\theta)$, that is,
\begin{eqnarray*}
p(\hat{\theta}) = \max_{\theta \in \Theta:\theta \sim \lambda} p(\theta) \textrm{ for any}\ \hat{\theta} \in s_p^\Lambda(\lambda) \textrm{ and}\ \lambda \in \Lambda.
\end{eqnarray*}
We call the map $s_p^\Lambda : \Lambda \rightarrow 2^\Theta$ \textit{solution} for potential $p$ relative to $\Lambda$ and $s_p^\Lambda(\lambda)$ the corresponding \textit{solution sets}. This map can be extended in the usual way to a map $s_p^\Lambda : 2^\Lambda \rightarrow 2^\Theta$.

The Axioms of Transport and Combination for the max/product algebra of potentials imply the following theorem.

\begin{theorem} \label{th:SolProp}
\
\begin{enumerate}
\item Assume $\Theta \bot \Lambda \vert \Lambda_1$ and $d(p) = \Theta$. Then for all $\lambda \in \Lambda$,
\begin{eqnarray} \label{eq:CompOfSol}
s_p^\Lambda(\lambda) = s_p^{\Lambda_1}(s_{t_{\Lambda_1}(p)}^\Lambda(\lambda)).
\end{eqnarray}
\item Assume $\Theta_1 \bot \Theta_2 \vert \Lambda$ and $d(p_1) = \Theta_1$, $d(p_2) = \Theta_2$. Then  for all $\lambda \in \Lambda$,
\begin{eqnarray} \label{eq:LocalityOfSol}
s_{p_1 \cdot p_2}^\Lambda(\lambda) = \tau_1(s_{p_1}^\Lambda(\lambda)) \cap \tau_2(s_{p_2}^\Lambda(\lambda)),
\end{eqnarray}
where $\tau_1$ and $\tau_2$ are the refinings of $\Theta_1$ and $\Theta_2$ to $\Theta_1 \vee \Theta_2$.
\end{enumerate}
\end{theorem}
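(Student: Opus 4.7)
The plan is to prove both items by establishing mutual set inclusion, relying on the two algebraic axioms (Transport Axiom \ref{th:MaxTrAxiom} and Combination Axiom \ref{th:MaxComAxiom}) together with the compatibility Lemmas \ref{CondIndepRes1}, \ref{CondIndepRes2} and \ref{CondIndepRes3}. Throughout, let $M_\lambda := \max_{\theta \in \Theta: \theta \sim \lambda} p(\theta)$ and analogous quantities for the other frames.

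For item 1, I would first rewrite the overall maximum $M_\lambda$ in two stages using Theorem \ref{th:MaxTrAxiom} as the two-level iterated max over $\lambda_1 \sim \lambda$ of $t_{\Lambda_1}(p)(\lambda_1)$. For the inclusion $s_p^\Lambda(\lambda) \subseteq s_p^{\Lambda_1}(s_{t_{\Lambda_1}(p)}^\Lambda(\lambda))$, take $\hat{\theta} \in s_p^\Lambda(\lambda)$. Since $\hat{\theta} \sim \lambda$, Lemma \ref{CondIndepRes2} supplies an element $\hat{\lambda}_1 \in \Lambda_1$ with $\hat{\lambda}_1 \sim \lambda$ and $\hat{\theta} \sim \hat{\lambda}_1$. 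From $p(\hat{\theta}) \leq t_{\Lambda_1}(p)(\hat{\lambda}_1) \leq M_\lambda = p(\hat{\theta})$ one concludes both that $\hat{\lambda}_1 \in s_{t_{\Lambda_1}(p)}^\Lambda(\lambda)$ and that $\hat{\theta} \in s_p^{\Lambda_1}(\hat{\lambda}_1)$. For the reverse inclusion, take $\hat{\lambda}_1 \in s_{t_{\Lambda_1}(p)}^\Lambda(\lambda)$ and $\hat{\theta} \in s_p^{\Lambda_1}(\hat{\lambda}_1)$. The compatibility $\hat{\theta} \sim \lambda$ is exactly the conclusion of Lemma \ref{CondIndepRes1} (from $\hat{\theta} \sim \hat{\lambda}_1 \sim \lambda$ under $\Theta \bot \Lambda \vert \Lambda_1$), and then $p(\hat{\theta}) = t_{\Lambda_1}(p)(\hat{\lambda}_1) = \max_{\lambda_1 \sim \lambda} t_{\Lambda_1}(p)(\lambda_1) = M_\lambda$ by Theorem \ref{th:MaxTrAxiom}.

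For item 2, I would expand $(p_1 \cdot p_2)(\theta) = p_1(t_{\Theta_1}(\theta)) p_2(t_{\Theta_2}(\theta))$ and apply Theorem \ref{th:MaxComAxiom} to write the maximum on the left as the product of the two separate maxima $M_\lambda^{(1)} \cdot M_\lambda^{(2)}$. For the inclusion $s_{p_1 \cdot p_2}^\Lambda(\lambda) \subseteq \tau_1(s_{p_1}^\Lambda(\lambda)) \cap \tau_2(s_{p_2}^\Lambda(\lambda))$, take $\hat{\theta} \in s_{p_1 \cdot p_2}^\Lambda(\lambda)$; item 2 of Lemma \ref{CondIndepRes3} yields $t_{\Theta_i}(\hat{\theta}) \in R_\lambda(\Theta_i)$ and $\{\hat{\theta}\} = \tau_1(t_{\Theta_1}(\hat{\theta})) \cap \tau_2(t_{\Theta_2}(\hat{\theta}))$. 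Since $p_1(t_{\Theta_1}(\hat{\theta})) \leq M_\lambda^{(1)}$ and $p_2(t_{\Theta_2}(\hat{\theta})) \leq M_\lambda^{(2)}$ with equality in the product, each factor must individually attain its maximum, so $t_{\Theta_i}(\hat{\theta}) \in s_{p_i}^\Lambda(\lambda)$ and hence $\hat{\theta} \in \tau_1(s_{p_1}^\Lambda(\lambda)) \cap \tau_2(s_{p_2}^\Lambda(\lambda))$. Conversely, if $\theta$ lies in this intersection, pick $\theta_i \in s_{p_i}^\Lambda(\lambda)$ with $\theta \in \tau_1(\theta_1) \cap \tau_2(\theta_2)$; by the uniqueness in the minimal common refinement axiom, $\{\theta\} = \tau_1(\theta_1) \cap \tau_2(\theta_2)$, and item 3 (the bijection) of Lemma \ref{CondIndepRes3} gives $\theta \in R_\lambda(\Theta_1 \vee \Theta_2)$ with $t_{\Theta_i}(\theta) = \theta_i$. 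Then $(p_1 \cdot p_2)(\theta) = p_1(\theta_1) p_2(\theta_2) = M_\lambda^{(1)} \cdot M_\lambda^{(2)}$, proving $\theta \in s_{p_1 \cdot p_2}^\Lambda(\lambda)$.

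The main subtlety in both items is not the optimization but the bookkeeping of compatibility relations: we need that intermediate witnesses $\hat{\lambda}_1$ (in item 1) or tuples $(\theta_1,\theta_2)$ (in item 2) are genuinely compatible with $\lambda$, rather than merely producing the right numerical value. This is precisely what the three compatibility lemmas guarantee under the conditional independence hypothesis, and invoking them in the right places is the only nontrivial content of the proof; once those are in hand, the two algebraic axioms handle the rest mechanically.
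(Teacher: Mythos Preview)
Your proof is correct and follows essentially the same route as the paper's: both items are proved by double inclusion, using Theorem \ref{th:MaxTrAxiom} and Theorem \ref{th:MaxComAxiom} for the numerical equalities and Lemmas \ref{CondIndepRes1}--\ref{CondIndepRes3} for the compatibility bookkeeping. If anything, you are more explicit than the paper about why the witnesses $\hat{\lambda}_1$ and $(\theta_1,\theta_2)$ are genuinely compatible with $\lambda$ (the paper largely defers this to ``by the proof of Theorem \ref{th:MaxComAxiom}''), so your version is slightly more careful in exactly the place you identify as the main subtlety.
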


\begin{proof}
1.) Consider $\hat{\theta} \in s_p^\Lambda(\lambda)$. Then, by (\ref{eq:MaxTrAxiom}) we have 
\begin{eqnarray*}
p(\hat{\theta}) = \max_{\theta \sim \lambda} p(\theta) = \max_{\lambda_1 \sim \lambda}(\max_{\theta \sim \lambda_1} p(\theta)).
\end{eqnarray*}
This shows that, if $\hat{\lambda}_1 \in s_{t_{\Lambda_1}(p)}^\Lambda(\lambda)$, then $\hat{\theta} \in s_p^{\Lambda_1}(\hat{\lambda})$. From this it follows that $\hat{\theta} \in s_p^{\Lambda_1}(s_{t_{\Lambda_1}(p)}^\Lambda(\lambda))$.

On the other hand, assume $\hat{\theta} \in s_p^{\Lambda_1}(s_{t_{\Lambda_1}(p)}^\Lambda(\lambda))$. Then by (\ref{eq:MaxTrAxiom}) we have
\begin{eqnarray*}
p(\hat{\theta}) = \max_{\theta \sim \lambda} p(\theta)
\end{eqnarray*}
and therefore $\hat{\theta}                                                                                 \in s_p^\Lambda(\lambda)$. This proves (\ref{eq:CompOfSol}).

2.) Next, let $\theta \in s_{p_1 \cdot p_2}^\Lambda(\lambda)$. Then, by the proof of Theorem \ref{th:MaxComAxiom} we have $t_{\Theta_1}(\theta) \in s_{p_1}^\Lambda(\lambda)$ and $t_{\Theta_2}(\theta) \in s_{p_2}^\Lambda(\lambda)$. Further $\theta \in \tau_1(t_{\Theta_1}(\theta)) \cap  \tau_2(t_{\Theta_2}(\theta))$, hence $\theta \in \tau_1(s_{p_1}^\Lambda(\lambda) \cap \tau_2(s_{p_2}^\Lambda(\lambda)$. If, on the other hand, $\theta \in \tau_1(s_{p_1}^\Lambda(\lambda)) \cap \tau_2(s_{p_2}^\Lambda(\lambda))$, then there are elements $\theta_1 \in s_{p_1}^\Lambda(\lambda)$ and $\theta_2 \in s_{p_2}^\Lambda(\lambda)$ such that $\{\theta\} = \tau_1(\theta_1) \cap  \tau_2(\theta_2)$ and then by (\ref{eq:MaxComrAxiom}) we have $\theta \in s_{p_1 \cdot p_2}^\Lambda(\lambda)$. This proves (\ref{eq:LocalityOfSol}).
\end{proof}

This theorem can be applied to compute solutions of the most probable configuration problem for a combination of potentials,
\begin{eqnarray*}
\max_{\theta \in \Theta} \ p_1 \cdot \ldots \cdot p_n(\theta), \textrm{ where}\ \theta \in d(p_1) \vee \ldots \vee d(p_n).
\end{eqnarray*}
This is in particular the case if $p = p_1 \cdot \ldots \cdot p_n$ is a Markov tree factorization (see Section \ref{subsec:MTrees}). So, let $(T,\Lambda)$ with $T = (V,E)$ be a Markov tree and 
\begin{eqnarray*}
p = \prod_{v \in V} p_v, \textrm{ with}\ d(p_v) = \Lambda(v).
\end{eqnarray*}
Let us denote the generic elements of $\Lambda_v$ by $\lambda_v$. Since the max/product algebra is an information algebra, formula (\ref{eq:RecLocComp}) holds in this algebra too. It reads in this case, if $\theta \in \vee_{v \in V} d(p_v) = \Theta$,
\begin{eqnarray*}
\max_{\theta \sim \lambda_v} p(\theta) = p_v \cdot \prod_{w \in ne(v)} \max_{\lambda_w \sim \lambda_v} (\max_{\theta_{v,w} \sim \lambda_w} p_{v,w}(\theta_{v,w})).
\end{eqnarray*}
where (see (\ref{eq:SubTreefact}))
\begin{eqnarray*}
p_{v,w} = \prod_{v \in V_{v,w}} p_v
\end{eqnarray*}
and where $\Theta_{v,w}$ is the the domain of $p_{v,w}$. Let
\begin{eqnarray*}
q_{v,w}(\lambda_w) = \max_{\theta_{v,w} \sim \lambda_w} p_{v,w}(\theta_{v,w})
\end{eqnarray*}
and
\begin{eqnarray*}
q_v(\lambda_v) = p_v(\lambda_v) \cdot \prod_{w \in ne(v)} \max_{\lambda_w \sim \lambda_v} q_{v,w}(\lambda_w).
\end{eqnarray*}
Note that the potentials $q_{v,w}$ and $q_v$ can be computed by the collect algorithm in the max/product algebra (see Section \ref{subsec:LocComp}). At the end of the collect phase, we have
\begin{eqnarray*}
\max_{\theta \in \Theta} p(\theta) = \max_{\lambda_v \in \Lambda_v} q_v(\lambda_v).
\end{eqnarray*}
In this way the \textit{value} of the maximization problem is obtained by local computation. Note that this final maximization corresponds to projection to the bottom frame $\mathcal{E}$ of the f.c.f, that is $t_{\mathcal{E}}(p) = t_{\mathcal{E}}(q_v)$.

However, we want to compute not only the value, but \textit{solution configurations} of the maximization problem. So, we may determine in the last step the solutions $s_{q_v}^{\mathcal{E}}$, that is the solution set $s_{q_v}^{\mathcal{E}}(e)$ of $t_{\mathcal{E}}(q_v)$. Further, during collect, assume that we determine and store the solution $s_{q_{v,w}}^{\Lambda_v}$ of the transport $t_{\Lambda_v}(q_{v,w})$ when solving the maximization problems $\max_{\lambda_w \sim \lambda_v} q_{v,w}(\lambda_w)$. This solution is represented by solution sets $s_{q_{v,w}}^{\Lambda_v}(\lambda_v)$ for all $\lambda_v \in \Lambda_v$. If we store these solution sets during collect, then we may obtain solution sets recusrively
\begin{eqnarray*}
s_{t_{\mathcal{E}}(t_{\Lambda_v}(q_{v.w}))}^{\mathcal{E}}(e) = s_{q_{v,w}}^{\Lambda_v}(s_{q_v}^{\mathcal{E}}(e)))
\end{eqnarray*}
for $t_{\mathcal{E}}(t_{\Lambda_v}(q_{v.w}))$. 

In this way, we get first the solutions for the domain $\Lambda_v$, and then for all the neighbors $w \in ne(v)$. Since each of the substrees $T_{v,w}$ is still a Markov tree, the procedure can be repeated until the solutions on all domains $\Lambda_u$ for all nodes $u \in V$ are found. From these partial solutions, the overall solution in $\bigvee_{v \in V} \Lambda_v$ may be constructed using the appropriate refinings. This, of course, is only a sketch of a dynamic programming approach to compute the most probable configuration in a f.c.f Detaisl, leading to actual algorithms need to be worked out further.

\section{Conclusion}

Probabilistic Argumentation Systems are an alternative to the popular Bayesian or probability networks. They are more related to functional or logical modeling of uncertain situations than causal modeling, the domain of Bayesian networks. In contrast to probabilistic networks they are not restricted to multivariate models, but can be used with partition models or more generally families of compatible frames. From a computational point of view, inference with PAS is based on similar or identical algebraic structures as Bayesian networks, namely valuation and information algebras, allowing for local computation schemes. In fact, the algebraic structures are somewhat more natural and more easily interpreted for PAS than for probabilistic networks using conditional distributions.

\bibliography{tcslit}
\bibliographystyle{authordate3}


\end{document}